\documentclass[onecolumn,10pt]{article}
\usepackage[top=2.54cm, bottom=2.54cm, left=2.54cm, right=2.54cm,a4paper]{geometry}

\usepackage[bitstream-charter, greekuppercase=upright,cal=cmcal]{mathdesign}
\usepackage[T1]{fontenc}

\usepackage[utf8]{inputenc} 
\usepackage{amsmath,amsthm}

\usepackage{mathabx}
\usepackage{verbatim}

\usepackage{xcolor}
\usepackage{mdframed}
\usepackage{units}

\definecolor{mblue}{rgb}{0.368417, 0.506779, 0.709798}
\definecolor{morange}{rgb}{0.880722, 0.611041, 0.142051}
\definecolor{mgreen}{rgb}{0.560181, 0.691569, 0.194885}
\definecolor{mred}{rgb}{0.922526, 0.385626, 0.209179}
\definecolor{mpurple}{rgb}{0.528488, 0.470624, 0.701351}

\usepackage[backend=biber,sorting=none,style=phys,biblabel=brackets,backref=false,bibencoding=utf8,maxnames=20,eprint=true]{biblatex}
\makeatletter
\pretocmd{\blx@head@bibintoc}{\phantomsection}{}{\ddt}
\makeatother
\DefineBibliographyStrings{english}{%
  backrefpage = {page},
  backrefpages = {pages},
}
\addbibresource{DualChannel.bib}

\setlength{\biblabelsep}{3pt}
\setlength{\bibhang}{0pt}
\setlength{\bibitemsep}{2pt}

\usepackage{titlesec}

\titleformat*{\section}{\bfseries}
\titleformat*{\subsection}{\normalsize\bfseries}
\titleformat*{\subsubsection}{\bfseries}
\titleformat*{\paragraph}{\large\bfseries}
\titleformat*{\subparagraph}{\large\bfseries}

\titlespacing\section{0pt}{12pt plus 4pt minus 2pt}{2pt plus 2pt minus 2pt}

\usepackage[bookmarks,colorlinks,breaklinks]{hyperref}  
\definecolor{dullmagenta}{rgb}{0.4,0,0.4}   
\definecolor{darkblue}{rgb}{0,0,0.4}
\hypersetup{linkcolor=red,citecolor=blue,filecolor=dullmagenta,urlcolor=darkblue} 

\newcommand{\ket}[1]{|#1\rangle}
\newcommand{\cket}[1]{|\widetilde{#1}\rangle}
\newcommand{\bra}[1]{\langle #1|}
\newcommand{\cbra}[1]{\langle \widetilde{#1}|}
\newcommand{\ketbra}[1]{\ket{#1}\bra{#1}}

\newcommand{\bracket}[2]{\langle #1|#2\rangle}
\def\tr{{\rm Tr}}

\def\eps{\varepsilon}

\def\cB{\mathcal B}

\def\cD{\mathcal D}
\def\cE{\mathcal E}
\def\cF{\mathcal F}

\def\cH{\mathcal H}

\def\cN{\mathcal N}

\def\cX{\mathcal X}

\def\cZ{\mathcal Z}

\renewcommand{\rho}{\varrho}
\renewcommand{\phi}{\varphi}
\def\id{\mathbb I}

\newcommand{\pg}{P}

\newcommand{\mix}{\mu}
\def\Dk{\mathsf{D}}
\newcommand{\Hk}{\mathsf H}
\newcommand{\dualHk}{\mathsf H^\perp}

\def\Hmax{H_{\max}}
\def\Hmin{H_{\min}}
\def\ssim{{\scriptscriptstyle \sim}}

\newtheorem{theorem}{Theorem}
\newtheorem{corollary}{Corollary}
\newtheorem{prop}{Proposition}
\newtheorem{lemma}{Lemma}

 \def\topbotatom#1{\hbox{\hbox to 0pt{$#1\bot$\hss}$#1\top$}}
   \newcommand*{\topbot}{{\mathord{\mathchoice{\topbotatom\displaystyle}
                                    {\topbotatom\textstyle}
                                    {\topbotatom\scriptstyle}
                                    {\topbotatom\scriptscriptstyle}}}}

\usepackage{pgfplots}
 \pgfplotsset{compat=newest}
 \usetikzlibrary{intersections}
 \usetikzlibrary{external}
 \tikzexternalize

\usepackage{xfrac}

\usepackage{setspace}
\usepackage{titling}

\posttitle{\par\end{center}}
\predate{}
\postdate{}
\setlength{\droptitle}{-40pt}

\begin{document}

\title{\large {\bf Duality of channels and codes}}

\author{
{\normalsize 
Joseph M.\ Renes}\\
\emph{\normalsize 
Institute for Theoretical Physics, ETH Z\"urich, Switzerland}
}

\date{\vspace{-\baselineskip}}

\maketitle

\begin{abstract}
For any given channel $W$ with classical inputs and possibly quantum outputs, a dual classical-input channel $W^\perp$ can be defined by embedding the original into a channel $\cN$ with quantum inputs and outputs.  
Here we give new uncertainty relations for a general class of entropies that lead to very close relationships between the original channel and its dual. 
Moreover, we show that channel duality can be combined with duality of linear codes, whereupon the uncertainty relations imply that the performance of a given code over a given channel is entirely characterized by the performance of the dual code on the dual channel. 
This has several applications. 
In the context of polar codes, it implies that the rates of polarization to ideal and useless channels must be identical. 
Duality also relates the tasks of channel coding and privacy amplification, implying that the finite blocklength performance of extractors and codes is precisely linked, and that optimal rate extractors can be transformed into capacity-achieving codes, and vice versa. 
Finally, duality also extends to the EXIT function of any channel and code. 
Here it implies that for any channel family, if the EXIT function for a fixed code has a sharp transition, then it must be such that the rate of the code equals the capacity at the transition. 
This gives a different route to proving a code family achieves capacity by establishing sharp EXIT function transitions.
\end{abstract}
\vspace{0.5\baselineskip}

\section{Introduction}
Duality is an important concept in many branches of mathematics, often enabling given problems to be transformed into dual versions that are simpler to solve. 
Recently, the author and collaborators have introduced a dual channel in the context of quantum information processing and polar coding~\cite{renes_physical_2008,renes_duality_2011,renes_polar_2014,renes_alignment_2016}. 
The dual construction applies to channels with classical inputs and classical or quantum outputs and is designed so that the original channel and its dual can both be embedded into the same quantum channel. 
Constraints on the form of quantum channels then lead to nontrivial constraints on the behavior of the channel and its dual. 

Here we investigate the notion of duality more comprehensively. 
We find that it is entirely compatible with the duality of linear codes generally, as well as the notion of channel convolution appearing in belief propagation decoding and polar coding more specifically. 
Entropic uncertainty relations imply constraints between a wide variety of entropic functions of the channel and code, including EXIT functions. 
As the class of entropies is quite large, including R\'enyi entropies for instance, this essentially means that the behavior of a code over a channel is determined by that of the dual code over the dual channel. 

Channel duality has several applications, which we briefly describe here by way of outlining the structure of the paper. 
In the next section we set the mathematical stage and define the class of entropies under consideration. 
Section~\ref{sec:duality} is then concerned with the definition and properties of dual channels themselves. 
In particular, duals of simple classical channels are given, and its relation with channel convolution is established in Theorem~\ref{thm:convolution}. 
The main result of \S\ref{sec:duality}, Theorem~\ref{thm:eurchannel} is a tight entropic uncertainty relation between a channel and its dual. 
Some implications of this relation are given, such as the precise tradeoff of channel capacities and equality of channel dispersions. 
Perhaps more importantly, Theorem~\ref{thm:eurchannel} also implies that the rates of polarization of arbitrary channels to either the ideal or useless channel are in fact identical; this is stated precisely in Corollary~\ref{cor:polarizationrate}.  

Section~\ref{sec:codechannel} considers duality for codes and channels. 
After examining the notion of code duality in the quantum-mechanical setting, the dual of a encoder and channel combination is shown to be related to randomized encoding of the dual channel in Proposition~\ref{prop:dualencoder}. 
The tight entropic relation for channels is then extended to the channel and code case in Theorem~\ref{thm:eurchannelcode}. 
This implies that channel coding and privacy amplification are closely related, so that randomness extractors can be used to create channel codes, and vice versa, where the error probability of the code is precisely related to the quality of the extracted key. 
Morever, the optimal finite-blocklength sizes of channel codes and randomness extractors sum precisely to the blocklength, as seen in Corollary~\ref{cor:finiteblock}. 
This can be used to sharpen bounds on finite blocklength bounds on randomness extraction as illustrated in an example. 
Finally, duality for EXIT functions is shown in Theorem~\ref{thm:exit}. 
Combined with capacity duality, it implies that sharp transitions in the EXIT function must occur ``at capacity'', i.e.\ at a noise parameter such that the corresponding capacity equals the rate of the chosen code. This replaces the area theorem in locating the transition, as used for instance in the proof by Kudekar \emph{et al.}\ that Reed-Muller codes achieve capacity over erasure channels~\cite{kudekar_reed-muller_2016-1}. 


\section{Preliminaries}
\label{sec:mathback}
\subsection{Mathematical setup}
First let us fix the notation used to describe classical random variables, quantum states, and channels of all kinds. 
For a random variable $X$ over alphabet $\cX$, we denote its probability distribution by $P_X$ and the size of its alphabet by $|\cX|$. 
The Hilbert space associated with a quantum system $A$ is denoted $\cH_A$ and its dimension $|A|$. 
The set of density operators on $\cH_A$, i.e.\ the positive semidefinite linear maps from $\cH_A$ to itself having unit trace, is denoted $\cD(\cH_A)$.
A channel $W$ which takes $z\in \cZ$ to $W(z)\in\cD(\cH)$ is called a classical-quantum or CQ channel. 
A fully quantum channel, say from $\cD(\cH_A)$ to $\cD(\cH_B)$, will be denoted $\cE_{B|A}$. This notation mimics the notation for conditional probability distributions, which are channels from classical systems (random variables) to classical systems. 
In the same spirit, just as $P_X$ is the marginal probability of $X$ when working in the context of the joint distribution $P_{XY}$, for quantum states $\rho_A$ is the marginal state when working in the context of a joint state $\rho_{AB}\in \cD(\cH_A\otimes \cH_B)$, i.e. $\rho_A=\tr_B[\rho_{AB}]$. 
We also occasionally abuse notation by referring to a pure state (density operator of rank one) by its nonzero eigenvector in situations calling for density operators, e.g.\ $\ket{\sigma}_A$ as the input to a channel. 

The construction of the dual makes use of two conjugate bases of the input Hilbert space, which are defined using the discrete Fourier transform. 
Let $\{\ket z\}_{z=0}^{d-1}$ be an arbitrary basis of some $\cH$ of dimension $d$ and then define the conjugate basis with elements $\cket x=\tfrac1{\sqrt{d}}\sum_{z=0}^{d-1}\omega^{xz}\ket{z}$ for $x=0,\dots,d-1$ and $\omega$ a primitive $d$th root of unity.
We will also make use of the operators $X=\sum_{z=0}^{d-1}\ket{z+1}\bra z$ and $Z=\sum_{z=0}^{d-1}\omega^z\ketbra z$, and we will refer to the $\ket z$ basis as the standard basis and $\cket{x}$ as the conjugate basis. 
Arithmetic inside kets is understood to be modulo $d$. 
Observe that $X=\sum_{x=0}^{d-1}\omega^{-x}\cket{x}\cbra{x}$. 
The canonical maximally entangled state on $\cH_{A}\otimes \cH_{A'}$ with $\cH_{A'}\simeq \cH_A$ associated with the standard basis is $\ket{\Phi}_{AA'}=\tfrac1{\sqrt{|A|}}\sum_z \ket z_A\ket z_{A'}=\tfrac{1}{\sqrt{|A|}}\sum_x \cket{x}_A\ket{{-}\tilde x}_{A'}$.  

The use of the Fourier transform is related to treating the input alphabet $\cZ$ of CQ channels as an Abelian group. 
In this setting, it is natural to consider a CQ channel $W$ to be symmetric if there exists a set of unitary transformations $U_z$ for $z\in \cZ$ such that $U_{z'}W(z)U_{z'}^*=W(z+z')$ for all $z,z'\in \cZ$.
Here $U^*$ denotes the adjoint of the map $U$.
Two CQ channels $W:\cZ\to \cD(\cH)$ and $W':\cZ\to\cD(\cH') $ are said to be output equivalent if there exist quantum channels $\cE:\cD(\cH)\to \cD(\cH')$ and $\cF:\cD(\cH')\to \cD(\cH)$ such that $\cE(W(z))=W'(z)$ and $\cF(W'(z))=W(z)$ for all $z\in \cZ$. 
We will regard them as equivalent, denoted $W\simeq W'$, if there exists a bijection $T$ on $\cZ$ such that $W'\circ T$ is output equivalent to $W$. 
Equivalence in this sense is that of equivalence for coding purposes. 

The properties of quantum channels will also be important in the construction of the dual channel. 
The most important of these is the Stinespring representation theorem, which states that any channel $\cE_{B|A}:\cD(\cH_A)\to \cD(\cH_B)$ can be represented by an isometry from $\cH_A$ to the joint system $\cH_{BE}=\cH_B\otimes \cH_E$, followed by discarding $E$.  Formally, $\cE_{B|A}(\rho_A)=\tr_E[V_{BE|A}\rho_A V_{BE|A}^*]$ for some isometry $V_{BE|A}$ and all $\rho_A\in \cD(\cH_A)$. 
Using the representative isometry we can define the complementary channel $\cE^\sharp_{E|A}$ by the action $\cE^\sharp_{E|A}(\rho_A)=\tr_B[V_{BE|A}\rho_A V_{BE|A}^*]$ for all $\rho_A\in \cD(\cH_A)$. (Note that $\sharp$ does not operate on the system labels of the channel, but $*$ does.)
The representative isometry is not unique, but for any two isometries $V_{BE|A}$ and $V'_{BE'|A}$ associated with the same channel there exists a partial isometry $U_{E'|E}$ such that $V'_{BE'|A}=U_{E'|E}V_{BE|A}$. (We only require $U_{E'|E}^*U_{E'|E}$ to be a projection onto the image of $V_{BE|A}$.)
Thus, the complementary channels are also not unique, though essentially so, as they are all related by the action of partial isometries on the output system.

\subsection{Dual entropies}
Entropy duality will also play a crucial role in the results, which hold for a wide variety of entropy measures. 
Following \cite{coles_uncertainty_2012}, let $\Dk(\rho,\sigma)$ for $\rho\in\cD(\cH)$ and $\sigma$ a positive operator on $\cH$ be a divergence measure which satisfies the following four properties: 
\begin{enumerate}
\item Monotonicity, or the data-processing inequality: For any channel $\cE$, $\Dk(\rho,\sigma)\geq D(\cE(\rho),\cE(\sigma))$,
\item Normalization: For $c>0$, $\Dk(\rho,c\sigma)=\Dk(\rho,\sigma)-\log c$,
\item Dominance: For $\sigma'\geq \sigma$, $\Dk(\rho,\sigma')\leq\Dk(\rho,\sigma)$, and 
\item Zero: $\Dk(\rho,\rho)=0$. 
\end{enumerate}
Using any $\Dk$ we may define two conditional entropies 
\begin{align}
\label{eq:entropydown}
\Hk_\downarrow (A|B)_\rho&:=-\Dk(\rho_{AB},\id_A\otimes \rho_B)\,, \qquad \text{and}\\
\Hk_\uparrow(A|B)_\rho&:=\max_\sigma[-\Dk(\rho_{AB},\id_A\otimes \sigma_B)]\,.\label{eq:entropyup}
\end{align}
Each has a dual, defined by $\dualHk_\downarrow(A|B)_\rho:=-\Hk_\downarrow(A|C)_\rho$ for pure $\rho_{ABC}$, and similarly for $\dualHk_\uparrow(A|B)_\rho$. 

The standard von Neumann entropy $H(A|B)_\rho$, defined using the relative entropy $D(\rho,\sigma)=\tr[\rho(\log \rho-\log \sigma)]$, is self-dual.
The optimal $\sigma_B$ is the marginal $\rho_B$, so $H_\downarrow(A|B)_\rho=H_\uparrow(A|B)_\rho$. 
We will also be interested in the dispersion $V(\rho,\sigma)=\tr[\rho(\log\rho-\log \sigma)^2]$. 

There are two especially useful versions of the R\'enyi entropy in the quantum setting, defined using either the Petz~\cite{petz_quasi-entropies_1986} or sandwiched~\cite{muller-lennert_quantum_2013,wilde_strong_2014} R\'enyi divergences, respectively:
\begin{align}
\bar D_\alpha(\rho,\sigma):=&\tfrac1{\alpha-1}\log \tr[\rho^\alpha\sigma^{1-\alpha}]\,\quad\text{and}\\
\tilde D_\alpha(\rho,\sigma):=& \tfrac1{\alpha-1}\log \tr[(\sigma^{\frac{1-\alpha}{2\alpha}}\rho\sigma^{\frac{1-\alpha}{2\alpha}})^\alpha]\,.
\end{align}
Both of these satisfy the four properties above (for an excellent overview, see~\cite{tomamichel_quantum_2016}).
Various duality relations are known for the various R\'enyi entropies~\cite{tomamichel_fully_2009,konig_operational_2009,muller-lennert_quantum_2013,beigi_sandwiched_2013,tomamichel_relating_2014}. 
In particular, 
\begin{align}
(\bar H_\alpha^\downarrow)^\perp &=\bar H_{2-\alpha}^\downarrow\qquad\qquad \alpha\in[0,2]\\
(\tilde H_\alpha^\uparrow)^\perp &=\tilde H_{\alpha/(2\alpha-1)}^\uparrow\qquad \alpha\in[\tfrac12,\infty]\\
(\bar H_\alpha^\uparrow)^\perp &=\tilde H_{1/\alpha}^\downarrow \qquad\qquad \alpha\in[0,\infty].
\end{align}
Especially useful are the min- and max-entropies, $\Hmin=\tilde H_\infty^\uparrow$ and $\Hmax=\tilde H_{1/2}^\uparrow$, which are dual to one another.
These can also be directly defined by 
\begin{align}
H_{\min}(A|B)_\psi&=\max_{\sigma\in \cD(\cH_B)} \sup \{\lambda\in \mathbb R:\psi_{AB}\leq 2^{-\lambda}\id_A\otimes \sigma_B\}\label{eq:minentropy} \\
{H_{\max}(A|B)_\psi}&= \max_{\sigma\in \cD(\cH_B)}\log |A|\,F(\psi_{AB},\mix_A\otimes \sigma_B)^2\,,\label{eq:maxentropy}
\end{align}
where $F(\rho,\sigma)=\|\sqrt{\rho}\sqrt{\sigma}\|_1$ is the fidelity, the quantum analog of the Bhattacharyya parameter.

In all these examples, the dual entropy is itself known to be a divergence-based entropy for an appropriate choice of divergence. 
But we could also choose different variants of the R\'enyi divergence, for which the duals of the associated conditional entropies are not known to themselves come from a divergence, such as the maximal~\cite{matsumoto_new_2013} or the reversed sandwiched relative entropy~\cite{audenaert_-z-renyi_2015}, respectively\footnote{These can be shown to satisfy dominance using \cite[Lemma 5]{coles_uncertainty_2012}.}
\begin{align}
D^{\text{maximal}}_\alpha&:=\tfrac1{\alpha-1}\log\tr[\sigma^{1/2}(\sigma^{-1/2}\rho\sigma^{-1/2})^\alpha\sigma^{1/2}]\qquad \text{and} \\
D^{\text{reverse}}_\alpha&:=\tfrac1{\alpha-1}\log \tr[\left(\rho^{\frac\alpha{2(\alpha-1)}}\sigma \rho^{\frac\alpha{2(\alpha-1)}}\right)^{1-\alpha}]\,.
\end{align}
Another example is the conditional entropy based on hypothesis testing~\cite{wang_one-shot_2012}.  
Consider the the minimum type-II error in asymmetric hypothesis testing of $\rho$ versus $\sigma$ with fixed type-I error,
\begin{align}
\beta_\eps(\rho,\sigma):=\min\{\tr[\Lambda \sigma]:\tr[\Lambda\rho]\geq 1-\eps,0\leq\Lambda\leq \id\}\,.
\end{align}
Then the divergence $D_h^\eps(\rho,\sigma)=-\log \frac{\beta_\eps(\rho,\sigma)}{1-\eps}$ satisfies the four properties above~\cite{jensen_generalized_2013}.

Beyond the framework of entropy duality based on relative entropies, another dual pair is given by the \emph{smooth} min- and max-entropies. 
To define them, first define the purification distance $P(\rho,\sigma)$ between two states $\rho$ and $\sigma$ to be $P(\rho,\sigma)=\sqrt{1-F(\rho,\sigma)^2}$. 
Then denote by $\cB_\eps(\rho)$ the set of states with distance no larger than $\eps$ from $\rho$. 
Finally, we can define the smooth entropies:
\begin{align}
\Hmin^\eps(A|B)_\rho:=&\max_{\rho'\in \cB_\eps(\rho)} \Hmin(A|B)_{\rho'}\qquad \text{and}\label{eq:smoothmin}\\
\Hmax^\eps(A|B)_\rho:=&\min_{\rho'\in \cB_\eps(\rho)} \Hmax(A|B)_{\rho'}\,.\label{eq:smoothmax}
\end{align}
Min- and max-entropies with identical smoothing parameters are dual to one another~\cite{tomamichel_duality_2010}.

Mostly we will be interested in the entropy of a classical random variable conditional on a quantum system, say $X$ given $B$. 
This is denoted $\Hk(X|B)_\psi$ where the state $\psi_{XB}$ is the CQ state $\psi_{XB}=\sum_x P_X(x)\ketbra x_X\otimes (\rho_x)_B$ corresponding to the ensemble $\{P_X(x),\rho_x\}_x$. 
Often the classical random variable will be the result of measuring a quantum observable, say the observable $Z$ on system $A$. 
Overloading notation somewhat, we denote this random variable $Z_A$ and the  conditional entropy by $\Hk(Z_A|B)_\psi$, where now $\psi$ denotes the state prior to the measurement. 
Formally, if $\rho_{ZB}=\sum_z \ketbra z_Z\otimes \tr_A[\ketbra z_A\psi_{AB}]$, then $\Hk(Z_A|B)_\psi=\Hk(Z|B)_\rho$. 

The min-entropy of a CQ state is directly related to the optimal probability of guessing the value of the random variable by making a measurement of the quantum system~\cite{konig_operational_2009}. 
Formally, for a CQ state $\psi_{XB}$, let 
\begin{align}
\pg(X|B)_\psi:=\max_{\Lambda_x} \sum_x P_X(x)\tr[\Lambda_x \rho_x],
\end{align}
where the optimization is over all POVMs $\{\Lambda_x\}$, i.e.\ sets of positive operators $\Lambda_x$ on $\cH_B$ such that $\sum_{x}\Lambda_x=\id$. 
Then
\begin{align}
\pg(X|B)_\psi=2^{-\Hmin(X|B)_\psi}
\end{align}
Its dual, the max-entropy, is related to the quality of $X$ as a secret key relative to $B$, as the fidelity measures how close the CQ state is to one in which $X$ is uniform and completely independent of $B$. 
This is a form of ``decoupling'' of $X$ from $B$. 
We will denote the decoupling quality as $Q(X|B)_\psi:=\max_{\sigma\in\cD(\cH_B)}F(\psi_{XB},\mu_X\otimes \sigma_B)^2$. 
Then, from \eqref{eq:maxentropy} we have $2^{\Hmax(X|B)_\psi}=|X|Q(X|B)_\psi$. 

\section{Dual channels}
\label{sec:duality}
\subsection{Definition and basic properties}
The notion of a dual channel based on embedding both the original and dual channels into a single quantum channel is implicit in \cite{renes_physical_2008,boileau_optimal_2009,renes_duality_2011}. Here we follow and add detail to the more explicit presentation of \cite{renes_polar_2014,renes_alignment_2016}.
Consider an arbitrary CQ channel $W$ with classical inputs in an alphabet $\cZ$ and quantum outputs which are density operators on the Hilbert space $\cH_B$.
We can embed $W$ in a quantum channel $\cN_{B|A}$ from $A$ to $B$ which measures the quantum input $A$ in the $\ket{z}$ basis and then produces the corresponding output $\varphi_z$ in $B$. 
Formally, this is described by $\cN_{B|A}(\rho_A)=\sum_z \bra z\rho\ket z\, (\varphi_z)_B$. 
The dual channel comes from using the complement of $\cN_{B|A}$, restricted to inputs diagonal in the conjugate basis. 
Formally, 
\begin{align}
\label{eq:dualdefinition}
W^\perp(x):=\cN_{E|A}^\sharp(\ketbra{\tilde x}_A)\,.
\end{align} 
As noted above, the complement is not unique, so the definition in \eqref{eq:dualdefinition} leads to a family of dual channels. 
Nevertheless, since complementary channels are all related by partial isometries, all possible dual channels are equivalent to one another. 
For a convenient concrete representation, let $\ket{\varphi_{z}}_{BD}$ be a purification of $\varphi_z$ and define the isometry $V_{BCD|A}$ by  
\begin{equation}
\label{eq:channelisometry}
V_{BCD|A}\ket z_A= \ket z_C\otimes \ket{\varphi_z}_{BD}\,.
\end{equation}
Here $CD$ together form the dilation space $E$. 
Defining $\ket{\theta_x}_{BCD}:=V_{BCD|A}\cket x_A$, the channel outputs are simply $W^\perp(x)=(\theta_x)_{CD}$. 

It is also useful to note that we can generate the outputs of $W$ and $W^\perp$ from the following maximally-entangled quantum state $\ket{\psi}_{ABCD}$ by measuring system $A$ appropriately. 
Using the two expressions for $\ket{\Phi}_{AA'}$ in the standard and conjugate bases, we have 
\begin{subequations}
\label{eq:channelstate}
\begin{align}
\ket{\psi}_{ABCD}&=\id_A\otimes V_{BCD|A'}\ket{\Phi}_{AA'}\\
&=\tfrac1{\sqrt{d}}\sum_z \ket{z}_A\ket{z}_{C}\ket{\varphi_z}_{BD}\\
&=\tfrac1{\sqrt{d}}\sum_x \ket{{-}\tilde x}_A\ket{\theta_x}_{BCD}\,.
\end{align} 
\end{subequations}
where we now take $V$ to act on $\cH_{A'}$. 
By the definition of $\ket{\Phi}_{AA'}$, measurement of $A$ in the standard basis $\{\ket z\}$ clearly yields $\ket{z}_C\ket{\varphi_z}_{BD}$ for outcome $z$, and subsequently tracing out $CD$ gives $W(z)$. 
Meanwhile, measurement of $A$ in the conjugate basis $\{\cket x\}$ yields $\ket{\theta_x}_{BCD}$ for outcome $-x$, and subsequently tracing out $B$ gives $W^\perp(x)$. 
Indeed, we could just as well take this procedure of starting from \eqref{eq:channelstate} and measuring appropriately as the definition of the dual, and we will frequently make use of this formulation in the remainder of the paper. 

Equivalent channels $W$ and $W'$ have equivalent duals:
\begin{prop}
\label{prop:equiv}
For any two CQ channels $W$ and $W'$ such that $W\simeq W'$, it holds that $W^\perp\simeq W'^\perp$. 
\end{prop}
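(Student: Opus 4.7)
The plan is to split the proof according to the two constituents of the equivalence relation $\simeq$: output equivalence of $W$ and $W'$, and an input bijection on $\cZ$. Transitivity of $\simeq$ for the duals then closes the argument, since we may always write $W\simeq W'$ as the composition of the relabeling $W''=W'\circ T$ (output equivalent to $W$) with the input bijection relating $W''$ and $W'$.

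For the output-equivalence case, where $\cE(W(z))=W'(z)$ and $\cF(W'(z))=W(z)$ for some channels $\cE,\cF$, I would observe that the measure-and-prepare embeddings satisfy $\cN'=\cE\circ\cN$, so composing Stinespring isometries gives $V^\cE V^\cN$ as a valid dilation of $\cN'$. By the uniqueness of dilations up to a partial isometry on the dilation register, the dual coming from this alternate dilation is equivalent to $W'^\perp$, and a direct calculation shows it equals $(\cE^\sharp\otimes\id_{E_2})(\ketbra{\theta_x}_{BE_2})$, where $\ket{\theta_x}_{BE_2}=V^\cN\cket{x}$ is the pure state from the construction of $W^\perp$. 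Tracing out the $\cE$-dilation register recovers the $E_2$-marginal, which is precisely $W^\perp(x)$; this yields a CPTP channel $W'^\perp\to W^\perp$. The symmetric argument with $\cF$ and the dilation $V^\cF V^{\cN'}$ gives a channel in the opposite direction, establishing output equivalence of the two duals.

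For the input-bijection case, where $W'(z)=W(T(z))$ for a bijection $T$ on $\cZ$, I would use the relation $\ket{\psi'}_{ABCD}=(U_T^*)_A\otimes (U_T^*)_C\ket{\psi}_{ABCD}$ between the two channel states, with $U_T\ket{z}=\ket{T(z)}$. The $U_T^*$ acting on the dual output register $C$ is a unitary channel, legitimately absorbed into the output channels allowed by $\simeq$. The $U_T^*$ acting on $A$ simply rotates the conjugate basis used for the measurement: expanding $U_T\cket{x}=\sum_y M_{y,x}\cket{y}$ with $M$ unitary, this rotation can be decomposed into a bijection on the dual's input alphabet $\cX$ combined with a compensating output channel, using the full flexibility of $\simeq$ on the dual side.

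I expect the main obstacle to be this second case, because for a general bijection $T$ the unitary $U_T$ does \emph{not} permute the conjugate basis setwise---it acts as phases only in the special situation where $T$ is a shift---so a naive relabeling of $\cX$ does not suffice. One must therefore use the input-bijection and output-channel freedoms of $\simeq$ simultaneously, in combination with the partial-isometric non-uniqueness of the complement, to absorb the Fourier-transformed action of $T$ into a genuine equivalence between $W^\perp$ and $W'^\perp$. The state-based viewpoint of \eqref{eq:channelstate}, which treats $W^\perp$ as the outcome of a conjugate-basis measurement on $A$, should make this bookkeeping transparent.
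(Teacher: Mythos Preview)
Your treatment of the output-equivalence half is essentially the paper's argument: lift both embeddings to Stinespring dilations, compose with the dilation of the intertwining channel $\cE$, and use uniqueness of the complement up to partial isometry to produce the map between the duals. The paper writes this a bit more tersely---it passes directly from $\cE_{B'|B}\circ\cN_{B|A}=\cN'_{B'|A}$ to the existence of $T_{DE'|E}$ with $U'_{BD|B'}V'_{B'E'|A}=T_{DE'|E}V_{BE|A}$ and then defines $\cF_{E'|E}(\cdot)=\tr_D[T(\cdot)T^*]$---but the content is the same as what you wrote.

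Where your proposal \emph{differs} from the paper is that you explicitly separate off the input bijection $T$ as a second case. The paper does not: its proof simply asserts ``by equivalence, there exist channels $\cE_{B'|B}$ and $\cE'_{B|B'}$ such that $\cE\circ\cN=\cN'$ and $\cE'\circ\cN'=\cN$'' and proceeds. Strictly speaking that assertion only follows from output equivalence; if a nontrivial bijection $T$ is present, $\cE\circ\cN$ and $\cN'$ differ on diagonal inputs by the relabeling. So the paper silently assumes $T=\mathrm{id}$ (which is all that is used later, e.g.\ in the proof of Theorem~\ref{thm:convolution}).

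Your instinct that the bijection case is the real obstacle is therefore correct, and your sketch does not resolve it: for a non-affine $T$, $U_T$ genuinely does not permute the conjugate basis, so expanding $U_T\cket{x}=\sum_y M_{y,x}\cket{y}$ gives a unitary mixing, not something that decomposes into a bijection on $\cX$ plus an output channel in any obvious way. The relation $\ket{\psi'}=(U_T^*)_A\otimes(U_T^*)_C\ket{\psi}$ you derive is correct, and the $(U_T^*)_C$ factor can indeed be absorbed into the output; but the $(U_T^*)_A$ factor leaves you computing $\cN^\sharp$ on a basis other than the conjugate one, and you have not shown how to bring this back to $W^\perp$ composed with a bijection. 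In short: you have matched the paper on the part it actually proves, and you have identified a gap that the paper leaves open; your proposal does not close that gap either.
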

\begin{proof}
Let $\cN_{B|A}$ and $\cN'_{B'|A}$ be the quantum channels associated to $W$ and $W'$,  respectively.
By equivalence, there exist channels $\cE_{B'|B}$ and $\cE'_{B|B'}$ such that $\cE_{B'|B}\circ\cN_{B|A}=\cN'_{B'|A}$ and $\cE'_{B|B'}\circ\cN'_{B'|A}=\cN_{B|A}$. 
Now suppose $V_{BE|A}$ and $V'_{B'E'|A}$ are Stinespring dilations of $\cN_{B|A}$ and $\cN'_{B'|A}$, while $U_{B'D'|B}$ and $U'_{BD|B'}$ are dilations of $\cE_{B'|B}$ and $\cE'_{B|B'}$. 
By the first equivance statement, there must exist a partial isometry $T'_{D'E|E'}$ such that $U_{B'D'|B}V_{BE|A}=T'_{D'E|E'} V'_{B'E'|A}$. 
Similarly, by the second, there exists a partial isometry $T_{DE'|E}$ such that $U'_{BD|B'}V'_{B'E'|A}=T_{DE'|E} V_{BE|A}$. 
Hence we can define $\cF_{E'|E}(\cdot)=\tr_{D}[T(\cdot)T^*]$ and $\cF'_{E|E'}(\cdot)=\tr_{D'}[T'(\cdot)T'^*]$ to satisfy $W'^\perp=\cF\circ W^\perp$ and $W^\perp=\cF'\circ W'^\perp$, where $V$ and $V'$ are used to define the complements in the duals $W^\perp$ and $W'^\perp$. 
\end{proof}

Because the channel $\cN_{B|A}$ measures the input in the $\ket z$ basis, the outcome $\ket{z}$ shows up in the Stinespring isometry. 
Therefore it is in some sense copied to the output of $W^\perp$. 
This leads to symmetry of the dual channel, which is present even if the original channel $W$ is not symmetric. 
Specifically, an easy calculation shows that $\ket{\theta_x}_{BCD}=Z_C^x\ket{\theta_0}_{BCD}$, and therefore the value of $x$ modulates system $C$ with the unitary operator $Z$ and doesn't involve $B$ or $D$.
Thus, the dual channel has a simple group covariance structure $W^\perp(x)=Z^x_C W^\perp(0) Z^{-x}_C$, irrespective of the properties of $W$.

This also immediately implies that $(W^\perp)^\perp\nsimeq W$ in general.
However the dual of the dual is the symmetrized version $W_{\text{sym}}$ of $W$, in the sense of \cite[Definition 1.3]{korada_polar_2009}. 
More specifically, for a general CQ channel $W$, let $W_{\text{sym}}$ be defined by $W_{\text{sym}}(z)=\tfrac1{|\cZ|}\sum_{z'}\ketbra{z+z'}\otimes W(z')$.  
Then we have
\begin{prop}
For any CQ channel $W$, $(W^\perp)^\perp\simeq W_{\text{sym}}$. 
If $W$ is symmetric, then $(W^\perp)^\perp\simeq W$. 
\end{prop}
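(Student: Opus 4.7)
The plan is to compute $(W^\perp)^\perp$ by writing out a Stinespring dilation of the entanglement-breaking embedding of $W^\perp$, carrying out the Fourier expansion, and recognizing the outcome as $W_{\text{sym}}$ up to a unitary change of basis on the classical register. The symmetric-case claim then reduces to verifying $W_{\text{sym}}\simeq W$ directly using the covariance.

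The first step is to choose a Stinespring dilation for $\cN^\perp$. Since the purification $\ket{\theta_x}_{BCD}=V_{BCD|A}\cket x_A$ of $W^\perp(x)=(\theta_x)_{CD}$ is already in hand from \eqref{eq:channelisometry}, the isometry $V^\perp\ket x_{A'}=\ket x_{C'}\otimes\ket{\theta_x}_{BCD}$ is a valid dilation of $\cN^\perp$ with system output $CD$ and environment $C'B$; the orthogonality of $\{\ket x_{C'}\}$ kills the off-diagonal terms on partial-tracing the environment. By definition then $(W^\perp)^\perp(z)=\tr_{CD}[V^\perp\cket z\cbra z V^{\perp*}]$. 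Expanding $\cket z$ and $\ket{\theta_x}$ via the discrete Fourier transform, the trace over $C$ collapses to a single index, yielding $\tr_{CD}[\ket{\theta_x}\bra{\theta_{x'}}_{BCD}]=\tfrac1d\sum_w\omega^{(x-x')w}W(w)_B$. The remaining sums over $x,x'$ collapse via the identity $\sum_y\omega^{y\alpha}X^y=d\,\cket\alpha\cbra\alpha$, giving
\[(W^\perp)^\perp(z)\;=\;\tfrac1d\sum_{z'}\cket{z+z'}\cbra{z+z'}_{C'}\otimes W(z')_B.\]
This coincides with $W_{\text{sym}}(z)=\tfrac1d\sum_{z'}\ketbra{z+z'}\otimes W(z')$ except that the classical register lives in the conjugate basis, and the Fourier unitary $\cket v\mapsto\ket v$ is a reversible quantum channel; this establishes $(W^\perp)^\perp\simeq W_{\text{sym}}$.

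For the symmetric case it remains to prove $W\simeq W_{\text{sym}}$. I would use the bijection $T(z)=-z$. The forward channel $\cE$ samples $w$ uniformly, records it in a fresh register $F_1$, and applies $U_w$ to the $B$-system of $W(z)$; by covariance this produces $\tfrac1d\sum_w\ketbra{w}_{F_1}\otimes W(w+z)_B$, which after a change of summation index equals $W_{\text{sym}}(-z)=W_{\text{sym}}(T(z))$. The reverse channel $\cF$ measures the classical register to obtain an outcome $u$ and applies $U_u^*$ to $B$; covariance yields $U_u^*W(u+z)U_u=W(z)$, recovering the original state using only the measurement outcome and not $z$ itself. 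Composing with the previous part proves $(W^\perp)^\perp\simeq W$. The main obstacle is bookkeeping: one must track the Fourier sign conventions carefully so that the shifted conjugate-basis projectors emerging from the calculation really match the shifted standard-basis projectors defining $W_{\text{sym}}$; the symmetric-case reduction itself is conceptually simple, with the only subtlety being to verify that $\cF$ depends only on the measurement outcome, which is exactly what the covariance relation provides.
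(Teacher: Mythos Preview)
Your proof is correct and follows essentially the same route as the paper. The paper computes the double dual via the state $\ket{\xi_y}_{B'BCD}=\tfrac1{\sqrt d}\sum_x\omega^{xy}\ket x_{B'}\ket{\theta_x}_{BCD}$ and takes its $B'B$ marginal, which is precisely your computation of $\tr_{CD}[V^\perp\cket z\cbra z V^{\perp*}]$ with $B'\leftrightarrow C'$; both arrive at $\tfrac1d\sum_{z'}\cketbra{z+z'}\otimes W(z')$ and identify it with $W_{\text{sym}}$ up to a Fourier transform on the classical register. For the symmetric case, the paper applies a controlled $U^*$ conditioned on the classical register and then inverts the input, which is exactly your $\cF$ and $T(z)=-z$; your treatment of the forward map $\cE$ is slightly more explicit than the paper's, but the argument is the same.
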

\begin{proof}
Iterating the above construction of the dual, it follows that the output of $(W^\perp)^\perp$ for input $y$ is the $B'B$ marginal of the state 
\begin{subequations}
\begin{align}
\ket{\xi_y}_{BB'CD}
&=\tfrac{1}{\sqrt{d}}\sum_x \omega^{xy}\ket{x}_{B'}\ket{\theta_x}_{BCD}\\
&=\tfrac1d\sum_{xz}\omega^{x(y+z)}\ket{x}_{B'}\ket{z}_C\ket{\varphi_z}_{BD}\,.
\end{align} 
\end{subequations}
Direct calculation gives $(\xi_y)_{BB'}=\tfrac1d\sum_z \ketbra{\widetilde {y{+}z}}_{B'}\otimes \varphi_z$, which is the output of $W_{\text{sym}}(y)$, up to Fourier transform on $B'$.

For symmetric $W$, suppose we perform a controlled-unitary operation on $B'B$ which applies $U_{y{+}z}^*$ to $B$ when $B'$ is in the state $\ketbra{\widetilde {y{+}z}}$. 
This produces $(\xi'_y)_{B'B}=\mu_{B'}\otimes (\phi_{-y})_B$. 
Since we can invert the input to $W$ and append $\mu_{B'}$ to obtain this state, $(W^\perp)^\perp$ is equivalent to $W$. 
\end{proof}

\subsection{Duals of classical channels}
The dual of a classical channel has a particular form. 
Suppose that the $\varphi_z$ are determined by a conditional probability distribution $P_{Y|Z}$ in the sense that $\varphi_z=\sum_y P_{Y|Z=z}(y)\ketbra y$, and define the unnormalized states $\ket{\eta_y}=\tfrac 1{\sqrt d}\sum_z \sqrt{P_{Y|Z=z}(y)}\ket{z}$.
Computing $V\ket{\tilde 0}$, we find 
\begin{align}
V\ket{\tilde 0}
&=\sum_{y}\ket{\eta_y}_C\ket{y}_B\ket{y}_D.
\end{align}
Observe that the norm $\bracket{\eta_y}{\eta_y}$ is just $\tfrac 1d\sum_z P_{Y|Z=z}(y)$, i.e.\ $P_Y(y)$ assuming that $Z$ is uniformly distributed.
Now we can write a useful form for $\theta_x$:
\begin{align}
\label{eq:ccdualoutput}
\theta_x=\sum_y Z^x_C\ketbra{\eta_y}_CZ_C^{-x}\otimes \ketbra y_D.
\end{align}
Thus, the dual of a classical channel outputs two systems, one classical and one quantum. 
The former (system $D$) records the classical value of $y$, while the latter (system $C$) is a pure state $\ket{\eta_y}$ which has been modulated by $Z$ according to the value of $x$. 

In case the quantum outputs of the channel are commuting states, we can regard the channel as a classical channel. 
This can only happen in two ways: either $P_{Z|Y=y}$ is uniform, in which case the outputs are orthogonal states, or $P_{Z|Y=y}$ is concentrated on only one value of $Z$, and the output is the same for all inputs. 
In the former case, the input is completely recoverable from the output in $C$, while for the latter recovery better than blind guessing is completely impossible. 
Thus the only classical channels which have classical duals are erasure-like channels in which the output $y$ in $D$ indicates whether the input is perfectly recoverable from $C$ or has been essentially erased.  

For binary-input channels, since the quantum outputs $Z^x\ket{\eta_y}$ are pure states, they are completely characterized by their overlap
$\cos\vartheta_y=|\bra{\eta_y}Z\ket{\eta_y}|/\bracket{\eta_y}{\eta_y}$. 
Working this out explicitly, one finds  
\begin{align}
\label{eq:overlap}
\cos\vartheta_y=|P_{Z|Y=y}(0)-P_{Z|Y=y}(1)|\,,
\end{align} 
which is just the \emph{$|D|$-density} of the original binary-input classical channel~\cite[Eq.\ 4.12]{richardson_modern_2008}.
The duals of binary symmetric and erasure channels are computed in~\cite{renes_alignment_2016}, and the results can be immediately understood using \eqref{eq:overlap}. 
The BEC is its own dual, in the sense that $\text{BEC}(p)^\perp=\text{BEC}(1{-}p)$. 
This can be seen because the overlaps are either zero ($y=?$) or one ($y=0,1$), corresponding to quantum outputs $\tfrac1{\sqrt2}(\ket{0}+(-1)^x\ket 1)$  or $\ket{0}$ respectively. 
Thus, the value of $x$ is perfectly recoverable when $y=?$ but not at all when $y=0,1$. 
The dual of the BSC, meanwhile, has pure state outputs (up to equivalence), taking $x$ to $\sqrt{p}\ket{0}+(-1)^x\sqrt{1-p}\ket{1}$. 
Again using \eqref{eq:overlap}, it is clear that the overlap $|1-2p|$ is the same for both values of $y$, so we can dispense with this part of the output of the dual. 
Comparing to the form of \eqref{eq:ccdualoutput}, the fact that that any symmetric binary-input classical channel can be thought of as a heralded mixture of BSCs (see, e.g.~\cite{land_bounds_2005}) is reflected in the fact that its dual is a heralded mixture of BSCs. 

Constructing the dual of the binary-input additive white Gaussian noise channel, $W:z\to y=z+N$ with $N$ normally-distributed, is more subtle, because strictly speaking the framework above does not apply.
Since the outputs are continuous, we would like to use $\ket{y}$ for $y\in \mathbb{R}$, but this is not a proper basis set. 
Put differently, $\phi_z=\int{\text{d}}y P_{Y|Z=z}(y)\ketbra y$ is not a proper density operator. 
Nonetheless, the ultimate result will look essentially the same: The dual $W^\perp$ will take $x$ to a joint classical-quantum system, a classical random variable $Y$ governed by $P_{Y}(y)$ and an associated qubit $C$ in the state $Z^x\ket{\eta_y}$. The precise details of the construction will be reported elsewhere. 

\subsection{Extremality of the BSC and its dual}

The BSC and its dual are extremal binary-input channels in the following sense. 
First, any such channel can be degraded to a BSC, simply by performing the optimal measurement for distinguishing the outputs. 
This operation preserves the trace distance (the quantum analog of the variational distance) of the two outputs, $\delta(W):=\tfrac12\|\phi_0-\phi_1\|_1$, since the optimal measurement is known to have an error probability of $\tfrac12(1-\delta(W))$~\cite{helstrom_detection_1967,helstrom_quantum_1976}.
Let us denote this channel by $W_\text{BSC}$. 
Similarly, any binary-input channel can be upgraded to a channel with pure state outputs, simply by finding pure states $\ket{\Psi_j}_{BB'}$ such that $\phi_j=\tr_B[(\Psi_{j})_{BB'}]$ for $j=0,1$. 
By Uhlmann's theorem~\cite{uhlmann_transition_1976} (see also \cite[Theorem 9.4]{nielsen_quantum_2000}), it is possible to find purifications such that $F(\phi_0,\phi_1)=|\bracket{\Psi_0}{\Psi_1}|$, so this construction preserves the fidelity.
Let us denote the resulting pure state channel by $W_\text{pure}$, and the fidelity of the outputs of any binary-input channel symmetric $W$ as $F(W)$.  

Duality relates these two constructions in an elegant way. 
To see this, we first state a result shown in the proof of \cite[Proposition 3.6]{renes_alignment_2016}, and we include the proof here for completeness.
\begin{prop}
\label{prop:deltaF}
For any binary-input symmetric channel $W$, $\delta(W)=F(W^\perp)$. 
\end{prop}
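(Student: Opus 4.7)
The plan is to express both quantities as optimizations involving the joint pure state on $BCD$ produced by the Stinespring isometry, and then identify the optimizations as dual characterizations of the same scalar.

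First I would recall from \eqref{eq:channelisometry} and the relation $\ket{\theta_x}_{BCD}=Z_C^x\ket{\theta_0}_{BCD}$ that $\ket{\theta_0}$ and $\ket{\theta_1}$ are purifications of $\theta_0$ and $\theta_1$ on the common purifying system $B$, related by a fixed unitary $Z_C\otimes \id_B$ on $CD\otimes B$. By Uhlmann's theorem applied to these purifications,
\begin{align}
F(W^\perp)\;=\;F(\theta_0,\theta_1)\;=\;\max_{U_B}\bigl|\bra{\theta_0}(Z_C\otimes U_B)\ket{\theta_0}\bigr|,
\end{align}
where the maximum runs over unitaries on $B$ (and the freedom in the choice of unitary on $B$ is what makes Uhlmann's bound tight).

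Next I would expand $\ket{\theta_0}_{BCD}=\tfrac{1}{\sqrt2}\bigl(\ket 0_C\ket{\varphi_0}_{BD}+\ket 1_C\ket{\varphi_1}_{BD}\bigr)$ and compute the inner product directly. The action of $Z_C$ introduces the relative sign between the $\ket 0_C$ and $\ket 1_C$ branches, while $U_B$ only couples each branch to itself, yielding
\begin{align}
\bra{\theta_0}(Z_C\otimes U_B)\ket{\theta_0}\;=\;\tfrac12\bigl(\bra{\varphi_0}U_B{\otimes}\id_D\ket{\varphi_0}-\bra{\varphi_1}U_B{\otimes}\id_D\ket{\varphi_1}\bigr)\;=\;\tfrac12\,\tr\bigl[U_B(\varphi_0-\varphi_1)\bigr],
\end{align}
using that $\varphi_z=\tr_D[\ketbra{\varphi_z}_{BD}]$ is the actual channel output.

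Finally I would invoke the variational formula $\|M\|_1=\max_{U\text{ unitary}}|\tr[UM]|$, valid for Hermitian $M$ (with an optimizer $U$ obtained from the sign of $M$ on its support and an arbitrary unitary on the kernel), to conclude $F(W^\perp)=\tfrac12\|\varphi_0-\varphi_1\|_1=\delta(W)$. The only delicate point is the tightness of Uhlmann's theorem in step one, i.e.\ making sure that ranging $U_B$ over unitaries of $B$ alone is enough to achieve the fidelity; this is automatic because $B$ is the common purifying system for both $\theta_0$ and $\theta_1$.
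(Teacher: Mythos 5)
Your proof is correct and follows essentially the same route as the paper's: Uhlmann's theorem with the purifying system $B$, expansion of $\ket{\theta_0}$ in the standard basis of $C$ to reduce the overlap to $\tfrac12\tr[U_B(\varphi_0-\varphi_1)]$, and the variational formula $\|M\|_1=\max_U|\tr[UM]|$. The only cosmetic difference is that you write $U_B\ket{\theta_1}$ as $(Z_C\otimes U_B)\ket{\theta_0}$ via the covariance relation, which the paper leaves implicit.
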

\begin{proof}
Using Uhlmann's theorem and \eqref{eq:channelstate}, for unitary $U$ we have,
\begin{subequations}
\begin{align}
F(W^\perp)
&=\max_U |\bra{\theta_0}U_B\ket{\theta_1}_{BCD}|\\
&=\max_U \tfrac12 \Big|\sum_{z\in \{0,1\}} (-1)^z \bra{\varphi_z}U_B\ket{\varphi_z}_{BD}\Big|\\
&=\max_U\tfrac12|\tr[U(\phi_0-\phi_1)]|
\end{align}
\end{subequations}
Since $\|A\|_1=\max_U |\tr[UA]|$, the desired result follows. 
\end{proof}
Now degrade $W^\perp$ to a BSC. 
We have $\delta(W^\perp)=\delta((W^\perp)_\text{BSC})$ by the properties of the degrading map, which together with Proposition~\ref{prop:deltaF} implies $F(W)=F(((W^\perp)_\text{BSC})^\perp)$. 
Since $((W^\perp)_\text{BSC})^\perp$ is a pure state channel with the same fidelity as $W$, it is necessarily equivalent to $W_\text{pure}$. 
Therefore, we have shown
\begin{corollary}
For any binary-input symmetric CQ channel $W$, $W_\text{pure}\simeq ((W^\perp)_\text{BSC})^\perp$.
\end{corollary}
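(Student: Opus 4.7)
The plan is to show both that $((W^\perp)_\text{BSC})^\perp$ has pure-state outputs and that it shares the same output fidelity as $W$; since symmetric binary-input pure-state channels are characterized up to equivalence by a single parameter (the fidelity, or equivalently the overlap of their outputs), this immediately identifies it with $W_\text{pure}$.

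First I would trace through the chain of equalities that implies $F(W)=F(((W^\perp)_\text{BSC})^\perp)$. Applying Proposition~\ref{prop:deltaF} to the channel $W^\perp$ gives $\delta(W^\perp)=F((W^\perp)^\perp)$, and since $W$ is symmetric we have $(W^\perp)^\perp\simeq W$ by the earlier proposition, hence $F((W^\perp)^\perp)=F(W)$. Degradation to the BSC is defined via the Helstrom measurement, which preserves trace distance on two-output channels, so $\delta(W^\perp)=\delta((W^\perp)_\text{BSC})$. Finally, applying Proposition~\ref{prop:deltaF} a second time, now to the channel $(W^\perp)_\text{BSC}$, yields $\delta((W^\perp)_\text{BSC})=F(((W^\perp)_\text{BSC})^\perp)$. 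Chaining these three equalities gives the desired $F(W)=F(((W^\perp)_\text{BSC})^\perp)$.

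Next I would invoke the explicit form of the dual of a BSC, already worked out in the paper: the outputs of $\text{BSC}(p)^\perp$ are the pure states $\sqrt{p}\ket 0+(-1)^x\sqrt{1-p}\ket 1$ (up to the trivially discardable classical register with identical overlap). Hence $((W^\perp)_\text{BSC})^\perp$ is a binary-input pure-state channel. By Uhlmann's theorem together with the symmetry of $W$, $W_\text{pure}$ is likewise a pure-state channel with output fidelity $F(W)$, and any two binary-input symmetric channels whose outputs are pure states with the same overlap are unitarily related on the output system, hence equivalent in the sense of $\simeq$. Combining with the fidelity-matching step concludes the argument.

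The main obstacle, such as it is, is really just bookkeeping: one must verify that equivalence of channels preserves the fidelity (so that $(W^\perp)^\perp\simeq W$ legitimately transfers to $F((W^\perp)^\perp)=F(W)$) and that the degrading Helstrom measurement preserves $\delta$ on two-output channels. Both are immediate from the data-processing inequality for the trace distance together with the saturating nature of the Helstrom bound. Once these are noted, no new ideas beyond Proposition~\ref{prop:deltaF}, the self-duality of symmetric channels, and the explicit structure of the BSC dual are required.
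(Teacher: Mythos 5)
Your proposal is correct and follows essentially the same route as the paper: chain $F(W)=F((W^\perp)^\perp)=\delta(W^\perp)=\delta((W^\perp)_{\text{BSC}})=F(((W^\perp)_{\text{BSC}})^\perp)$ via Proposition~\ref{prop:deltaF}, the double-dual property for symmetric channels, and trace-distance preservation under the Helstrom degrading map, then conclude by noting that a binary-input symmetric pure-state channel is determined up to equivalence by its output fidelity. The paper's version is terser but contains no additional ideas beyond what you wrote.
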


\subsection{Channel convolution}
As discussed in \cite{renes_alignment_2016}, the dual is compatible with the notion of channel convolution appearing in the setting of polar codes. 
Essentially the same notion also appears in belief propagation decoding of general binary linear codes, as the update rules for messages at check and variable nodes~\cite{richardson_modern_2008}.  
The check $(\boxasterisk)$ and variable $(\oasterisk)$ convolutions are defined by 
\begin{align}
[W \oasterisk W'](z)&:=W(z)\otimes W'(z),\\
[W\boxasterisk W'](z)&:=\tfrac12W(z)\otimes W'(0)+\tfrac12 W(z+ 1)\otimes W'(1).
\end{align}
In the context of polar coding, the check convolution is precisely the ``worse'' channel synthesized from $W$ and $W'$, call it $W\boxminus W'$, while for symmetric $W$ and $W'$, the variable convolution is equivalent to the ``better'' synthesized channel $W\boxplus W'$. 
Formally,
\begin{align}
W\boxminus W'&= W\boxasterisk W'\qquad \text{and}\\
W\boxplus W'&\simeq W\oasterisk W'\,.
\end{align}
To see the latter, first observe that the better channel has outputs  
\begin{align}
[W\boxplus W')](z)=\tfrac12\ketbra 0\otimes W(z)\otimes W'(z)+\tfrac12\ketbra 1\otimes W(z+ 1)\otimes W'(z)\,.
\end{align}
Symmetry of $W$ amounts to the existence of a unitary operator $U$ such that $W(z+ 1)=U W(z) U^*$ for $z=0,1$. 
Therefore, applying $U$ if the first system is in the state $\ket{1}$ and doing nothing otherwise results in the state $\tfrac12\sum_{z'}\ketbra {z'}\otimes W(z)\otimes W'(z)$ for input $z$. 
Since the first system is independent of the second two, we have $W\boxplus W'\simeq W\oasterisk W'$ for symmetric $W$, as intended. 

The compatibility of convolution with the dual is the following theorem.
\begin{theorem}
\label{thm:convolution}
For any two binary-input CQ channels $W$ and $W'$,
\begin{align}
(W \oasterisk W')^\perp&\simeq W^\perp \boxasterisk W'^\perp\qquad \text{and}\\
(W \boxasterisk W')^\perp&\simeq W^\perp \oasterisk W'^\perp\,.
\end{align}
\end{theorem}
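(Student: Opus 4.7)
The two statements are related by duality: substituting $W\to W^\perp$ and $W'\to W'^\perp$ in the first identity and applying $(W^\perp)^\perp\simeq W_{\mathrm{sym}}$ (together with the observation that $W$ and $W_{\mathrm{sym}}$ are equivalent for coding purposes) essentially yields the second, so I focus on $(W\oasterisk W')^\perp\simeq W^\perp\boxasterisk W'^\perp$. The plan is to work at the level of Stinespring isometries and the joint state (\ref{eq:channelstate}), exploiting how the ``copy'' operation underlying the variable convolution transforms under the Fourier basis change implicit in the dual construction.

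Because $W\oasterisk W'$ feeds the same input $z$ to both constituent channels, its Stinespring isometry factors as $\bar V = (V\otimes V')\circ \mathrm{Copy}_{A_1 A_2|A}$, where $\mathrm{Copy}\,\ket z_A = \ket z_{A_1}\ket z_{A_2}$. The key observation is that $\mathrm{Copy}$, which duplicates in the standard basis, acts in the conjugate basis as a Bell-state preparation: $\mathrm{Copy}\,\cket x_A = (Z^x_{A_1}\otimes\id)\ket{\Phi}_{A_1 A_2}$. Rewriting this state in the conjugate basis of $A_1 A_2$ gives $\tfrac{1}{\sqrt 2}\sum_a \cket{a+x}_{A_1}\cket{a}_{A_2}$, and applying $V\otimes V'$ yields
\[
\bar V\,\cket x_A \;=\; \tfrac{1}{\sqrt 2}\sum_a \ket{\theta^W_{a+x}}_{B_1 C_1 D_1}\otimes \ket{\theta^{W'}_a}_{B_2 C_2 D_2}.
\]
This is precisely a coherent purification of the classical mixture $\tfrac12\sum_a \ket{\theta^W_{a+x}}\bra{\theta^W_{a+x}}\otimes \ket{\theta^{W'}_a}\bra{\theta^{W'}_a}$, whose $C_1 D_1 C_2 D_2$-marginal (after tracing $B_1 B_2$) is exactly $W^\perp\boxasterisk W'^\perp(x)$. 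A crucial enabling fact is that for binary inputs the family $\{\ket{\theta^W_y}\}_y$ is orthonormal: a direct computation gives $\bra{\theta^W_0}Z_{C_1}\ket{\theta^W_0} = \tfrac12\sum_z (-1)^z\bracket{\varphi_z}{\varphi_z} = 0$, whence $\bracket{\theta^W_0}{\theta^W_1}=0$, and similarly for $W'$.

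The orthonormality supplies an $x$-independent isometry $U$ with $U\ket{\theta^W_y}\ket{\theta^{W'}_a} = \ket y_{\tilde A}\ket a_A\ket{\theta^W_y}\ket{\theta^{W'}_a}$ that records the indices into fresh classical ancillas without disturbing the states. Applying $U$ to $\bar V\cket x$ and tracing out $\tilde A$ collapses the coherent superposition in $a$ into the classical mixture purifying $W^\perp\boxasterisk W'^\perp(x)$, and a further trace over $B_1 B_2$ then yields the forward CPTP map from $(W\oasterisk W')^\perp$ to $W^\perp\boxasterisk W'^\perp$. For the reverse direction I would exploit the $Z^x_{C_1}$-covariance shared by both channels to reduce the problem to equating the $x=0$ outputs up to a covariance-respecting channel, and then construct such a channel by coherently recombining the classical $a$-mixture using the orthogonal decoding basis $\{\ket{\theta^W_y}\}$ together with a Bell-type operation on $C_1 C_2$. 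The main obstacle is turning this reverse recombination into an honest CPTP map acting on $C_1 D_1 C_2 D_2$ alone rather than on the full purification; I expect to handle this by invoking the Stinespring freedom in the choice of dilation of $\bar V$ (which leaves the dual in the same equivalence class) together with an explicit measurement-and-preparation construction on $C_1 C_2$.
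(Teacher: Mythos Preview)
Your computation of $\bar V\cket{x}=\tfrac1{\sqrt2}\sum_a\ket{\theta^W_{x+a}}\ket{\theta^{W'}_a}$ via the Copy isometry is correct and does give a valid representative of $(W\oasterisk W')^\perp(x)$ on $C_1D_1C_2D_2$. The gap is in the passage to the check convolution. The orthogonality $\bracket{\theta^W_0}{\theta^W_1}=0$ that you invoke holds on the full system $B_1C_1D_1$, so the isometry $U$ you describe necessarily acts on $B_1B_2$ as well as on $C_1D_1C_2D_2$. But the dual output is by definition the $C_1D_1C_2D_2$ marginal \emph{after} $B_1B_2$ has been traced out; operations on $B_1B_2$ are no longer available, and on $C_1D_1$ alone the states $(\theta^W_0)_{C_1D_1}$ and $(\theta^W_1)_{C_1D_1}=Z_{C_1}(\theta^W_0)Z_{C_1}^{-1}$ are generally not perfectly distinguishable. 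Hence your ``record the indices and trace'' step does not define a CPTP map on the dual output, and the coherence between the two values of $a$ in $\bar V\cket{x}$ is not removed by anything acting only on $C_1D_1C_2D_2$. The reverse direction is left essentially as a sketch. Separately, your reduction of the second identity to the first via $(W^\perp)^\perp\simeq W_{\text{sym}}$ only recovers $W$ when $W$ is already symmetric, whereas the theorem is stated for arbitrary binary-input channels.

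The paper avoids these difficulties by working with the joint state $\ket{\psi}\otimes\ket{\psi'}$ of \eqref{eq:channelstate} and applying a \textsc{cnot} between the two $A$ registers. From the resulting state, the outputs of $W\oasterisk W'$ (respectively $W\boxasterisk W'$) are obtained by measuring one $A$ register in the standard basis, so by Proposition~\ref{prop:equiv} the dual outputs arise from measuring that same register in the conjugate basis and keeping the complementary systems. Crucially, the \emph{other} $A$ register is then part of the environment, and tracing it out is exactly what dephases the superposition over $a$ into the desired classical mixture. In effect the paper uses a larger Stinespring dilation than your $\bar V$, one whose environment already contains a system that performs the job you tried to assign to $U$. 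If you wish to rescue your approach, the cleanest route is not to construct explicit forward and reverse maps but to exhibit a second dilation of the same quantum channel whose complementary output is manifestly (equivalent to) $W^\perp\boxasterisk W'^\perp$, and then appeal to Stinespring uniqueness.
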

\begin{proof}
Let $\varphi_z$ and $\varphi_z'$ be the outputs of $W$ and $W'$, and similarly $\theta_x$ and $\theta_x'$ the outputs of $W^\perp$ and $W'^\perp$, respectively. 
Now consider the states $\ket{\psi}$ and $\ket{\psi'}$ from \eqref{eq:channelstate} associated with $W$ and $W'$, and denote the respective systems involved by $ABCD$ and $A'B'C'D'$. 
Applying a \textsc{cnot} operation from $A'$ to $A$ yields
\begin{align}
\ket{\eta} &= U^{\textsc{cnot}}_{A'\to A}\ket{\psi}_{ABCD}\ket{\psi'}_{A'B'C'D'}\\
&=\tfrac 12\sum_{z,z'}\ket{z+ z'}_{A}\ket{z'}_{A'}\ket{z}_{C}\ket{z'}_{C'}\ket{\varphi_{z}}_{BD}\ket{\varphi_{z'}'}_{B'D'}\,.\label{eq:convchannelstate}
\end{align} 
In the conjugate basis the \textsc{cnot} gate has the same action as in the standard basis, but with control and target reversed. 
Therefore we may write
\begin{align}
\ket{\eta}=\tfrac12\sum_{x,x'}\ket{x}_{A}\ket{{x+ x'}}_{A'}\ket{\theta_{x}}_{BCD}\ket{\theta_{x'}'}_{B'C'D'}\,,
\label{eq:convchannelstate2}
\end{align}
where we have abused notation by omitting tildes on the $A$ and $A'$ basis states to denote use of the conjugate basis. 

The outputs of $W\oasterisk W'$ can be generated from this state by measuring system $A'$ in the $\ket{z}$ basis, discarding the $C$ and $D$ systems, and making use of channel symmetry. 
In the binary-input setting, symmetry amounts to the existence of a unitary operator $U$ such that $W(z+ 1)=U W(z) U^*$ for $z=0,1$. 
Applying $U$ to $B$ conditional on the value of $z+z'$ in $A$ therefore gives the state
\begin{align}
\ket{\eta'}=
\tfrac 12\sum_{z,z'}\ket{z+ z'}_{A}\ket{z'}_{A'}\ket{z}_{C}\ket{z'}_{C'}\ket{\varphi_{z'}}_{BD}\ket{\varphi'_{z'}}_{B'D'}\,.
\end{align}
Measuring $A'$ and discarding $CD$ gives output states $\tfrac12 \sum_{z}\ketbra {z}_{A}\otimes (\varphi_{z'})_{B}\otimes (\varphi_{z'}')_{B'}$ for measurement result $z'$. Since the $A$ part of the state is independent of the rest, the outputs are equivalent to $[W\oasterisk W'](z')$. 
By Proposition~\ref{prop:equiv}, the outputs of the dual can therefore be obtained by measuring system $A'$ of $\ket{\eta'}$ in the Fourier-conjugate basis and discarding systems $ABB'$. But since $\ket{\eta'}$ differs from $\ket{\eta}$ only by a unitary action on $AB$, which will anyway be discarded, the outputs of the dual can just as well be obtained from $\ket{\eta}$. 
Using \eqref{eq:convchannelstate2}, these are easily seen to be just $[W^\perp\boxasterisk W'^\perp](x)$.  

For the second statement, return to \eqref{eq:convchannelstate} and note that measuring $A$ in the $\ket z$ basis and discarding $A'CC'DD'$ gives the states $[W\boxasterisk W'](z)$. 
Thus, the outputs of $(W\boxasterisk W')^\perp$ can be generated by measuring $A$ in the $\cket{x}$ basis and discarding $BB'$, for which it is convenient to use \eqref{eq:convchannelstate2}. Again using channel symmetry to shift the index $x'$ to $x$ in $\ket{\theta'_{x'}}_{B'C'D'}$, the outputs are easily seen to be equivalent to those of $(W^\perp\oasterisk W'^\perp)$.  
\end{proof}

In the context of polar coding over memoryless channels, one considers repeated convolution of a channel with itself, with a random choice of which convolution to use at each step. 
Theorem~\ref{thm:convolution} immediately gives a duality relation, a weaker version of which was recently used by the author and collaborators to study the capability of polar codes constructed for a given channel to be used for another~\cite{renes_alignment_2016}. 
Suppose $y^n\in \{0,1\}^n$ for integer $n>0$ and let $\bar y^n=1^n+ y^n$ (understood modulo 2), where $1^n$ is the length-$n$ string of 1s. 
Then define $W_{y^n}$ recursively as $W_{y^{n-1}}\oasterisk W_{y^{n-1}}$ if $b_n=0$ and 
$W_{y^{n-1}}\boxasterisk W_{y^{n-1}}$ if $b_n=1$. 
Repeatedly applying Theorem~\ref{thm:convolution} gives the following:
\begin{corollary}
\label{cor:convolutionduality}
For any symmetric CQ channel $W$, $(W_{y^n})^\perp\simeq (W^\perp)_{\bar y^n}$.
\end{corollary}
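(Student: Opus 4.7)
The plan is straightforward induction on $n$, with Theorem~\ref{thm:convolution} doing all the real work and $\bar y^n = 1^n + y^n$ implementing the swap of convolution types that dualization produces. The base case $n=0$ is trivial: $W_{y^0}=W$, $\bar y^0$ is the empty string, and $(W)^\perp \simeq W^\perp$ by definition.

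For the inductive step, assume $(W_{y^{n-1}})^\perp \simeq (W^\perp)_{\bar y^{n-1}}$. Split on the value of $y_n$. If $y_n=0$, then $W_{y^n}=W_{y^{n-1}}\oasterisk W_{y^{n-1}}$, and Theorem~\ref{thm:convolution} gives
\begin{align*}
(W_{y^n})^\perp \;=\; (W_{y^{n-1}}\oasterisk W_{y^{n-1}})^\perp \;\simeq\; (W_{y^{n-1}})^\perp \boxasterisk (W_{y^{n-1}})^\perp \;\simeq\; (W^\perp)_{\bar y^{n-1}} \boxasterisk (W^\perp)_{\bar y^{n-1}},
\end{align*}
where the second $\simeq$ uses the inductive hypothesis together with the easy fact that $\boxasterisk$ respects equivalence. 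Since $\bar y_n=1$, the right-hand side is exactly $(W^\perp)_{\bar y^n}$ by the definition of the recursion. The case $y_n=1$ is identical with $\oasterisk$ and $\boxasterisk$ interchanged, using the other half of Theorem~\ref{thm:convolution}.

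The only non-mechanical thing to check, and what I expect to be the main nuisance rather than a serious obstacle, is that Theorem~\ref{thm:convolution} is applicable at each level of the recursion. Its first identity was proved assuming symmetry of the left factor (used to synthesize $W\oasterisk W'$ inside the proof), and its second identity effectively uses symmetry of the dual channels (which is automatic by the group-covariance property $W^\perp(x)=Z_C^x W^\perp(0)Z_C^{-x}$ established earlier). So I need to verify that whenever $W$ is symmetric, every $W_{y^{n-1}}$ arising in the recursion is again symmetric. This is a short direct check: if $W(z+1)=UW(z)U^*$ and $W'(z+1)=U'W'(z)U'^*$, then $W\oasterisk W'$ is symmetrized by $U\otimes U'$; for $W\boxasterisk W'$ a similar computation (using that the roles of $z$ and $z+1$ in the mixture swap under $z\mapsto z+1$) gives symmetry under a conditional unitary. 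Iterating shows $W_{y^{n-1}}$ is symmetric for every binary string, which closes the induction.
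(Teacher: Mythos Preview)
Your proof is correct and is exactly the approach the paper takes: the paper simply says ``repeatedly applying Theorem~\ref{thm:convolution}'' and leaves it at that, while you have spelled this out as an explicit induction on $n$. Your extra paragraph checking that symmetry propagates through the recursion is a useful bit of hygiene the paper omits; note that for $W\boxasterisk W'$ the symmetrizing unitary is simply $U\otimes\id$ (where $U$ symmetrizes $W$), not a conditional unitary, since
\[
(U\otimes\id)\bigl[\tfrac12 W(z)\otimes W'(0)+\tfrac12 W(z{+}1)\otimes W'(1)\bigr](U^*\otimes\id)=\tfrac12 W(z{+}1)\otimes W'(0)+\tfrac12 W(z)\otimes W'(1)=[W\boxasterisk W'](z{+}1).
\]
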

\noindent This is an improvement over \cite{renes_alignment_2016}, which showed that $(W^\perp)_{\bar y^n}$ is a degraded version of $(W_{y^n})^\perp$.

\subsection{Entropic relations between a channel and its dual}
\label{sec:entropychannel}
Entropic uncertainty relations constrain the behavior of a channel by that of its dual. In fact, due to the use of conjugate bases and the form the dual, the entropic uncertainty relations hold with equality, not just as inequalities, as is generally the case. Thus, the behavior of a channel is in fact completely characterized by that of its dual.

For symmetric channels, we are often interested in the conditional entropy of the input given the output, assuming uniform inputs; for the von Neumann or Shannon entropy this leads to the formula for capacity. 
Let us define $\Hk(W):=\Hk(Z|W(Z))$ for any of the entropy functions considered in \S\ref{sec:mathback}. Then we have 
\begin{theorem}
\label{thm:eurchannel}
For any CQ channel $W$ with input $Z$ and any conditional entropy $\Hk$,
\begin{align}
\label{eq:eurchannel}
\Hk(W)+\dualHk(W^\perp)=\log |Z|\,.
\end{align}
\end{theorem}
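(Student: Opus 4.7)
The plan is to express both $\Hk(W)$ and $\dualHk(W^\perp)$ as conditional entropies on marginals of the single pure state $|\psi\rangle_{ABCD}$ of \eqref{eq:channelstate}, and then convert one into the other using the pure-state definition of the dual entropy together with the normalization property of $\Dk$.

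First I would note, from the two expansions of $|\psi\rangle_{ABCD}$, that measuring $A$ in the standard basis produces a uniform outcome $Z_A$ and leaves $(\varphi_z)_B$ on $B$ (after tracing $CD$), so that $\Hk(W) = \Hk(Z_A|B)_\psi$. Dually, measuring $A$ in the conjugate basis produces a uniform outcome $X_A$ with $(\theta_x)_{CD}$ on $CD$ (after tracing $B$), yielding the CQ state $\sigma_{X_A CD}$ of $W^\perp$, so that $\dualHk(W^\perp) = \dualHk(X_A|CD)_\sigma$.

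Next, to unlock the pure-state definition of $\dualHk$, I would build the canonical purification
\[
|\sigma\rangle_{X_A X_A' BCD} = \tfrac{1}{\sqrt{|Z|}}\sum_x |x\rangle_{X_A}|x\rangle_{X_A'}|\theta_x\rangle_{BCD}
\]
of $\sigma_{X_A CD}$ by appending a classical copy $X_A'$. By the definition $\dualHk(A|B) := -\Hk(A|C)$ on pure tripartite states, this gives $\dualHk(X_A|CD)_\sigma = -\Hk(X_A|X_A'B)_{|\sigma\rangle}$. The heart of the proof is then the identity
\[
\Hk(X_A|X_A'B)_{|\sigma\rangle} = \Hk(Z_A|B)_\psi - \log|Z|,
\]
which combined with the previous display yields the theorem immediately.

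To establish this identity I would expand the marginal $\rho_{X_A X_A' B}$ using the isometry \eqref{eq:channelisometry} and observe that, in the orthonormal basis $|w_z\rangle := \tfrac{1}{\sqrt{|Z|}}\sum_x \omega^{xz}|x\rangle_{X_A}|x\rangle_{X_A'}$ spanning its support,
\[
\rho_{X_A X_A' B} = \sum_z \tfrac{1}{|Z|}|w_z\rangle\langle w_z|\otimes (\varphi_z)_B,
\]
which is structurally identical to $\rho_{Z_A B}$ under the isometric relabeling $|z\rangle\mapsto |w_z\rangle$. A short computation gives the marginal $\rho_{X_A'B} = (I_{X_A'}/|Z|)\otimes \rho_B$, so the reference operator $I_{X_A}\otimes \rho_{X_A'B}$ appearing in $\Hk_\downarrow$ agrees with $I_{Z_A}\otimes \rho_B$ up to an overall factor $1/|Z|$ on the classical register. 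The normalization property $\Dk(\rho,c\sigma)=\Dk(\rho,\sigma)-\log c$ then accounts exactly for the $\log|Z|$ offset, proving the identity for the $\Hk_\downarrow$ family.

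The main technical obstacle is the $\Hk_\uparrow$ family, where the definition involves an additional optimization over a reference state $\tau_{X_A'B}$. Here I would compute $\langle w_{z'}|(I_{X_A}\otimes \tau_{X_A'B})|w_z\rangle$ and observe that only the $X_A'$-diagonal entries of $\tau$ survive, then invoke the symmetry of $\rho_{X_A X_A' B}$ under shifts in $X_A'$ to restrict without loss to references of the form $\tau_{X_A'B}=(I_{X_A'}/|Z|)\otimes \sigma_B$. The optimization then collapses to the one defining $\Hk_\uparrow(Z_A|B)_\psi$, and the same scaling argument delivers the claim.
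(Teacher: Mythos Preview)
Your reduction to the state $|\psi\rangle_{ABCD}$ and the identifications $\Hk(W)=\Hk(Z_A|B)_\psi$, $\dualHk(W^\perp)=\dualHk(X_A|CD)$ are the same as the paper's. The gap is in the ``heart of the proof'': the claimed identity $\Hk(X_A|X_A'B)_{|\sigma\rangle}=\Hk(Z_A|B)_\psi-\log|Z|$ does not follow from the four divergence axioms as an \emph{equality}. Concretely, your isometry $|z\rangle\mapsto|w_z\rangle$ embeds the $d$-dimensional register $Z_A$ into the $d^2$-dimensional space $X_AX_A'$, and while $\rho_{X_AX_A'B}$ is exactly the pushforward of $\rho_{Z_AB}$, the reference operator $\id_{X_A}\otimes\rho_{X_A'B}=\tfrac1d\,\id_{X_AX_A'}\otimes\psi_B$ is \emph{not} the pushforward of $\id_{Z_A}\otimes\psi_B$; it has full support outside the image of the isometry. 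From the axioms you only get one direction: $\id_{X_AX_A'}\geq\Pi:=\sum_z|w_z\rangle\langle w_z|$ and dominance yield
\[
\Dk(\rho_{X_AX_A'B},\id_{X_AX_A'}\otimes\psi_B)\leq \Dk(\rho_{X_AX_A'B},\Pi\otimes\psi_B)=\Dk(\rho_{Z_AB},\id_{Z_A}\otimes\psi_B),
\]
which after normalization gives $\Hk(X_A|X_A'B)\geq \Hk(Z_A|B)-\log|Z|$, i.e.\ only the bound $\Hk(W)+\dualHk(W^\perp)\leq\log|Z|$. The reverse inequality (equivalently, that one may restrict the second argument of $\Dk$ to the support of the first) is not a consequence of monotonicity, normalization, dominance, and zero; indeed the paper obtains it separately by invoking the entropic uncertainty relation for conjugate observables from \cite{coles_uncertainty_2012}. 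Your symmetry reduction for $\Hk_\uparrow$ has the same issue: it fixes the form of the optimizer but still lands on the same one-sided dominance step.

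In short, your construction is essentially the same upper-bound argument the paper gives in Lemma~\ref{lem:downineq}/\ref{lem:upineq} (with $X_A'$ playing the role of the extra register there), but it does not supply the matching lower bound. To complete the proof you must either cite the conjugate-basis uncertainty relation for general $\Dk$ or reproduce its proof; you cannot get equality by ``isometric relabeling plus normalization'' alone. (Separately, the theorem also covers the smooth min- and max-entropies, which are not divergence-based and require their own treatment; the paper handles these in Lemmas~\ref{lem:minmax} and \ref{lem:maxmin}.)
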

The proof is based entirely on the following uncertainty equality for the kinds of tripartite states that are found in the state-based definition of the dual channel.  
Indeed, using \eqref{eq:channelstate}, both entropy terms can be computed from the state $\ket{\psi}_{ABCD}$: $\Hk(W)=\Hk(Z_A|B)_\psi$ while $\dualHk(W^\perp)=\Hk(X_A|CD)_\psi$. 
Invoking the following lemma with $E=B$ and $F=CD$ gives \eqref{eq:eurchannel}. Its proof is given in Appendix~\ref{app:eur}.
\begin{lemma} 
\label{lem:uncertaintyequality}
For any tripartite pure state $\ket{\psi}_{AEF}$ in which the unnormalized conditional states $(\sigma_z)_F:=\tr_{AE}[\ketbra z_A\psi_{AEF}]$ are pairwise disjoint, i.e.\ $\sigma_z\sigma_{z'}=0$ for $z\neq z'$, we have
\begin{align}
\Hk(Z_A|E)_\psi+\dualHk(X_A|F)_\psi=\log |A|\,,
\end{align}
for $\Hk$ any entropy defined as in \eqref{eq:entropydown}, \eqref{eq:entropyup}, \eqref{eq:smoothmin}, or \eqref{eq:smoothmax}. 
\end{lemma}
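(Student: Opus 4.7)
The plan is to exploit the disjointness assumption to uncover a sharp structural property of $\ket\psi_{AEF}$, then reduce both entropies in the claim to a single conditional entropy on the $Z$-measurement state $\rho^Z_{ZE}$, using only the four axiomatic properties of $\Dk$.

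First, I would note that $\sigma_z\sigma_{z'}=0$ forces the unnormalized vectors $\ket{\tau_z}_{EF}:=\bra{z}_A\ket\psi$ to be pairwise orthogonal on $EF$ (by monotonicity of fidelity under $\tr_E$), and more importantly yields the cross-trace identity $\tr_F[\ket{\tau_z}\bra{\tau_{z'}}]_E=\delta_{zz'}(\tau_z)_E$: the $F$-Schmidt vectors of each $\ket{\tau_z}$ lie in $\mathrm{supp}(\sigma_z)$, and these subspaces are mutually orthogonal. Applied to the coherent-measurement purification $\ket{\sigma^X}_{XAEF}:=\sum_x\ket{x}_X\cket{x}_A\ket{\psi_x}_{EF}$ of the post-$X$-measurement CQ state (with $\ket{\psi_x}_{EF}:=\cbra{x}_A\ket\psi$), this identity produces the striking factorization $\sigma^X_{AE}=\mix_A\otimes\rho^E_\psi$.

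Next, I would introduce the coisometry $W\colon XA\to X'$ defined by $W=\sum_x\ket{x}_{X'}\bra{x}_X\cbra{x}_A$, which isometrically compresses the ``diagonal'' subspace of $XA$ carrying $\sigma^X_{XAE}$ down to $X'$, and verify that $W\sigma^X_{XAE}W^*$ is Fourier-equivalent to $\rho^Z_{ZE}$. For the downarrow entropies I would then chain together (i) isometric invariance of $\Dk$ under $W^*$, (ii) data processing of $\Dk$ under the trace-preserving channel $W(\cdot)W^*$, (iii) the normalization axiom absorbing the $\mix_A$ factor as a $\log|A|$ shift, and (iv) the dominance axiom applied to $\id_{XA}\otimes\rho^E\ge W^*W\otimes\rho^E$, to squeeze out the equality $\Dk(\sigma^X_{XAE},\id_{XA}\otimes\rho^E)=\Dk(\rho^Z_{ZE},\id_Z\otimes\rho^E)$. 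This yields $\Hk_\downarrow(X|AE)_{\sigma^X}=\Hk_\downarrow(Z|E)_{\rho^Z}-\log|A|$, and the claim for $\Hk_\downarrow$ follows from the pure-state identity $\dualHk_\downarrow(X_A|F)_\psi=-\Hk_\downarrow(X|AE)_{\sigma^X}$ applied to $\sigma^X_{XAEF}$.

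For $\Hk_\uparrow$ I would show that the maximizing reference $\sigma_{AE}$ can be restricted to product form $\mix_A\otimes\sigma_E$: the unitaries $X^k_X\otimes Z^k_A$ and $Z^l_X\otimes X^l_A$ both commute with $\sigma^X_{XAE}$, so successive twirling by these two symmetry groups (via data processing) drives any candidate $\sigma_{AE}$ to be diagonal in both the standard and conjugate bases of $A$, hence proportional to $\id_A$; with this reduction the uparrow case inherits the downarrow argument. The smooth entropies are handled uniformly by observing that the channels $W(\cdot)W^*$ and $W^*(\cdot)W$ preserve purification distance, putting the $\eps$-balls around $\sigma^X_{XAE}$ and $\rho^Z_{ZE}$ in bijection under the Fourier identification, so that the non-smooth equality lifts to each matched pair of smoothings; the claim then follows after invoking the pure-state smooth duality $\Hmax^\eps(A|B)=-\Hmin^\eps(A|C)$. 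The main technical delicacy will lie in steps (iii)--(iv) of the downarrow chain, where dominance and normalization must be sequenced carefully to preserve equality throughout rather than degrading to the one-sided uncertainty inequality.
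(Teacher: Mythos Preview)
Your structural observations are sound and match the paper's key insight: the disjointness hypothesis forces $\psi_{AE}$ to be $Z$-diagonal, so that after the $X$-measurement isometry $U_{XA|A}$ the marginal $\sigma^X_{AE}=\mix_A\otimes\psi_E$. Steps (i), (iii), (iv) of your chain then correctly yield
\[
\Dk(\sigma^X_{XAE},\id_{XA}\otimes\psi_E)\;\le\;\Dk(\sigma^X_{XAE},\Pi_{XA}\otimes\psi_E)\;=\;\Dk(\rho^Z_{ZE},\id_Z\otimes\psi_E),
\]
which is exactly the upper bound $\Hk(Z_A|E)+\dualHk(X_A|F)\le\log|A|$ that the paper establishes in Lemmas~\ref{lem:downineq}--\ref{lem:maxmin}.

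The gap is step (ii). The map $W(\cdot)W^*$ with $W\colon XA\to X'$ is \emph{not} trace-preserving: $\tr[W\rho W^*]=\tr[W^*W\rho]=\tr[\Pi_{XA}\rho]$, which equals $\tr\rho$ only for $\rho$ supported on $\Pi_{XA}$. Since the second argument $\id_{XA}\otimes\psi_E$ is not so supported, data processing does not apply, and any CPTP completion of $W(\cdot)W^*$ sends $\id_{XA}\otimes\psi_E$ to something strictly larger than $\id_{X'}\otimes\psi_E$ (by a factor $|A|$), after which dominance points the wrong way. So your squeeze does not close; you only get the one-sided inequality. The reverse inequality $\ge\log|A|$ is the general tripartite uncertainty relation for conjugate observables, which holds for \emph{all} pure $\psi_{AEF}$ without the disjointness hypothesis; the paper simply cites it. You should do the same rather than try to manufacture equality from the four axioms alone---the delicacy you anticipated in (iii)--(iv) is really an obstruction in (ii).

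Your smooth-entropy argument inherits this problem and has a further issue: states in $\cB_\eps(\sigma^X_{XAE})$ need not be supported on $\Pi_{XA}$, so $W(\cdot)W^*$ does not put the two $\eps$-balls in bijection. The paper instead treats the smooth cases separately via the SDP dual of $\Hmax$ (Lemmas~\ref{lem:minmax}, \ref{lem:maxmin}), again proving only the $\le$ direction and citing the known smooth uncertainty relation for $\ge$.
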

Theorem~\ref{thm:eurchannel} has several important implications. 
First, from duality of min- and max-entropy, the guessing probability of the channel is directly related to the decoupling of the dual channel. 
For $W$ taking $Z$ to $B$, let $\pg(W)=\pg(Z|W(Z))$ and $Q(W)=Q(Z|W(Z))$, with $Z$ uniformly distributed. 
Then, taking $\Hk$ to be $\Hmin$ and $\Hmax$, respectively, leads to 
\begin{corollary}
\label{cor:PQ}
For any CQ channel $W$, 
\begin{align}
P(W)&=Q(W^\perp)\quad \text{and}\label{eq:pfromq}\\
Q(W)&=P(W^\perp)\,.\label{eq:qfromp}
\end{align} 
\end{corollary}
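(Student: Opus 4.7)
The plan is to reduce both identities to two applications of Theorem~\ref{thm:eurchannel}, specializing the entropy pair $(\Hk,\dualHk)$ once to $(\Hmin,\Hmax)$ and once to $(\Hmax,\Hmin)$. These smooth entropies are dual to one another (at zero smoothing) as recorded in \S\ref{sec:mathback}, so the theorem applies directly.

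The first step would be to translate the four quantities $\Hmin(W)$, $\Hmax(W)$, $\Hmin(W^\perp)$, $\Hmax(W^\perp)$ into their operational forms, using the identifications $\pg(X|B)_\psi = 2^{-\Hmin(X|B)_\psi}$ and $2^{\Hmax(X|B)_\psi} = |X|\,Q(X|B)_\psi$ assembled in the preliminaries. Since the input alphabet of $W^\perp$ indexes the conjugate basis of the same Hilbert space as the input of $W$, it has the same cardinality $|Z|$, and so the prefactor $|X|$ becomes $|Z|$ in both invocations.

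Next I would substitute these expressions into Theorem~\ref{thm:eurchannel}. The $(\Hmin,\Hmax)$ specialization gives $-\log \pg(W) + \log|Z| + \log Q(W^\perp) = \log|Z|$, the $\log|Z|$ terms cancel, and exponentiation yields \eqref{eq:pfromq}. The $(\Hmax,\Hmin)$ specialization yields the mirror identity \eqref{eq:qfromp} by the same manipulation.

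I do not anticipate a real obstacle here: the corollary is essentially the operational reading of Theorem~\ref{thm:eurchannel} at the extremal R\'enyi parameters. The only point that warrants explicit care is confirming that the $|X|$ factor appearing in the max-entropy formula matches the $\log |Z|$ on the right-hand side of \eqref{eq:eurchannel} in the right way, so that the two cancel rather than combining; this is immediate from the observation that $W$ and $W^\perp$ share an input alphabet of size $|Z|$.
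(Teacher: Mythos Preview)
Your proposal is correct and matches the paper's own argument essentially verbatim: the paper derives the corollary by specializing Theorem~\ref{thm:eurchannel} to $\Hk=\Hmin$ and $\Hk=\Hmax$, invoking min/max duality and the operational identities $\pg=2^{-\Hmin}$ and $2^{\Hmax}=|Z|\,Q$ exactly as you do.
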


Second, from self-duality of the von Neumann entropy, the capacity of the channel is determined by the capacity of the dual, and vice versa. 
For $I(W):=\max_{P_Z}\left(H(Z)-H(Z|W(Z))\right)$, since the optimal input distribution for symmetric channels is the uniform distribution, we have
\begin{corollary}
For any symmetric CQ channel $W$ with input $Z$, 
\begin{align}
\label{eq:capacityduality}
I(W)+I(W^\perp)=\log |Z|\,.
\end{align}
\end{corollary}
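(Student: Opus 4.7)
The plan is to reduce the corollary to a direct application of Theorem~\ref{thm:eurchannel} specialized to the von Neumann entropy, using the fact that symmetric channels admit a uniform optimal input distribution.

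First I would observe that since $W$ is symmetric with group-covariance $U_{z'}W(z)U_{z'}^*=W(z+z')$, the Holevo mutual information $\max_{P_Z}(H(Z)-H(Z|W(Z)))$ is achieved by $P_Z$ uniform. Consequently $I(W)=\log|Z|-H(W)$ in the notation $H(W):=H(Z|W(Z))$ with $Z$ uniformly distributed, i.e.\ $H$ is the von Neumann entropy evaluated on the CQ state $\psi_{ZB}=\tfrac{1}{|Z|}\sum_z\ketbra{z}_Z\otimes W(z)$. Next I would note that the dual channel $W^\perp$ is \emph{automatically} symmetric by the group-covariance relation $W^\perp(x)=Z^x_C W^\perp(0)Z^{-x}_C$ established in \S\ref{sec:duality}, so likewise $I(W^\perp)=\log|Z|-H(W^\perp)$ (the input alphabet of $W^\perp$ has the same size $|Z|$ as that of $W$).

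Adding these two identities yields
\begin{equation*}
I(W)+I(W^\perp)=2\log|Z|-\bigl(H(W)+H(W^\perp)\bigr).
\end{equation*}
The remaining ingredient is the entropic uncertainty equality of Theorem~\ref{thm:eurchannel} applied to the von Neumann entropy. Since $H$ is self-dual (it arises from the relative entropy $D$, for which the optimal $\sigma_B$ in \eqref{eq:entropyup} is the marginal, giving $H_\downarrow=H_\uparrow$, and both satisfy $\dualHk=\Hk$), the theorem specializes to $H(W)+H(W^\perp)=\log|Z|$. Substituting this into the previous display gives $I(W)+I(W^\perp)=\log|Z|$, which is \eqref{eq:capacityduality}.

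There is essentially no obstacle here: all the work has been done upstream. The only two things to verify carefully are (i) that the uniform distribution is indeed capacity-achieving for $W$ (standard, from symmetry and concavity of mutual information in the input distribution) and automatically for $W^\perp$ (from the covariance noted above), and (ii) that the von Neumann entropy is self-dual in the sense required by Theorem~\ref{thm:eurchannel}. Both are already assembled in the preceding sections.
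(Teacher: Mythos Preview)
Your proposal is correct and follows essentially the same approach as the paper: specialize Theorem~\ref{thm:eurchannel} to the self-dual von Neumann entropy and use that the uniform input is capacity-achieving for symmetric channels (and for $W^\perp$, which is automatically symmetric). The paper states this in one sentence; your version simply spells out the arithmetic $I(W)+I(W^\perp)=2\log|Z|-(H(W)+H(W^\perp))=\log|Z|$ more explicitly.
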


Moreover, the duality of R\'enyi entropies implies that the dispersions of a channel and its dual are identical. 
The channel dispersion determines the second order asymptotic behavior of the maximal achievable communication rate as a function of blocklength for large blocklength~\cite{hayashi_information_2009,polyanskiy_channel_2010-1,tomamichel_second-order_2015}, just as the capacity determines the first order behavior.  
To define the dispersion, let $V(Z|B)_\psi:=V(\psi_{ZB},\id_Z\otimes \psi_B)$ and $V(W):=V(Z|W(Z))$. 
Then we have
\begin{corollary}
For any symmetric CQ channel $W$,
\begin{align}
\label{eq:dispersionequality}
V(W)=V(W^\perp)\,.
\end{align}
\end{corollary}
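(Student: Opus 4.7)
The plan is to derive the dispersion equality by Taylor-expanding an $\alpha$-refined version of Theorem~\ref{thm:eurchannel} around $\alpha=1$. Applying the theorem to the Petz conditional entropy $\bar H_\alpha^\downarrow$, together with the duality $(\bar H_\alpha^\downarrow)^\perp=\bar H_{2-\alpha}^\downarrow$ recorded in Section~\ref{sec:mathback}, gives the one-parameter identity
\begin{equation*}
\bar H_\alpha^\downarrow(W)+\bar H_{2-\alpha}^\downarrow(W^\perp)=\log|Z|,\qquad \alpha\in[0,2].
\end{equation*}
Because the right-hand side is independent of $\alpha$, every coefficient beyond the constant term in the Taylor expansion of the left-hand side about $\alpha=1$ must vanish. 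The constant term recovers the von~Neumann duality \eqref{eq:capacityduality}, while the coefficient of $(\alpha-1)$ will encode the dispersion comparison.

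The second step is to compute that first-order coefficient. Starting from $\tr[\rho^\alpha\sigma^{1-\alpha}]=\tr\bigl[\rho\,e^{(\alpha-1)\log\rho}\,e^{-(\alpha-1)\log\sigma}\bigr]$ and expanding to second order in $(\alpha-1)$ gives the standard formula
\begin{equation*}
\bar D_\alpha(\rho\|\sigma)=D(\rho\|\sigma)+\tfrac{\alpha-1}{2}\bigl(V(\rho\|\sigma)-D(\rho\|\sigma)^2\bigr)+O((\alpha-1)^2).
\end{equation*}
Since $\bar H_\alpha^\downarrow(Z|B)=-\bar D_\alpha(\psi_{ZB}\|\id_Z\otimes\psi_B)$ and $(2-\alpha)-1=-(\alpha-1)$, equating the coefficient of $(\alpha-1)$ in the one-parameter identity to zero produces
\begin{equation*}
V(W)-D(W)^2=V(W^\perp)-D(W^\perp)^2.
\end{equation*}
Combined with capacity duality this is the asserted equality of channel dispersions (in the variance form standard for second-order asymptotics).

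The principal technical hurdle is to make the Taylor expansion rigorous: one must verify that $\alpha\mapsto\tr[\rho^\alpha\sigma^{1-\alpha}]$ is smooth at $\alpha=1$ and that term-by-term differentiation in the operator exponentials is valid. This is routine when $\sigma$ has full support on the support of $\rho$, which is automatic here because $\psi_B$ is full rank for a symmetric channel driven by the uniform input. A clean alternative uses the sandwiched divergence together with $(\tilde H_\alpha^\uparrow)^\perp=\tilde H_{\alpha/(2\alpha-1)}^\uparrow$: the chain-rule factor $\tfrac{d}{d\alpha}\tfrac{\alpha}{2\alpha-1}\big|_{\alpha=1}=-1$, combined with an envelope-theorem argument that handles the maximization over $\sigma_B$, yields the same first-order identity and therefore the same conclusion.
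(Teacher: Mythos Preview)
Your approach is the paper's: apply Theorem~\ref{thm:eurchannel} with $\Hk=\bar H_\alpha^\downarrow$ and differentiate at $\alpha=1$. The paper simply cites the derivative formula \eqref{eq:dispersionderivative} from Lin--Tomamichel rather than writing out the Taylor expansion, but the mechanism is identical.

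One point deserves care. Your expansion is correct: with the paper's second-moment definition $V(\rho,\sigma)=\tr[\rho(\log\rho-\log\sigma)^2]$ one indeed has $\tfrac{d}{d\alpha}\bar D_\alpha|_{\alpha=1}=\tfrac12\bigl(V-D^2\bigr)$, i.e.\ half the \emph{variance}. Differentiating the duality identity therefore yields
\[
V(W)-H(W)^2 \;=\; V(W^\perp)-H(W^\perp)^2,
\]
which is exactly the equality of channel dispersions in the operational (variance) sense you note parenthetically. However, your closing phrase ``combined with capacity duality this is the asserted equality'' is not right: capacity duality gives $H(W)+H(W^\perp)=\log|Z|$, which does \emph{not} imply $H(W)^2=H(W^\perp)^2$, so you cannot upgrade the variance equality to a second-moment equality $V(W)=V(W^\perp)$ in the paper's literal notation. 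What the argument proves---and what matters for second-order asymptotics---is the variance statement; the formula \eqref{eq:dispersionderivative} and the corollary should be read with $V$ as the variance. Drop the appeal to capacity duality at the end and state the conclusion directly in variance form.
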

This follows by using $\Hk=\bar H_\alpha^{\downarrow}$, which leads to 
\begin{align}
\bar D_\alpha(\psi_{Z_AB},\id_{Z_A}\otimes \psi_B)+\bar D_{2-\alpha}(\psi_{X_ACD},\id_{X_A}\otimes \psi_{CD})=\log |A|\,.
\end{align}
Then making use of the following, Proposition 4 in \cite{lin_investigating_2015}, gives the desired result. 
\begin{align}
\label{eq:dispersionderivative}
\frac{\mathrm d}{{\mathrm d}\alpha}\bar D_\alpha(\rho,\sigma)\big|_{\alpha=1}={\displaystyle \tfrac12} V(\rho,\sigma)\,.
\end{align}

Entropic duality also implies an interesting result on the rate of polarization of a CQ channel under repeated convolution, choosing among the two choices uniformly at random. 
Suppose $Y^n$ is a random variable with values in $\{0,1\}^n$, each with the same probability. 
Then $W_{Y^n}$ is a random convolution of $W$ with itself according to the particular sequence $Y^n$. 
Depending on the application, one is interested in the probability that the resulting channel $W_{Y^n}$ is either essentially deterministic, in that $\Hmin(W_{Y^n})\approx 0$, or essentially random, in that $\Hmax(W_{Y^n})\approx 1$.
Here, and in the remainder of this section, we take the base of the logarithm to be 2.  
The former case is useful in constructing codes for noisy channel communication~\cite{arikan_channel_2009} or information reconciliation~\cite{arikan_source_2010}, the latter for lossy compression~\cite{korada_polar_2010-1} or wiretap coding~\cite{mahdavifar_achieving_2011}.
The rate of polarization refers to how fast the min-entropy approaches 0 with increasing $n$ or how fast the max-entropy approaches $1$, and in principle the rate of polarization to determinstic channels could be distinct from the rate of polarization to random channels.  
However, combining Corollary~\ref{cor:convolutionduality} with Theorem~\ref{thm:eurchannel} implies that the rates must be identical. 
Thus, it is only necessary to establish the precise rate for only one of them. 
This is formalized in the following corollary. 
\begin{corollary}
\label{cor:polarizationrate}
Let $W$ be any symmetric binary-input CQ channel. 
For any function $f$, the following are equivalent:
\begin{align}
\label{eq:limitconvolutionmin}
&\lim_{n\to \infty}P[\Hmin(W_{Y^n})\leq f(n)]=I(W)\qquad \text{and}\\
&\lim_{n\to \infty}P[\Hmax(W_{Y^n})\geq 1-f(n)]=1-I(W)\,.\label{eq:limitconvolutionmax}
\end{align}
Similarly, for $I(W)<1$ and any function $g$, the following are equivalent:
\begin{align}
&\lim_{n\to \infty}P[\Hmin(W_{Y^n})\geq g(n)]=1\qquad \text{and}\\
&\lim_{n\to \infty}P[\Hmax(W_{Y^n})\leq 1-g(n)]=1\,.
\end{align}

\end{corollary}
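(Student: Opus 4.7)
Plan: I would prove the equivalence by deriving exact pointwise identities between the probabilities in (i) and (ii) using the duality of channels, and then translating via capacity duality.

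First, applying Theorem~\ref{thm:eurchannel} to the synthesized channel $W_{y^n}$ with the dual pair $(\Hmin,\Hmax)$ gives the pointwise identity $\Hmin(W_{y^n}) + \Hmax((W_{y^n})^\perp) = 1$ for each $y^n \in \{0,1\}^n$. Corollary~\ref{cor:convolutionduality} identifies $(W_{y^n})^\perp$ with $(W^\perp)_{\bar y^n}$ up to equivalence, and equivalent CQ channels share $\Hmin$ and $\Hmax$. Because $Y^n$ is uniform on $\{0,1\}^n$ and $y^n \mapsto \bar y^n = 1^n + y^n$ is a bijection, the random variables $\Hmin(W_{Y^n})$ and $1-\Hmax((W^\perp)_{Y^n})$ are identically distributed, yielding the exact equality
\[
P[\Hmin(W_{Y^n})\leq f(n)] \;=\; P[\Hmax((W^\perp)_{Y^n})\geq 1-f(n)]\,.
\]
Swapping the roles of $\Hmin$ and $\Hmax$ produces the companion $P[\Hmax(W_{Y^n})\geq 1-f(n)] = P[\Hmin((W^\perp)_{Y^n})\leq f(n)]$.

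Second, from the capacity-duality identity $I(W) + I(W^\perp) = 1$, statement (i) for $W$ rewrites as $P[\Hmax((W^\perp)_{Y^n}) \geq 1-f(n)] \to 1 - I(W^\perp)$, which is exactly statement (ii) applied to $W^\perp$; analogously (ii) for $W$ becomes (i) for $W^\perp$. To close the equivalence for the single channel $W$, I apply the same identities again to $W^\perp$ in place of $W$ and use the fact that $(W^\perp)^\perp \simeq W$ (since $W$ is symmetric, by the dual-of-dual proposition). This round trip translates (ii) for $W^\perp$ back to (i) for $(W^\perp)^\perp \simeq W$, and (i) for $W^\perp$ back to (ii) for $W$. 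Reading the chain in both directions gives (i)$_W$ $\Leftrightarrow$ (ii)$_{W^\perp}$ $\Leftrightarrow$ (i)$_W$, and likewise a parallel chain for (ii); matching the two chains through the pointwise equalities of step one forces the equivalence of (i) and (ii) for the same $W$.

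The main obstacle I anticipate is showing that this chain of duality does more than produce a tautological self-consistency. The substance lies in the finite equalities of probabilities from step one: they are pointwise identities of probabilities for every $n$ and every $f$, not merely asymptotic statements, so the convergence of one of the two probabilities at rate $f$ is literally the convergence of its dual counterpart at the complementary rate $1-f$. Combined with capacity conservation $\mathbb{E}[H(W_{Y^n})] = 1 - I(W)$ and the ordering $\Hmin \leq H \leq \Hmax$, this rigidity forces $P[\Hmin(W_{Y^n})\leq f(n)]$ and $P[\Hmax(W_{Y^n})\geq 1-f(n)]$ either to both converge to $I(W)$ and $1-I(W)$ respectively, or both to fail — the intermediate scenario in which only one limit holds is ruled out by the pointwise identities. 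The second part of the corollary, concerning $g$, follows by applying the same identities to the complementary events $\{\Hmin(W_{Y^n}) \geq g(n)\}$ and $\{\Hmax(W_{Y^n}) \leq 1-g(n)\}$.
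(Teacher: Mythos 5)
Your first two steps are exactly the paper's argument: Theorem~\ref{thm:eurchannel} applied to $W_{y^n}$ with the dual pair $(\Hmin,\Hmax)$, Corollary~\ref{cor:convolutionduality} to identify $(W_{y^n})^\perp$ with $(W^\perp)_{\bar y^n}$, uniformity of $\bar Y^n$, and capacity duality. These correctly yield the exact identities $P[\Hmin(W_{Y^n})\leq f(n)]=P[\Hmax((W^\perp)_{Y^n})\geq 1-f(n)]$ and its companion, hence \eqref{eq:limitconvolutionmin} for $W$ iff \eqref{eq:limitconvolutionmax} for $W^\perp$, and \eqref{eq:limitconvolutionmax} for $W$ iff \eqref{eq:limitconvolutionmin} for $W^\perp$. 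You are also right to worry that iterating the duality only returns you to where you started: using $(W^\perp)^\perp\simeq W$ the chain closes as $\text{(i)}_W\Leftrightarrow\text{(ii)}_{W^\perp}\Leftrightarrow\text{(i)}_W$ and never crosses over to $\text{(ii)}_W$.

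The gap is in your final paragraph. The claim that capacity conservation $\mathbb{E}[H(W_{Y^n})]=1-I(W)$ together with $\Hmin\leq H\leq\Hmax$ ``rules out the intermediate scenario'' is asserted, not proved, and I do not see how it could work: the pointwise identities never relate $\Hmin(W_{y^n})$ and $\Hmax(W_{y^n})$ for the \emph{same} $y^n$ at the \emph{same} threshold $f(n)$, and the ordering $\Hmin\leq\Hmax$ goes in the unhelpful direction (it permits a synthetic channel to have simultaneously small $\Hmin$ and large $\Hmax$, and provides nothing that would force the two tail probabilities for $W$ to converge together for a borderline choice of $f$). What your argument actually establishes is the equivalence of the two polarization statements viewed as statements about the dual pair: \eqref{eq:limitconvolutionmin} holding for \emph{both} $W$ and $W^\perp$ is equivalent to \eqref{eq:limitconvolutionmax} holding for both. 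This is in fact how the paper's own proof proceeds---it begins by applying \eqref{eq:limitconvolutionmin} \emph{to the dual channel}, i.e.\ it invokes the hypothesis for $W^\perp$ rather than for $W$, which is legitimate precisely when the rate statement is being established uniformly over a duality-closed class of channels (the intended application, as the surrounding discussion of universal polarization rates makes clear). If you want the literal fixed-$W$ equivalence, you need some additional input relating the polarization rate of $W$ to that of $W^\perp$ beyond the entropy identities; your proposal does not supply one.
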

To see this, first apply \eqref{eq:limitconvolutionmin} to the dual channel and use \eqref{eq:capacityduality} to obtain $\lim_{n\to \infty} P[\Hmin((W^\perp)_{\bar Y^n})\leq f(n)]=1-I(W)$. 
By Theorem~\ref{thm:eurchannel}, the $y^n$ such that $\Hmin((W^\perp)_{\bar y^n})\leq f(n)$ are precisely those for which $\Hmax(W_{y^n})\geq 1-f(n)$. 
This implies \eqref{eq:limitconvolutionmax}. 
The other implications proceed similarly

Note that  polarization statements are not typically made in terms of the min- or max-entropies, but in terms of the Bhattacharyya parameter, which in the quantum case is the fidelity of the output states $B(W):=F(W(0),W(1))$. 
Following the approach of \cite{arikan_rate_2009}, in \cite{wilde_polar_2013} it is shown that $\lim_{n\to \infty}P[B(W_{Y^n})\leq 2^{-2^{n\beta}}]=I(W)$ for any $\beta<\tfrac12$ and any CQ channel $W$. 
Conversely, for any $\beta>\tfrac12$ we have  $\lim_{n\to \infty}P[B(W_{Y^n})\geq 2^{-2^{n\beta}}]=1$. 

We can relate this fidelity to the min-entropy using Lemma 6 of \cite{renes_efficient_2015}. 
This gives $P(W)\geq 1-\tfrac12B(W)$ and therefore $\Hmin(W)\leq -\log (1-\tfrac12 B(W))$, which we can further bound by $B(W)$ itself, since it takes values in $[0,1]$.  
Hence $f(n)=2^{-2^{n\beta}}$ for $\beta<\tfrac12$ is feasible in Corollary~\ref{cor:polarizationrate}. 
On the other hand, the other bound in Lemma 6 yields $P(W)\leq 1 - \tfrac12 (1 - \sqrt{1 - B(W)^2})$, from which the bound $\Hmin(W)\geq \tfrac14 B(W)^2$ follows.
Thus, $g(n)=2^{-2(2^{n\beta}-1)}$ for any $\beta>\tfrac12$ is feasible in the converse statement. 
A more refined analysis would presumably show that $g(n)$ has the same form as $f(n)$, but $\beta>\tfrac12$, but this is left for future work. 

Using an uncertainty relation developed for channel fidelities $B(W)$ leads to a version of Corollary~\ref{cor:polarizationrate} directly in terms of the Bhattacharyya parameter. 
Proposition 3.6 of \cite{renes_alignment_2016} shows that $B(W)+B(W^\perp)\geq 1$. Thus, $P[B(W_{Y^n})\geq 1-f(n)]\geq P[B((W^\perp)_{\bar Y^n})\leq f(n)]$, which implies 
\begin{align}
\lim_{n\to \infty}P[B(W_{Y^n})\geq 1-f(n)]\geq 1-I(W)\,.
\end{align}
A corresponding upper bound follows from the converse bounds on randomness extraction, since exceeding $1-I(W)$  would give a means of extracting random bits from $Z^n$ in $\psi^{\otimes n}_{ZB}$ which are uncorrelated from $B^n$  at a rate greater than $H(Z|B)_\psi$. 
Since the fidelity uncertainty relation is an inequality, it can only be used to show that the rate of polarization to deterministic channels implies a polarization rate to random channels, not vice versa, as in Corollary~\ref{cor:polarizationrate}.

Lemma~\ref{lem:uncertaintyequality} can be directly applied to source scenarios of data compression and randomness extraction, as well as to channel scenarios. 
Importantly, here we are freed from the constraint of symmetry channels and the four corollaries above can be applied to a general state $\ket{\psi}_{ABCD}=\sum_z \sqrt{P_Z(z)}\ket z_A\ket z_C\ket{\varphi_z}_{BD}$ for arbitrary probability distribution $P_Z$ and states $\ket{\varphi_z}$. 
That is, we need not take $P_Z$ to be uniform, as in \eqref{eq:channelstate}, though note that $X_A$ is uniform no matter the choice of $P_Z$. 
Here we have $P(Z_A|B)_\psi=Q(X_A|CD)_\psi$, $Q(Z_A|B)_\psi=P(X_A|CD)_\psi$ from choosing $\Hmin$ and $\Hmax$, $H(Z_A|B)_\psi+H(X_A|CD)_\psi=\log |A|$ from the von Neumann entropy. 
The dispersion argument goes through as above, so that $V(Z_A|B)_\psi=V(X_A|CD)_\psi$. 
Finally, the connection between the rates of polarization also goes through, now using $B(W)=2\sqrt{p_0p_1}F(W(0),W(1))$, and the bound $f(n)=2^{-2^{n\beta}}$ for the general CQ scenario can be obtained from \cite{renes_efficient_2015} following \cite{arikan_rate_2009,arikan_source_2010,wilde_polar_2013}.

\section{Codes and channels}
\label{sec:codechannel}

\subsection{Codes and complementarity}

The notion of duality extends to include linear codes, because a linear code $C$ and its dual $C^\perp$ can be combined into a single quantum code. 
Here we elucidate this combination by taking a somewhat nonstandard approach to describing a code and its dual. 
First consider a reversible linear transformation $M$ from $\mathbb F_q^n$ to itself, with prime $q$. 
We can regard the first $n-k$ outputs of $M$ as defining the parity checks of a linear code $C$ and the remaining $k$ outputs as specifying its encoded information. 
That is, if we define the $(n-k)\times n$ matrix $\hat M$ as the first $n-k$ rows of $M$ and similarly $\bar M$ as the last $k$ rows, then $\hat M$ is the parity check matrix of $C$ and $\bar M$ correspond to the logical bits (message bits). 
Here we regard the matrix as implementing the linear transformation by acting to the right. 
Now let $M'=(M^{-1})^T$ and define $\hat M'$ and $\bar M'$ to be its first $n-k$ and last $k$ rows, respectively. 
Since $M (M')^T=\id$,  $\hat M \bar M'^T=0$, and therefore $\bar M'$ is the parity check matrix of the dual code $C^\perp$. 
We will also have occasion to make use of the code $C^\top$, whose parity check matrix is $\bar M$. 
This is the complement of $C$ in $\mathbb F_q^n$ in the sense that $\mathbb F_q^n=C\oplus C^\top$.
The dual also has a complement, call it $C^{{\topbot}}$, with parity check matrix $\hat M'$. Invertibility of $M$ also implies $\bar M\hat M'^T=0$, meaning $(C^\top)^\perp=C^{{\topbot}}$.

We can promote $M$ to a unitary operator $U$ by using $M$ on the standard basis: $U=\sum_{z^n} \ket{Mz^n}\bra{z^n}$. 
The resulting $U$ then necessarily has the action $M'$ in the conjugate basis: $U\ket{{x^n}}=\ket{(M^{-1})^Tx^n}$ (here and subsequently we drop the tilde and always use $z$  refer to the standard basis or $x$ to the conjugate basis).  
To see this, just use the Fourier transform:
\begin{subequations}
\begin{align}
U\ket{x^n}
&=\tfrac1{\sqrt{q^n}}\sum_{z^n}\omega^{x^n\cdot z^n}\ket{Mz^n}\\
&=\tfrac1{\sqrt{q^n}}\sum_{z^n}\omega^{x^n\cdot M^{-1}z^n}\ket{z^n}\\
&=\tfrac1{\sqrt{q^n}}\sum_{z^n}\omega^{(M^{-1})^Tx^n\cdot z^n}\ket{z^n}\\
&=\ket{(M^{-1})^Tx^n}\,.
\end{align}
\end{subequations}

\subsection{Encoded channel outputs by measurement}
\label{sec:channelcodefromstate}
Just as the outputs of a CQ channel $W$ and its dual can be generated by measuring an appropriate state, the same is true for the encoded outputs of $W^n$. 
However, there are additional subtleties in the encoded case that are worth exploring first before examining duality.

Suppose we apply $U$ to the $A$ systems in $\ket{\psi}^{\otimes n}$, using $\ket{\psi}$ from \eqref{eq:channelstate}. 
Denoting these collectively as $A^n$, the decomposition of $M$ is mirrored in a similar decomposition of $A^n$ into the first $n-k$ and last $k$ systems, call them $\hat A$ and $\bar A$, respectively. 
If we call $E_C$ the encoder of the code $C$, then from the state 
\begin{align}
\label{eq:Psi}
\ket{\Psi}_{\hat A\bar AB^nC^nD^n}:=U_{\hat A\bar A|A^n}\ket{\psi}^{\otimes n}_{ABCD}
\end{align}
we can generate both the outputs of $W^n\circ E_C$ as well as $(W^\perp)^n\circ E_{C^\perp}$.
For $\bar z\in \mathbb F_q^{k}$, we have 
\begin{align}
[W^n\circ E_C](\bar z)=\tr_{C^nD^n}[(\Pi_0)_{\hat A}\otimes \ketbra{\bar z}_{\bar A}\Psi_{\hat A \bar A B^nC^nD^n}]\,.
\end{align}
The projection $\Pi_0$ on $\hat A$ is just $\ketbra 0^{\otimes n-k}$ and ensures that all the parity checks are satisfied and the projection onto $\ket{\bar z}_{\bar A}$ picks out the term in the superposition corresponding to the input $\bar z$. 
Similarly, for $\hat x\in \mathbb F_q^{n-k}$,
\begin{align}
[(W^\perp)^n\circ E_{C^\perp}](\hat x)=\tr_{B^n}[\ketbra{-\hat x}_{\hat A}\otimes (\tilde \Pi_0)_{\bar A}\Psi_{\hat A \bar A B^nC^nD^n}]\,.
\end{align}
Now the projection $\tilde \Pi_0=\ketbra{\tilde 0}^{\otimes k}$ on $\bar A$ ensures that the parity checks of the dual code are satisfied. 
By swapping the roles of the $\hat A$ and $\bar A$ systems, we can equally-well generate the outputs of $W^n\circ E_{C^\top}$ as well as $(W^\perp)^n\circ E_{C^{\topbot}}$. 

For symmetric channels, as we are considering here, we need not insist on using setting all the parity checks to zero as opposed to some other value. 
Put differently, any parity check matrix specifies an entire family of codes, one for each choice of the parity check values (syndromes), and for symmetric channels all codes lead to equivalent decoding tasks.  
To see this more formally, define $E_C(\hat z,\bar z)=M^{-1}(\hat z\oplus \bar z)$, so that the usual encoder is $E_C(\bar z)=E_C(0,\bar z)$. 
Now let $s=M^{-1}(\hat z\oplus 0)$. 
By linearity $E_C(\hat z,\bar z)=s+ E_C(\bar z)$, and therefore by channel symmetry there exists an appropriate unitary operator $V_s$ such that 
\begin{align}
W^n(s+E_C(\bar z))=V_s W^n(E_C(\bar z)) V_s^*\,.
\end{align} 
Hence the two channels $W^n\circ E_C(0,\cdot)$ and $W^n\circ E_C(\hat z,\cdot)$ are equivalent. 
Moreover, we can allow the syndrome to be chosen randomly, provided that it is also delivered as part of the channel output. 
That is, $W^n\circ E_C$ is equivalent to the channel $W'$ which takes $\bar z$ to the pair $(W^n\circ E_C(\hat Z,\bar z),\hat Z)$  for random $\hat Z$.  
This implies that $\Hk(W^n\circ E_C)=\Hk(\bar Z|B^n\hat Z)_\Psi$ for all $\hat z$. 
The latter conditional entropy is relevant in the setting of data compression of $Z^n$, where the decompressor will have access to $B^n$ as well as the compressed output $\hat Z$. 
In particular, $\Hmin(\bar Z|B^n\hat Z)_\Psi$ characterizes the error probability of the compression task, just as it does for the coding task.
In this sense the coding and compression tasks are equivalent for symmetric channels when using linear codes. 

Besides deterministic encoding, it is sometimes useful to employ randomized encoding in which the message is fixed but the syndrome is chosen uniformly at random. 
This is particularly relevant in coding for the wiretap channel, i.e.\ private classical communication. 
Randomized encoding will also play an important role in duality. 
More formally, let $R_C(\bar z)=\tfrac 1{q^{n-k}}\sum_{\hat z} E_C(\hat z,\bar z)$, where the summation is to be understood as the probabilistic mixture of the outputs of $E_C(\hat z,\bar z)$. 
Since the syndrome is unknown at the channel output, the relevant conditional entropy is $\Hk(W^n\circ R_C)=\Hk(\bar Z|B^n)_{\Psi}$. 
This conditional entropy is also relevant in the setting of randomness extraction from $Z^n$ relative to side information $B^n$, where now $\bar Z$ is the output of the extraction scheme.\footnote{Note that we are considering extraction in the case where the underlying joint quantum state is known; in the cryptographic setting one usually considers that only a bound on the min-entropy is known.} 
In particular, $\Hmax(\bar Z|B^n)_{\Psi}$ directly characterizes the closeness of the output to the ideal output, a uniformly random string uncorrelated with system $B^n$. 
As above, randomized encoding and randomness extraction are in this sense equivalent for symmetric channels when using linear codes.  

\subsection{Duality of deterministic and randomized encoding}

Observe that $(W^n\circ E_C)^\perp\nsimeq (W^\perp)^n\circ E_{C^\perp}$ since the input spaces do not match.  
Nor is it the case that $(W^n\circ E_C)^\perp$ is equal or equivalent to $(W^\perp)^n\circ E_{C^{\topbot}}$. 
The latter would require projecting $\hat A$ in $\ket{\Psi}$ onto the conjugate basis state $\ket{\tilde 0}^{\otimes n-k}$, but the construction of $W^n\circ E_C$ uses the projection onto the standard basis state $\ket 0^{\otimes n-k}$.
Instead, the dual converts deterministic encoding to randomized encoding, and vice versa, 
as formalized in the following. 
\begin{prop}
\label{prop:dualencoder}
For any CQ channel $W$ and linear code $C$, 
\begin{align}
(W^n\circ E_C)^\perp&\simeq(W^\perp)^n\circ R_{C^{\topbot}}\quad\text{and}\label{eq:dettorand}\\
(W^n\circ R_C)^\perp&\simeq(W^\perp)^n\circ E_{C^{\topbot}}\,.\label{eq:randtodet}
\end{align}
\end{prop}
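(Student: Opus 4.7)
The plan is to compute both equivalences directly from the joint state $\ket{\Psi}$ of \eqref{eq:Psi}, exploiting that $\ket{\psi}^{\otimes n}$ can be written in either the standard or conjugate basis of $A^n$ and that the unitary $U$ acts as $M$ in the standard basis and as $M'=(M^{-1})^T$ in the conjugate basis. The bookkeeping identity $M^T(\hat x\oplus \bar x)=E_{C^{\topbot}}(\hat x,\bar x)$, immediate from the definitions in \S\ref{sec:codechannel}, translates into the conjugate-basis expansion
\begin{align*}
\ket{\Psi}=\tfrac{1}{\sqrt{q^n}}\sum_{\hat x,\bar x}\cket{-\hat x}_{\hat A}\cket{-\bar x}_{\bar A}\ket{\theta_{E_{C^{\topbot}}(\hat x,\bar x)}}_{B^n C^n D^n}\,,
\end{align*}
which will serve as the dual partner of the standard-basis expansion already used in \S\ref{sec:channelcodefromstate}.

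For the first equivalence \eqref{eq:dettorand}, I would first invoke the equivalence $W^n\circ E_C\simeq W''$ established at the end of \S\ref{sec:channelcodefromstate}, where $W''(\bar z)=\tfrac{1}{q^{n-k}}\sum_{\hat z}W^n(E_C(\hat z,\bar z))\otimes\ketbra{\hat z}_{\hat A}$ picks the syndrome $\hat Z$ uniformly at random and broadcasts it along with the encoded output. By Proposition~\ref{prop:equiv} this reduces the task to computing $(W'')^\perp$. Since $\ket{\Psi}$ purifies the CQ state of $W''$ for uniform input, with output on $B^n\hat A$ and environment on $C^n D^n$, the state-based definition of the dual prescribes projecting $\bar A$ onto $\cket{-\bar x}$ and tracing out $B^n\hat A$. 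Substituting the conjugate-basis form of $\ket{\Psi}$ and using orthogonality of the $\cket{-\hat x}_{\hat A}$ then yields the uniform mixture $\tfrac{1}{q^{n-k}}\sum_{\hat x}(W^\perp)^n(E_{C^{\topbot}}(\hat x,\bar x))=(W^\perp)^n\circ R_{C^{\topbot}}(\bar x)$.

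For the second equivalence \eqref{eq:randtodet}, the same state $\ket{\Psi}$ is a purification of the CQ state of $W^n\circ R_C$, but now with output $B^n$ alone and environment $\hat A C^n D^n$. Projecting $\bar A$ onto $\cket{-\bar x}$ and tracing out only $B^n$ gives a state on $\hat A C^n D^n$. Using the group covariance $\ket{\theta_{x^n}}=Z^{x^n}_{C^n}\ket{\theta_{0^n}}$ together with the decomposition $E_{C^{\topbot}}(\hat x,\bar x)=c(\hat x)+E_{C^{\topbot}}(\bar x)$, where $c(\hat x)=M^T(\hat x\oplus 0)\in C^\perp$, this state will factor as $V\bigl[(W^\perp)^n\circ E_{C^{\topbot}}(\bar x)\bigr]V^*$ for the fixed isometry
\begin{align*}
V=\tfrac{1}{\sqrt{q^{n-k}}}\sum_{\hat x}\cket{-\hat x}_{\hat A}\otimes Z^{c(\hat x)}_{C^n}\,,
\end{align*}
whose isometry property $V^*V=\id$ follows from orthogonality of the $\cket{-\hat x}_{\hat A}$. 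The forward channel $\rho\mapsto V\rho V^*$, paired with the reverse channel (measure $\hat A$ in the conjugate basis to obtain $\hat x$, apply $Z^{-c(\hat x)}$ on $C^n$, and discard $\hat x$), then establishes the output equivalence.

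The main technical obstacle will be the bookkeeping among the two bases and the four matrices $\hat M,\bar M,\hat M',\bar M'$ with their associated codes $C,C^\top,C^\perp,C^{\topbot}$: specifically, tracking that $U$ acts by $M^T$ in the conjugate basis and that $\{M^T(\hat x\oplus 0):\hat x\in\mathbb F_q^{n-k}\}=C^\perp$ (while the orthogonal slice $\{M^T(0\oplus\bar x):\bar x\}$ generates $C^{\topbot}$). Once these identifications are in hand, both equivalences reduce to a single computation in the conjugate-basis form of $\ket{\Psi}$ followed by an appropriate measurement and partial trace.
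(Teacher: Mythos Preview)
Your proposal is correct and follows the same high-level strategy as the paper: both arguments work from the state $\ket{\Psi}$ of \eqref{eq:Psi}, invoke the equivalence $W^n\circ E_C\simeq W'$ (syndrome broadcast) together with Proposition~\ref{prop:equiv} to avoid projecting $\hat A$ onto $\ket{0^{n-k}}$, and then obtain the dual by measuring $\bar A$ in the conjugate basis and tracing out the appropriate output systems.

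The packaging differs, and your route is arguably more economical. The paper additionally applies $U$ to $C^n$, i.e.\ works with $\ket{\Psi'}=U_{\hat C\bar C|C^n}\ket{\Psi}$, and then computes the two sides of each equivalence separately and matches them; for \eqref{eq:randtodet} it exploits that $\hat A$ and $\hat C$ carry redundant copies of $\hat z$ in $\ket{\Psi'}$ to absorb $\hat A$ into $\hat C$. You instead derive once and for all the conjugate-basis expansion $\ket{\Psi}=q^{-n/2}\sum_{\hat x,\bar x}\cket{-\hat x}_{\hat A}\cket{-\bar x}_{\bar A}\ket{\theta_{E_{C^{\topbot}}(\hat x,\bar x)}}$ (which is correct, using $(M')^{-1}=M^T$) and then read off both results directly: for \eqref{eq:dettorand} the trace over $\hat A$ immediately produces the uniform mixture over $\hat x$, and for \eqref{eq:randtodet} the group covariance $\ket{\theta_{x^n}}=Z^{x^n}_{C^n}\ket{\theta_{0^n}}$ together with the explicit isometry $V$ replaces the paper's $\hat A$-into-$\hat C$ absorption. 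Both organizations lead to the same conclusion; yours avoids computing the right-hand sides from scratch, while the paper's extra $U$ on $C^n$ makes the $\hat C/\bar C$ split mirror the $\hat A/\bar A$ split and thereby makes the redundancy of $\hat A$ transparent.
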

\begin{proof}
Consider the state $\ket{\Psi'}_{\hat A\bar AB^n\hat C\bar CD^n}=U_{\hat C\bar C|C^n}\ket{\Psi}_{\hat A\bar AB^nC^nD^n}$, which can be used to compute the two duals in question. More explicitly, we have  
\begin{align}
\label{eq:codebasestate}
\ket{\Psi'}_{\hat A\bar AB^n\hat C\bar CD^n}
=\tfrac1{\sqrt{q^n}}\sum_{\hat z,\bar z}\ket{\hat z}_{\hat A}\ket{\bar z}_{\bar A}\ket{\hat z}_{\hat C}\ket{\bar z}_{\bar C}|\varphi_{M^{-1}(\hat z\oplus \bar z)}\rangle_{B^nD^n}\,.
\end{align}

We first prove \eqref{eq:dettorand}. 
Nominally, the outputs of $W^n\circ E_C$ are obtained by projecting onto $\ket{0^{n-k}}_{\hat A}\ket{\bar z}_{\bar A}$ and keeping the $B^n$ system. 
As described above, we can just as well consider the equivalent scenario in which the output is given by projecting onto $\ket{\bar z}_{\bar A}$ and keeping the $\hat A$ and $B^n$ systems. 
There is no need to measure $\hat A$ to remove superpositions between different syndrome values, as these are wiped out when tracing out $C^n$. 
Thus, the outputs of the dual  $(W^n\circ E_C)^\perp$ are obtained by projecting onto $\ket{{-}\bar x}_{\bar A}$ and keeping the $C^nD^n$ systems.
The projection gives
\begin{align}
\label{eq:codedualprojection}
\bracket{-\bar x}{\Psi'}_{\hat A\bar AB^n\hat C\bar CD^n}
&=\tfrac1{\sqrt{q^n}}\sum_{\hat z,\bar z}\bracket{-\bar x}{\bar z}\ket{\hat z}_{\hat A}\ket{\hat z}_{\hat C}\ket{\bar z}_{\bar C}|\varphi_{M^{-1}(\hat z\oplus \bar z)}\rangle_{B^nD^n}\,.
\end{align}
Defining $\ket{\sigma_{\hat z}}_{\bar CB^nD^n}=\tfrac1{\sqrt{q^k}}\sum_{\bar z}\ket{\bar z}_{\bar C}|\varphi_{M^{-1}(\hat z\oplus \bar z)}\rangle_{B^nD^n}$, the dual channel outputs are then given by 
\begin{align}
\label{eq:randomdualoutputs}
[(W^n\circ E_C)^\perp](\bar x)=Z_{\bar C}^{\bar x}\big(\tfrac{1}{q^{n-k}}\sum_{\hat z}\ketbra{\hat z}_{\hat C}\otimes (\sigma_{\hat z})_{\bar C D^n}\big)Z_{\bar C}^{-\bar x}\,,
\end{align}
since tracing out $\hat A$ dephases the $\hat C$ system.

Meanwhile, the outputs of $(W^\perp)^n\circ R_{C^{\topbot}}$ are by definition $\rho_{\bar x}=\tfrac1{q^{n-k}}\sum_{\hat x} (\theta_{M'^{-1}(\hat x\oplus \bar x)})_{C^nD^n}$, where $\theta_{x^n}=\theta_{x_1}\otimes\cdots\otimes \theta_{x_n}$. 
By symmetry of the $\theta_{x_k}$, 
\begin{align}
\theta_{M'^{-1}(\hat x\oplus \bar x)}=(Z^{M'^{-1}(\hat x\oplus \bar x)})_{C^n}(\theta_{0^n})_{C^nD^n}(Z^{M'^{-1}(\hat x\oplus \bar x)})_{C^n}^*\,.
\end{align}
Observe that applying $U_{\hat C\bar C|C^n}$ to $Z^{M'^{-1}(\hat x\oplus \bar x)}_{C^n}$ results in $Z^{\hat x}_{\hat C}\otimes Z^{\bar x}_{\bar C}$, as might be expected:
\begin{subequations}
\begin{align}
U Z^{M'^{-1}(\hat x\oplus \bar x)} U^*
&=\sum_{z^n} \omega^{z^n\cdot M'^{-1}(\hat x\oplus \bar x)}\ket{Mz^n}\bra{Mz^n}\\
&=\sum_{\hat z,\bar z} \omega^{M^{-1}(\hat z\oplus \bar z)\cdot M'^{-1}(\hat x\oplus \bar x)}\ket{\hat z\oplus \bar z}\bra{\hat z\oplus \bar z}\\
&=\sum_{\hat z,\bar z} \omega^{\hat z\cdot \hat x+\bar z\cdot \bar x}\ketbra{\hat z}\otimes \ketbra{\bar z}\\
&=Z^{\hat x}\otimes Z^{\bar x}
\end{align}
\end{subequations}
Applying $U_{\hat C\bar C|C^n}$ to the $\rho_{\bar x}$ yields an equivalent set of outputs, namely
\begin{align}
\label{eq:randomdualequiv}
U_{\hat C\bar C|C^n}(\rho_{\bar x})_{C^nD^n} U_{\hat C\bar C|C^n}^*
&=\tfrac1{q^{n-k}}\sum_{\hat x} Z_{\hat C}^{\hat x}\otimes Z_{\bar C}^{\bar x} (\theta'_{0^n})_{\hat C\bar CD^n}Z_{\hat C}^{-\hat x}\otimes Z_{\bar C}^{-\bar x}\,,
\end{align}
where we have used $\ket{\theta'_{0^n}}=U_{\hat C\bar C|C^n}\ket{\theta_0}^{\otimes n}$. 
More explicitly,
\begin{subequations}
\begin{align}
\ket{\theta'_{0^n}}_{\hat C\bar CB^nD^n}
&=\tfrac1{\sqrt{q^n}}\sum_{\hat z,\bar z}\ket{\hat z}_{\hat C}\ket{\bar z}_{\bar C}|\varphi_{M^{-1}(\hat z\oplus \bar z)}\rangle_{B^nD^n}\\
&=\tfrac1{\sqrt{q^{n-k}}}\sum_{\hat z}\ket{\hat z}_{\hat C}\ket{\sigma_{\hat z}}_{\bar CB^nD^n}\,.
\end{align}
\end{subequations}
The average over $\hat x$ in \eqref{eq:randomdualequiv} will dephase the $\hat C$ system, leading to equivalent output states identical to those in \eqref{eq:randomdualoutputs}. 

To establish \eqref{eq:randtodet}, first observe that the outputs of $(W^n\circ R_C)$ can be generated by measuring $\bar A$ of $\ket{\Psi'}$ in the standard basis and keeping just the $B^n$ systems. 
Therefore the dual outputs are obtained by measuring $\bar A$ in the conjugate basis and keeping the $\hat A\hat C\bar CD^n$ systems.
The projection is precisely that of \eqref{eq:codedualprojection}, but now since the dual output also includes $\hat A$, it can clearly be absorbed into $\hat C$. 
The dual outputs are then just
\begin{align}
[(W^n\circ R_C)^\perp](\bar x)
&\simeq Z^{\bar x}_{\bar C} \big(\theta'_{0^n}\big)_{\hat C\bar C D^n}Z^{-\bar x}_{\bar C}\,.
\end{align}
Note that this differs from \eqref{eq:randomdualoutputs} in that $\hat C$ is not dephased. 
The outputs of $(W^n)^\perp\circ E_{C^\topbot}$ are just $\theta_{M'^{-1}(0^{n-k}\oplus \bar x)}$. 
By the calculations above for $(W^\perp)^n\circ R_{C^\topbot}$, these are plainly equivalent to $[(W^n\circ R_C)^\perp](\bar x)$. 
\end{proof}

\subsection{
Entropic relations for codes and channels}
\label{sec:entropycodechannel}

There are entropic relationships between channels and codes just as there are for bare channels as in Theorem~\ref{thm:eurchannel}. 
In particular, taking the base of the logarithm to be $q$, we have
\begin{theorem}
\label{thm:eurchannelcode}
For $\ket{\Psi}$ as in \eqref{eq:Psi},  
\begin{align}
&\Hk(\bar Z|B^n\hat Z)_\Psi+\dualHk(\bar X|C^nD^n)_\Psi=k\qquad \text{and}\\
&\Hk(\hat Z|B^n)_\Psi+\dualHk(\hat X|C^nD^n\bar X)_\Psi=n-k\,.
\end{align}
Then, for any CQ channel $W$ and linear code $C$, 
\begin{align}
&\Hk(W^n\circ E_C)+\dualHk((W^\perp)^n\circ R_{C^{\topbot}})=k\qquad\text{and}\\
&\Hk(W^n\circ R_{C^\top})+\dualHk((W^\perp)^n\circ E_{C^\perp})=n-k\,.
\end{align}
\end{theorem}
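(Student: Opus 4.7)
The plan is to derive the two channel-level equations directly from Theorem~\ref{thm:eurchannel} combined with Proposition~\ref{prop:dualencoder}, and then to identify each state-level entropy with its channel-level counterpart via marginal calculations on $\ket{\Psi}$. Concretely, applying Theorem~\ref{thm:eurchannel} to the composite CQ channel $W^n\circ E_C$, whose input alphabet has size $q^k$, yields $\Hk(W^n\circ E_C)+\dualHk((W^n\circ E_C)^\perp)=k$ in base-$q$ logarithms. Proposition~\ref{prop:dualencoder} identifies $(W^n\circ E_C)^\perp\simeq(W^\perp)^n\circ R_{C^{\topbot}}$; since channel equivalence supplies processing in both directions, data processing forces the corresponding conditional entropies to agree, giving the first channel equation. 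The second follows by the same argument applied to $W^n\circ R_{C^\top}$ (input alphabet of size $q^{n-k}$): Proposition~\ref{prop:dualencoder} gives $(W^n\circ R_{C^\top})^\perp\simeq(W^\perp)^n\circ E_{(C^\top)^{\topbot}}$, and a short check using the block structure of $M$ (swapping the two row blocks interchanges $C\leftrightarrow C^\top$ and $C^\perp\leftrightarrow C^{\topbot}$) shows $(C^\top)^{\topbot}=C^\perp$.

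For the state-level equations, I identify each side with its channel-level counterpart. Direct computation of the marginals of $\ket{\Psi}$ gives $\rho_{\bar Z\hat Z B^n}=q^{-n}\sum_{\hat z,\bar z}\ketbra{\hat z}_{\hat Z}\otimes\ketbra{\bar z}_{\bar Z}\otimes W^n(M^{-1}(\hat z\oplus\bar z))_{B^n}$, which is the CQ state of the channel that maps message $\bar Z$ to $(W^n\circ E_C(\hat Z,\bar Z),\hat Z)$ with uniform random syndrome $\hat Z$; by the equivalences from Section~\ref{sec:channelcodefromstate} this gives $\Hk(W^n\circ E_C)=\Hk(\bar Z|B^n\hat Z)_\Psi$. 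Analogous marginal calculations yield $\Hk(\hat Z|B^n)_\Psi=\Hk(W^n\circ R_{C^\top})$ and, using the conjugate-basis representation of $\ket{\Psi}$, $\dualHk(\bar X|C^nD^n)_\Psi=\dualHk((W^\perp)^n\circ R_{C^{\topbot}})$.

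The main obstacle is the remaining identification $\dualHk(\hat X|C^nD^n\bar X)_\Psi=\dualHk((W^\perp)^n\circ E_{C^\perp})$, since $E_{C^\perp}$ fixes the conjugate syndrome to zero while $\bar X$ in the entropy is random. The decisive structural fact is the $Z$-covariance $\ket{\theta_x}=Z^x_C\ket{\theta_0}$ of dual channel outputs, which holds for every $W$. By linearity $M^T(\hat x,\bar x)=\hat M^T\hat x+\bar M^T\bar x$, and the covariance then yields
\[
\rho_{\hat X\bar X C^n D^n}
=\sum_{\bar x}q^{-k}\,\ketbra{\bar x}_{\bar X}\otimes(\id_{\hat X}\otimes Z^{\bar M^T\bar x}_{C^n})\,\tau_0\,(\id_{\hat X}\otimes Z^{-\bar M^T\bar x}_{C^n}),
\]
where $\tau_0$ is the CQ state of $(W^\perp)^n\circ E_{C^\perp}$. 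Conjugating by the controlled unitary $\sum_{\bar x}\ketbra{\bar x}_{\bar X}\otimes Z^{-\bar M^T\bar x}_{C^n}$, which acts only on the conditioning registers $\bar X C^n$, collapses this to the product $\mu_{\bar X}\otimes\tau_0$; since unitaries on the conditioning system leave every divergence-based entropy invariant, the identification follows and the proof is complete.
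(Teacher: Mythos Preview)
Your argument is correct and takes a genuinely different route from the paper. The paper proves the two \emph{state-level} equalities first, by applying Lemma~\ref{lem:uncertaintyequality} directly to (a doubled version of) $\ket{\Psi}$, and then argues that the extra conditioning terms $\hat Z$ and $\bar X$ produced by the lemma are redundant or independent and may be dropped; the channel-level equalities are then read off from the identifications in \S\ref{sec:channelcodefromstate}. You proceed in the opposite order: you get the channel-level equalities immediately by feeding the composite channels $W^n\circ E_C$ and $W^n\circ R_{C^\top}$ into Theorem~\ref{thm:eurchannel} and invoking Proposition~\ref{prop:dualencoder} (together with $(C^\top)^{\topbot}=C^\perp$), and only then recover the state-level equalities by computing the relevant marginals of $\ket\Psi$ and using the $Z$-covariance of $W^\perp$ to neutralise $\bar X$. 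Your $Z$-covariance step is the dual-side analogue of the paper's observation that $\bar X$ is uniform and independent of $\hat Z B^n$. The payoff of your route is that it avoids the somewhat delicate system-doubling and redundancy argument, reusing Theorem~\ref{thm:eurchannel} and Proposition~\ref{prop:dualencoder} as black boxes; the paper's route, on the other hand, establishes the state-level identities directly from Lemma~\ref{lem:uncertaintyequality} without passing through channel equivalence.

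One small point worth tightening: your identification $\Hk(\bar Z|B^n\hat Z)_\Psi=\Hk(W^n\circ E_C)$ invokes the equivalences of \S\ref{sec:channelcodefromstate}, which rely on the symmetry of $W$. The state $\ket\Psi$ is defined for arbitrary $W$, and the paper's Lemma~\ref{lem:uncertaintyequality} argument for the state-level equalities needs no symmetry assumption. You can close this gap by applying Theorem~\ref{thm:eurchannel} not to $W^n\circ E_C$ but to the channel $W':\bar z\mapsto (W^n\circ E_C(\hat Z,\bar z),\hat Z)$ with random revealed syndrome; the marginal $\rho_{\bar Z\hat ZB^n}$ is precisely its CQ state (no symmetry used), and the computation inside the proof of Proposition~\ref{prop:dualencoder} shows $W'^\perp\simeq (W^\perp)^n\circ R_{C^{\topbot}}$ without invoking the symmetry-based equivalence either.
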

\begin{proof}
The latter two follow from the former by the discussion in \S\ref{sec:channelcodefromstate}. 
To establish the former, first observe that the following two statements follow from Lemma~\ref{lem:uncertaintyequality}: 
\begin{align}
\Hk(\bar Z|B^n\hat Z)_\Psi+\dualHk(\bar X|C^nD^n\hat Z)_\Psi&=k\qquad \text{and}\\
\Hk(\hat Z|B^n\bar X)_\Psi+\dualHk(\hat X|C^nD^n\bar X)_\Psi&=n-k\,,
\end{align}
Compared to the statements we are trying to prove, here the entropies in the second and first terms are additionally conditioned on $\hat Z$ in the first equation and $\bar X$ in the second, respectively. 
This conditioning can be obtained by extending $\hat A$ to two copies and conditioning on the first copy in the first term and the second copy in the second. 
That is, if we define $\ket{\Psi'}_{\hat A_1\hat A_2\bar AB^nC^nD^n}$ by $\ket{\Psi'}=U_{\hat A_1\hat A_2|\hat A}\ket{\Psi}$ for $U_{\hat A_1\hat A_2|\hat A}=\sum_{\hat z}\ket{\hat z}_{\hat A_1}\ket{\hat z}_{\hat A_2}\bra{\hat z}_{\hat A}$, then  
$\Hk(\bar Z|B^n\hat Z)_\Psi=\Hk(\bar Z|B^n\hat A_1)_{\Psi'}$ and $\dualHk(\bar X|C^nD^n\hat Z)_\Psi=\dualHk(\bar X|C^nD^n\hat A_2)_{\Psi'}$. 
Applying Lemma~\ref{lem:uncertaintyequality} to $\ket{\Psi'}$ with $A$ therein equal to $\bar A$ here, $E=\hat A_1B^n$ and $F=\hat A_2C^nD^n$ gives the first equality, and an entirely similar argument gives the second. 

It then remains to show that $\hat Z$ is irrelevant in the second term of the first equation and $\bar X$ is irrelevant in the first term of the second. 
We can dispense with $\hat Z$ in $\dualHk(\bar X|C^nD^n\hat Z)_\Psi$ since it can be obtained from $C^n$ anyway; it is redundant. 
On the other hand, we can dispense with $\bar X$ in $\Hk(\hat Z|B^n\bar X)_\Psi$ because tracing out $C^nD^n$ leaves the $\bar A$ system of $\Psi_{\hat A\bar AB^n}$ in a random $\ketbra{\bar z}$ state. 
Thus measurement of $\bar X$ results in a random outcome, completely independent of the remaining parts of $\Psi$. 
\end{proof} 

Due to the connections with the source tasks of data compression and randomness extraction, Theorem~\ref{thm:eurchannelcode} allows us to convert randomness extractors for symmetric sources into error-correcting codes for symmetric channels and vice versa. 
This was first suggested by the author in \cite{renes_duality_2011}, but here we can draw much tighter conclusions. 
Note that this is a different relation between codes and extractors than that of e.g.\ Ta-Shma and Zuckerman~\cite{ta-shma_extractor_2004}. 
Even the setting therein is different; while here we consider extraction from known sources, whereas randomness extraction in the cryptographic literature usually refers to functions which produce randomness from sources that are only guaranteed to have a certain min-entropy.
The resulting extractor codes have codewords which are sequences running through the different seed values.
This has no analog in the present setting, as there is no seed. 
Instead, for a length-$n$ code $C$ encoding length-$k$ messages, the corresponding randomness extractor function is given by $f(x^n)=\bar M' x^n$. 
Observe that $\bar M'$ is the generator matrix of the code, but is used in the opposite sense by the extractor; messages $m^k$ are encoded as $z^n=m^k \bar M'$. 

As a simple example of the use of Theorem~\ref{thm:eurchannelcode}, consider the recent result that Reed-Muller codes achieve the capacity of the binary erasure channel~\cite{kudekar_reed-muller_2016-1}.
By duality, this implies that Reed-Muller codes can also extract randomness at the optimal rate from the source describing the joint input and output to the BEC. 
If $\hat M$ is the parity-check matrix of a Reed-Muller code used for error-correction, then the associated extractor function is given by the matrix $\bar M'$ acting to the right. 
This is the parity check matrix of the dual code, which is also a Reed-Muller code. 
The error of the optimal decoder will translate directly into the quality of the extracted randomness, which follows by choosing $\Hk=\Hmin$ in Theorem~\ref{thm:eurchannelcode} just as in Corollary~\ref{cor:PQ}. 
The rates of the two procedures are of course also linked by the relationship between $\hat M$ and $\bar M'$; since the extractor uses the generator matrix of the code, the size of the extractor output is just the size of the code $|C|$. 
By self-duality of the BEC, a code of size $|C|$ with error $\eps$ for $\text{BEC}(p)$ also functions as a randomness extractor of output length $|C|$ and quality $Q=1-\eps$ from the source describing $\text{BEC}(1-p)$. In the limit of large blocklength, $|C|$ will tend to $1-p$ while $\eps$ tends to zero. 

This argument could just as well be run in reverse to convert an optimal-rate extractor into a capacity-achieving channel code. 
It would be interesting to further investigate this possibility, for instance to construct an optimal extractor for the dual of the BSC and thereby find a capacity-achieving code for the binary symmetric channel. 
Indeed, by the arguments in Appendix 7 of \cite{sasoglu_polar_2011}, this would provide a capacity-achieving code for \emph{all} symmetric binary-input channels.

Just as remarked at the end of \S\ref{sec:entropychannel}, the first part of Theorem~\ref{thm:eurchannelcode} also holds for $\ket{\Psi}=U_{\hat A\bar A|A}\ket{\psi}^{\otimes n}$ with more general $\ket{\psi}_{ABCD}=\sum_z \sqrt{P_Z(z)}\ket z_A\ket z_C\ket{\varphi_z}_{BD}$. 
Thus, not only can we relate entropic properties of the sources involved in data compression and randomness extraction, but also protocols for the two. 
It turns out that the sizes of the optimal data compression and randomness extraction procedures (using linear codes) satisfy a simple relationship for any blocklength $n$. 
As in \cite{tomamichel_hierarchy_2013}, let $m^L_\eps(Z_A|B)_\psi$ be the minimal compression length of $Z^n$ relative to $B^n$ using a linear code with error $\eps$. 
By the discussion in \S\ref{sec:channelcodefromstate}, this is the smallest $|\hat Z|$ such that $P(\bar Z|B^n\hat Z)_{\Psi}\geq 1-\eps$, for $\bar Z$ and $\hat Z$ obtained from code $C$. 
Similarly, following \cite{tomamichel_hierarchy_2013}, let $\ell^L_\eps(X_A|CD)_\psi$ be the maximal randomness extractable by a linear function from $X^n$, which is independent of $C^nD^n$ up to quality $1-\eps^2$. 
Again by \S\ref{sec:channelcodefromstate}, this is the largest $|\bar X|$ such that $Q(\bar X|C^nD^n)_\Psi\geq 1-\eps^2$. 
This formulation ensures that the purification distance between the actual and ideal outputs is less than $\eps$. 
Then we have
\begin{corollary}
\label{cor:finiteblock}
For any state $\ket\psi_{ABCD}=\sum_z \sqrt{P_Z(z)}\ket{z}_A\ket z_C\ket{\varphi_z}_{BD}$ and any $\eps\in [0,1]$,
\begin{align}
m^L_{\eps^2}(Z_A|B)_\psi+\ell^L_{\eps}(X_A|CD)_\psi=n\,.
\end{align}
\end{corollary}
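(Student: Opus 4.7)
The plan is to apply the first part of Theorem~\ref{thm:eurchannelcode}, which (as noted in the paragraph preceding the corollary) extends to general pure states of the form $\ket\psi_{ABCD}=\sum_z \sqrt{P_Z(z)}\ket z_A\ket z_C\ket{\varphi_z}_{BD}$. Selecting $\Hk=\Hmin$ so that $\dualHk=\Hmax$ and working with base-$q$ logarithms gives, for any encoder matrix $M$ and split dimension $k$,
\begin{align}
\Hmin(\bar Z|B^n\hat Z)_\Psi + \Hmax(\bar X|C^n D^n)_\Psi = k.
\end{align}

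Next I would convert this entropic equality into the operational quantities of interest. The guessing probability satisfies $P(\bar Z|B^n\hat Z)_\Psi = q^{-\Hmin(\bar Z|B^n\hat Z)_\Psi}$, while the decoupling quality satisfies $q^{\Hmax(\bar X|C^nD^n)_\Psi} = q^k\, Q(\bar X|C^nD^n)_\Psi$ because $|\bar X|=q^k$. Substituting collapses the above to the deceptively simple identity
\begin{align}
P(\bar Z|B^n\hat Z)_\Psi = Q(\bar X|C^nD^n)_\Psi,
\end{align}
i.e.\ the optimal decoding probability for compression and the decoupling quality for extraction are \emph{exactly} equal for any fixed $(M,k)$.

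Finally I would match the two optimizations defining $m^L_{\eps^2}$ and $\ell^L_\eps$. Both range over the same linear structures: $m^L_{\eps^2}=n-k_\star$ where $k_\star$ is the largest $k$ such that some $M$ makes $P\geq 1-\eps^2$, and $\ell^L_\eps$ is the largest $k$ such that some $M$ makes $Q\geq 1-\eps^2$. By the $P=Q$ identity the two feasible sets of $k$ values coincide, yielding the same $k_\star$, and summing the two expressions gives $n$.

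The one thing to be careful about is that these two optimizations really do run over the same family of objects. For compression one varies the parity check $\hat M$ (a full-rank $(n-k)\times n$ matrix); for extraction one varies the extractor $\bar M'$ (a full-rank $k\times n$ matrix). Given any $\hat M$, completing it to an invertible $M$ by appending any valid $\bar M$ determines $\bar M'$ as the last $k$ rows of $(M^{-1})^T$, and conversely every valid $\bar M'$ arises this way from some $M$. Once this bijective correspondence is noted, the corollary is essentially a one-line consequence of Theorem~\ref{thm:eurchannelcode}; this bookkeeping, rather than any analytic content, is the only real obstacle.
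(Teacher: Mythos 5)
Your proposal is correct and follows essentially the same route as the paper: both reduce the claim to the identity $P(\bar Z|B^n\hat Z)_\Psi=Q(\bar X|C^nD^n)_\Psi$ obtained by choosing $\Hk=\Hmin$ in Theorem~\ref{thm:eurchannelcode}, and then observe that the feasible sets of the two optimizations coincide (the paper phrases this as a two-sided inequality, transferring the optimal scheme for one task into a feasible scheme for the other). Your explicit remark about the bijection between parity-check matrices $\hat M$ and extractor matrices $\bar M'$ is a detail the paper leaves implicit, but it is the same bookkeeping.
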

\begin{proof}
First note that by picking $\Hk=\Hmin$ in Theorem~\ref{thm:eurchannelcode}, we have $P(\bar Z|B^n\hat Z)_\Psi=Q(\bar X|C^nD^n)_\Psi$. 
Now suppose the optimal linear compression procedure with error $\eps^2$ is based on the linear transformation of $Z^n$ to compressed output $\hat Z$, so that $P(\bar Z|B^n\hat Z)_\Psi\geq 1-\eps^2$. 
Thus the transformation of $X^n$ to $\bar X$ has $Q(\bar X|C^nD^n)_\Psi\geq 1-\eps^2$, meaning  $\ell^L_\eps(X_A|CD)_\psi\geq n-\log |\hat Z|=n-m_{\eps^2}(Z_A|B)_\psi$. 
For the opposite bound, suppose the optimal linear extraction procedure with parameter $\eps$ uses the transformation of $X^n$ to $\bar X$, meaning $Q(\bar X|C^nD^n)_\Psi\geq 1-\eps^2$. 
The transformation from $Z^n$ to $\hat Z$ satisfies $P(\bar Z|B^n\hat Z)_\Psi\geq 1-\eps^2$, implying $m_\eps^L(Z_A|B)_\psi\leq n-\log |\bar X|=n-\ell_{\eps^2}^L(X_A|CD)_\psi$.  
\end{proof}

Hence bounds on randomness extraction can be applied to data compression, and vice versa, at least for compression and extraction based on linear codes. 
For example, this gives a unified derivation of the second-order asymptotic analysis of these tasks in \cite{tomamichel_hierarchy_2013}. 
Starting from Corollary 15 therein, $m_{\eps^2}^L(Z_A|B)_\psi=n H(Z_A|B)_\psi+\sqrt{n}V(Z_A|B)_\psi\Phi^{-1}(1-\eps^2)+O(\log n)$ we immediately have $\ell^L_\eps(X_A|CD)_\psi = n(\log |A|-H(Z_A|B)_\psi)-\sqrt{n}V(Z_A|B)_\psi\Phi^{-1}(1-\eps^2)+O(\log n)$, which upon using the relations $H(Z_A|B)_\psi+H(X_A|CD)_\psi=\log |A|$ and $V(Z_A|B)_\psi=V(X_A|CD)_\psi$ is Corollary 16. 
Similarly, we could start from Corollary 16 and infer Corollary 15. 
Note that the restriction to linear compression and extraction schemes does not affect this argument, since the converse of each applies to linear schemes and the achievability statements are established using two-universal hashing, which includes linear schemes. 

An interesting question is how the bounds on compression and extraction compare for finite blocklength. 
For example, we can significantly tighten the bounds on randomness extraction from the CQ ensemble considered in \cite{tomamichel_hierarchy_2013}. 
They consider the state 
\begin{align}
\label{eq:extractorstate}
\psi_{XC}=\tfrac12\sum_{x}\ketbra x_X\otimes (Z^x \ketbra{\eta} Z^x)_C,
\end{align}
where $\ket{\eta}=\sqrt{p}\ket{0}+\sqrt{1-p}\ket{1}$.
This is precisely the output of the dual to the binary symmetric channel, and since compression is directly related to coding for symmetric channels, we can use bounds on the coding problem to infer bounds on randomness extraction from $\psi^{\otimes n}$. 
In particular, the metaconverse involving the hypothesis-testing quantity $\beta_\eps$ applies to linear codes~\cite{nagaoka_strong_2001,polyanskiy_channel_2010-1}\cite[Lemma 4.7]{hayashi_quantum_2017}, as does the Poltyrev achievability bound~\cite{poltyrev_bounds_1994}. 
For versions of these bounds specifically formulated for the BSC, see Theorems 34 and Theorem 35 of \cite{polyanskiy_channel_2010-1}.
Alternately, the achievability bound of Theorem 9 in \cite{tomamichel_hierarchy_2013} also involves $\beta_\eps$, making it easier to compute though significantly worse than the Poltyrev bound. 
A comparison of the bounds appears in Figure~\ref{fig:plot}.
One could also investigate the comparison in the other direction, using extraction bounds for classical side information generated by the BSC to give bounds on the coding problem for the channel with pure state outputs. A converse for extraction in classical scenarios is formulated in~\cite[Lemma 19]{hayashi_secret_2016}, for instance.

\begin{figure}
\centering
\includegraphics{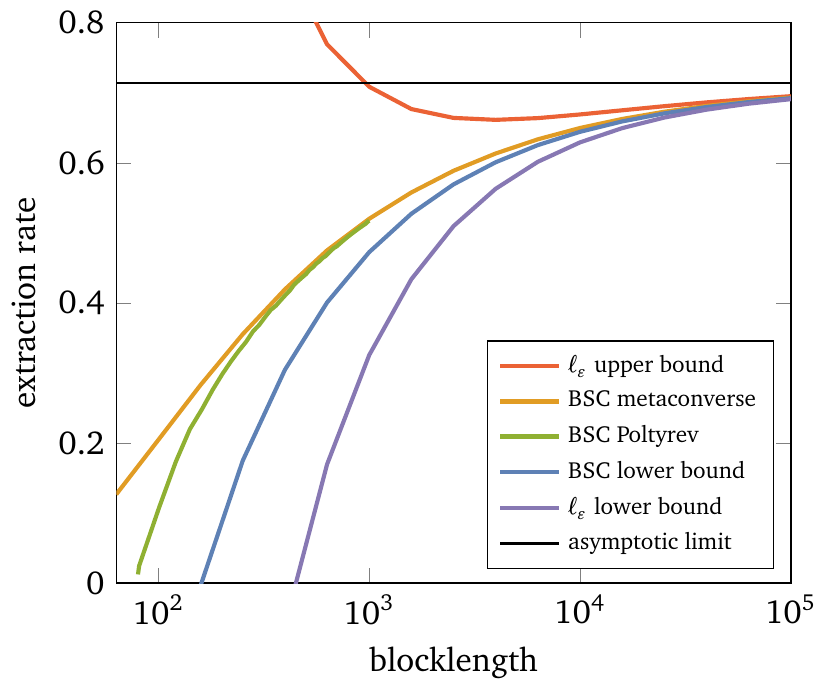}
\caption{\label{fig:plot} Comparison of finite blocklength bounds on randomness extraction from the CQ state in \eqref{eq:extractorstate}. 
The $\ell_{\!\eps}$ upper and lower bounds 
are based on information-spectrum quantities. 
By duality, specifically Corollary~\ref{cor:finiteblock}, tighter bounds are available by appealing to bounds on coding rates for the binary symmetric channel. 
The BSC metaconverse and lower bound 
are both based on the hypothesis testing quantity $\beta_\eps$. 
The Poltyrev achievability bound, 
based on weight spectra of linear codes, already matches the metaconverse extremely closely for blocklengths in the hundreds: At 500 the bounds differ by just four bits! 
However, it becomes time-consuming to compute for blocklengths in the thousands.}
\end{figure}

Not every extraction problem will be the dual of such a simple classical channel, particularly not one for which very tight bounds can be readily computed. 
Tomamichel and Hayashi refer to bounds involving the hypothesis-testing quantity, as we have used in the example, as giving a ``microscopic'' analysis of the coding problem, and hence very tight bounds. 
But in general such a microscopic analysis cannot be easily performed, and instead one has to rely on a more ``macroscopic'' approach, their term for employing information-spectrum quantities (whose definition we shall not give here). 
Indeed, the bounds on extraction in their example are computed using this approach. 
It would be interesting to compare the performance of their macroscopic bounds on compression and extraction, in particular Theorem 17, in light of Corollary~\ref{cor:finiteblock}. 
We leave this question to future work. 

\subsection{EXIT functions}

Duality also implies that the EXIT function of a channel and code combination and that of the dual channel and dual code combination sum to a fixed constant, the logarithm of the alphabet size.
The EXIT function for a code and channel is defined as follows. 
Let $Z^n$ be a random codeword in $C$ and denote by $Z_i$ the $i$th bit of $Z^n$. 
For $B^n=W^n(Z^n)$, denote by $B_{\ssim i}^n$ everything but the $i$th $B$ system. 
Then the EXIT function using entropy $\Hk$ is 
\begin{align}
\label{eq:exitf}
\Xi_{\Hk}(W,C):= \frac1n\sum_{i=1}^n \Hk(Z_i|B_{\ssim i}^n)\,.
\end{align}
Nominally the EXIT function is defined in terms of the von Neumann or Shannon conditional entropy, but here will consider more general $\Hk$ or $\dualHk$ entropies.
For simplicity, we omit the smooth min- and max-entropies and show 
\begin{theorem}
\label{thm:exit}
For any symmetric CQ channel $W$ with input alphabet of size $q$ and linear code $C$, 
\begin{align}
\Xi_\Hk(W,C)+\Xi_{\dualHk}(W^\perp,C^\perp)=\log q\,,
\end{align}
where $\Hk$ is any entropy in \eqref{eq:entropydown} or \eqref{eq:entropyup}.
\end{theorem}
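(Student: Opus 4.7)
The plan is to apply Lemma~\ref{lem:uncertaintyequality} coordinate-by-coordinate to a pure state that coherently encodes a uniform codeword of $C$, and then to identify each dual term with the corresponding summand of $\Xi_{\dualHk}(W^\perp,C^\perp)$. I would set
$$\ket{\Psi_C}_{A^n B^n C^n D^n} := \tfrac{1}{\sqrt{|C|}}\sum_{z^n \in C}\ket{z^n}_{A^n}\ket{z^n}_{C^n}\ket{\varphi_{z^n}}_{B^n D^n}\,,$$
which one can obtain from $\ket{\psi}^{\otimes n}$ (with $\ket{\psi}$ as in \eqref{eq:channelstate}) by projecting $A^n$ onto the code subspace in the standard basis and normalizing. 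Fix $i$ and apply Lemma~\ref{lem:uncertaintyequality} with $A=A_i$, $E=B^n_{\sim i}$, and $F = A^n_{\sim i}B_iC^nD^n$. Disjointness of the conditional states on $F$ is automatic since $C_i$ coherently stores $z_i$, so the conditional states for different standard-basis outcomes on $A_i$ are mutually orthogonal. This yields
$$\Hk(Z_{A_i}|B^n_{\sim i})_{\Psi_C} + \dualHk(X_{A_i}|A^n_{\sim i}B_iC^nD^n)_{\Psi_C}=\log q\,.$$
By construction the first term is the $i$th summand of $\Xi_\Hk(W,C)$.

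The substantive step is to show that the second term equals the $i$th summand of $\Xi_{\dualHk}(W^\perp,C^\perp)$. My plan is a sequence of entropy-preserving manipulations of the underlying CQ state on $X_{A_i}\cdot A^n_{\sim i}B_iC^nD^n$. First, for each $j\neq i$, an inverse-\textsc{cnot} from $C_j$ into $A_j$ decouples the coherently-redundant register $A^n_{\sim i}$, leaving it in a fixed product state which can be discarded. Second, the basis change $\ket{z}_C\ket{\varphi_z}_{BD}=\tfrac{1}{\sqrt{q}}\sum_x\omega^{-xz}\ket{\theta_x}_{CBD}$ rewrites each $(CBD)_j$ marginal entirely in the dual $\theta$-basis. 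Third, the character-sum identity $\sum_{z^n\in C}\omega^{u^n\cdot z^n}=|C|\,\mathbf{1}_{u^n\in C^\perp}$ collapses the remaining double sum over $C$ into a single sum over $C^\perp$. Finally, a classically-controlled shift on the $X_{A_i}$-register absorbs a dual-codeword-dependent offset, producing precisely the CQ state of $X_i$ together with the $(CD)^n_{\sim i}$ marginal of a uniform codeword of $C^\perp$ through $(W^\perp)^n$, whose $\dualHk$ is by definition the $i$th summand of $\Xi_{\dualHk}(W^\perp,C^\perp)$. Averaging the per-coordinate identity over $i$ completes the proof.

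The hard part will be the bookkeeping at step three, where one must verify that the off-diagonal operators $\tr_B[\ket{\theta_x}\bra{\theta_y}]$ arising at each position $j\neq i$ combine with the character sum to yield genuinely the dual-code ensemble, rather than a superficially related but entropically inequivalent state. An alternative that sidesteps this computation is to first establish a channel-level equivalence $W_i^\perp \simeq (W^\perp)_{(i)}$, where $W_i$ is the effective CQ channel from $Z_i$ to $B^n_{\sim i}$ induced by the code and $(W^\perp)_{(i)}$ is its analogue for $C^\perp$ and $W^\perp$, and then invoke Theorem~\ref{thm:eurchannel} position by position.
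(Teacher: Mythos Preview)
Your setup differs from the paper's in a substantive way. The paper never projects onto the code; it works with the unprojected state $\ket\psi^{\otimes n}$ and conditions on the syndrome registers $\hat Z$ and $\bar X$, so that the target identity reads $\Hk(Z_i|B_{\sim i}^n\hat Z)+\dualHk(X_i|C_{\sim i}^nD_{\sim i}^n\bar X)=\log q$. It then row-reduces the parity-check matrices so that exactly one $Z$-type and one $X$-type stabilizer touch position $i$, and invokes Lemma~\ref{lem:simpleH} to strip the $i$th channel output entirely, reducing the problem to an $(n{-}1)$-system instance of Lemma~\ref{lem:uncertaintyequality}. The benefit is that the second term is \emph{by construction} conditioned only on $C_{\sim i}^nD_{\sim i}^n$ (plus $\bar X$), which is exactly the dual EXIT side information.

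Your route---project onto $C$ and apply Lemma~\ref{lem:uncertaintyequality} at position $i$ with $E=B^n_{\sim i}$---is legitimate for producing the primal summand, but the identification of the dual summand has a genuine gap. After decoupling $A^n_{\sim i}$ and carrying out your $\theta$-rewriting and character sum \emph{at the pure-state level}, one obtains
\[
\ket{\Xi_a}_{(BCD)^n}=\tfrac{1}{\sqrt{|C^\perp|}}\sum_{x^n\in C^\perp}\ket{\theta_{x_i-a}}_{(BCD)_i}\,\ket{\theta_{x^n_{\sim i}}}_{(BCD)^n_{\sim i}}=Z_{C_i}^{-a}\ket{\Xi_0}\,,
\]
so that \emph{all} dependence on the measured value $a$ sits in the extra block $(BCD)_i$, not in $(CD)^n_{\sim i}$. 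Tracing out $(BCD)_i$ therefore erases $a$ completely, while your ``classically-controlled shift'' (controlled on $X_{A_i}$, acting on $C_i$) is not a channel on the conditioning system alone and is not entropy-preserving for $\dualHk$. What remains to be shown is that the CQ state on $X_{A_i}$ and $(BCD)_i(CD)^n_{\sim i}$ is \emph{equivalent} (by channels in both directions on the conditioning system) to the dual-EXIT CQ state on $X_i$ and $(CD)^n_{\sim i}$; equivalently, that $W_i^\perp\simeq (W^\perp)_{(i)}$. This is precisely the content of the theorem at position $i$, and neither your four-step calculation nor your proposed alternative supplies an independent proof of it. The paper's syndrome-conditioning plus Lemma~\ref{lem:simpleH} is exactly the missing ingredient that makes the second term land on $C^n_{\sim i}D^n_{\sim i}$ from the outset, so no such after-the-fact identification is needed.
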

\begin{proof}
By symmetry and the discussion in \S\ref{sec:channelcodefromstate}, it is sufficient to show
\begin{align}
\Hk(Z_i
|B_{\ssim i}^n\hat Z)_{\psi^{\otimes n}}+\dualHk(X_i|C_{\ssim i}^nD_{\ssim i}^n\bar X)_{\psi^{\otimes n}}=\log q\,,
\end{align}
for $\psi$ from \eqref{eq:channelstate} and where $Z_i$ refers to the result of measuring the $Z$ observable of the $i$th bit of $A^n$, $\hat Z$ to the value of the syndrome measurement, and similarly for $X_i$ and $\bar X$. 
Again the goal is to make use of Lemma~\ref{lem:uncertaintyequality}, though doing so requires a little work.

First note that $\hat Z$ can be regarded as a sequence of $Z$-type operators, usually called stabilizers, one for each of the rows of $\hat M$. 
That is, $\hat Z=(\hat Z_1,\dots,\hat Z_{n-k})$, where $\hat Z_j=Z^{\hat M_j}$ and $Z^v=Z^{v_1}\otimes Z^{v_2}\otimes \cdots \otimes Z^{v_n}$.
By employing row reduction, we can assume without loss of generality that $\hat M$ has only one $1$ in the $i$th column. 
This implies that only one of the stabilizers involves the $i$th qubit, and we can also assume without loss of generality that it is the first. 
Then $\hat Z_1=Z_i \cdot Z_1'$, for $Z'_1$ a $Z$-type operator on the remaining $n-1$ qubits. 
Let us denote the set of remaining $n-k-1$ stabilizers $\hat Z_{\ssim 1}$. 
By a similar procedure we can define $\bar X_1=X_i\cdot X_1'$ and $\bar X_{\ssim 1}$. 
The two row reduction procedures are independent, since row reduction does not affect orthogonality.  

Since the stabilizers all commute, but $X_i$ and $Z_i$ anticommute, so too do $X_1'$ and $Z_1'$. 
Now use $\Hk(Z_i|B_{\ssim i}^n\hat Z)_{\psi^{\otimes n}}=\Hk(Z'_1|B^{n-1}\hat Z_{\ssim 1})_{\psi^{\otimes n-1}}$ and $\dualHk(X_i|C^n_{\ssim i}D^n_{\ssim i}\bar X)_{\psi^{\otimes n}}=\dualHk(X'_1|C^{n-1}D^{n-1}\bar X_{\ssim 1})_{\psi^{\otimes n-1}}$ from the following Lemma~\ref{lem:simpleH}. 
Projecting onto fixed values for $\hat Z_{\ssim 1}$ and $\bar X_{\ssim 1}$ yields a pure state, and certainly $Z_1'$ can be obtained by measuring the $C^{n-1}$ appropriately. 
Thus, we may apply Lemma~\ref{lem:uncertaintyequality} to complete the proof.
\end{proof}

\begin{lemma}
\label{lem:simpleH}
For a CQ state of the form $\psi_{XYB}=\tfrac1{|X|}\sum_{xz}P_{Y}(z)\ketbra x_X\otimes \ketbra{x+z}_Y\otimes (\sigma_z)_B$, let $\rho_{YB}=\sum_y P_Y(y)\ketbra{y}_Y\otimes (\sigma_y)_B$. 
Then, for any conditional entropy measure from \eqref{eq:entropydown} or \eqref{eq:entropyup},
\begin{align}
\Hk(X|YB)_\psi&=\Hk(Y|B)_\rho\quad\text{and}\label{eq:remove1}\\
\dualHk(X|YB)_\psi&=\dualHk(Y|B)_\rho\,.\label{eq:remove2}
\end{align}
\end{lemma}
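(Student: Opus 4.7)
The plan is to exhibit a pair of mutually inverse CPTP maps relating $\psi_{XYB}$ and $\rho_{YB}$, then read off both identities from monotonicity of the underlying divergence $\Dk$. Introduce the unitary $U_{XY}:\ket x_X\ket y_Y\mapsto\ket x_X\ket{y-x}_Y$ and define
\begin{align*}
\cF(\cdot):=\tr_X\!\bigl[U_{XY}(\cdot)U_{XY}^*\bigr],\qquad \cE(\cdot):=U_{XY}^*\bigl(\mix_X\otimes(\cdot)\bigr)U_{XY}.
\end{align*}
Both maps are CPTP, and a direct calculation from the explicit form of $\psi_{XYB}$ yields $\cF(\psi_{XYB})=\rho_{YB}$, $\cE(\rho_{YB})=\psi_{XYB}$, and the key marginal $\psi_{YB}=\mix_Y\otimes\rho_B$. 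Since $|X|=|Y|$ is built into the shift $x+z$, the normalization property of $\Dk$ together with $U_{XY}\id_{XY}U_{XY}^*=\id_{XY}$ further gives $\cF(\id_X\otimes\psi_{YB})=\id_Y\otimes\rho_B$ and $\cE(\id_Y\otimes\rho_B)=\id_X\otimes\psi_{YB}$.

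For $\Hk=\Hk_\downarrow$, monotonicity of $\Dk$ under $\cF$ applied to the pair $(\psi_{XYB},\id_X\otimes\psi_{YB})$ yields $\Dk(\psi_{XYB},\id_X\otimes\psi_{YB})\geq\Dk(\rho_{YB},\id_Y\otimes\rho_B)$, while monotonicity under $\cE$ applied to $(\rho_{YB},\id_Y\otimes\rho_B)$ yields the reverse inequality; the equality is \eqref{eq:remove1}. The dual identity \eqref{eq:remove2} then follows either by repeating the argument with $\Dk$ replaced by its dual divergence, or by passing through a common purification $\ket{\Psi}_{XYBR}$ of $\psi_{XYB}$ and invoking $\dualHk(A|B):=-\Hk(A|R)$.

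For $\Hk=\Hk_\uparrow$, direction ``$\geq$'' is immediate: given an optimizer $\sigma_B^\star$ of $\Hk_\uparrow(Y|B)_\rho$, feeding $\sigma_{YB}=\mix_Y\otimes\sigma_B^\star$ into the maximization defining $\Hk_\uparrow(X|YB)_\psi$ and using $\cE(\id_Y\otimes\sigma_B^\star)=\id_X\otimes\mix_Y\otimes\sigma_B^\star$ together with monotonicity under $\cE$ produces a matching lower bound. The reverse inequality is the technical heart of the argument. For an arbitrary competitor $\sigma_{YB}$, the simultaneous shift symmetry $V_a\,\psi_{XYB}\,V_a^*=\psi_{XYB}$, with $V_a:\ket x_X\ket y_Y\mapsto\ket{x+a}_X\ket{y+a}_Y$, implies that $\Dk(\psi_{XYB},\id_X\otimes\sigma_{YB})$ is invariant under conjugating $\sigma_{YB}$ by the shift on $Y$. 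Monotonicity under $\cF$ then lower-bounds this divergence by $\Dk(\rho_{YB},|Y|\,\bar\sigma_{YB})$, where $\bar\sigma_{YB}$ is the $Y$-shift average of $\sigma_{YB}$; since $\bar\sigma_{YB}$ is block-diagonal in the Fourier basis of $Y$, the dominance axiom of $\Dk$ lets it be majorized by $\id_Y\otimes\sigma_B'$ for an appropriate state $\sigma_B'$, supplying the $\rho$-side competitor of matching value. The main obstacle is precisely this last step -- converting an arbitrary $\sigma_{YB}$ into a state on $B$ alone without losing divergence -- where the joint use of shift-invariance of $\psi_{XYB}$ and the dominance property of $\Dk$ is essential.
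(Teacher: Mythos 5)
Your treatment of \eqref{eq:remove1} for the $\Hk_\downarrow$ case is correct and essentially matches the paper's (the paper phrases it as unitary invariance under the controlled shift followed by losslessly appending/removing $\mu_X$, which is the same two-sided data-processing argument as your $\cF$/$\cE$ pair). The $\geq$ direction of the $\Hk_\uparrow$ case is also fine. The problems are in the two places you yourself flag as delicate, and in both the proposed fix does not close the gap.

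For the $\leq$ direction of $\Hk_\uparrow$: after the shift twirl, $\bar\sigma_{YB}$ is only block-diagonal in the Fourier basis of $Y$, with arbitrary (sub-normalized) blocks $\tau_k$, and dominance cannot convert this into a \emph{normalized} product competitor without cost. The only operator of the form $\id_Y\otimes\sigma'_B$ dominating $\bar\sigma_{YB}$ has $\sigma'_B\geq\tau_k$ for every $k$, and rescaling it to a density operator (or, equivalently, the factor $|X|$ produced when $\cF$ traces out the $\id_X$ in the second argument) costs exactly $\log|Y|$ via the normalization axiom; chasing your chain of inequalities yields only $\Hk_\uparrow(X|YB)_\psi\leq\Hk_\uparrow(Y|B)_\rho+\log|Y|$, not equality. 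The missing ingredient is the \emph{phase} twirl: $\psi_{XYB}$ is diagonal in the standard basis of $XY$, hence also invariant under $U_{XY}=\sum_{xy}\omega^{x+y}\ketbra{x}_X\otimes\ketbra{y}_Y$, and twirling over the group generated by both the joint shift and the joint phase sends $\id_X\otimes\tau_{YB}$ to $\id_X\otimes\mu_Y\otimes\tau_B$ exactly (full depolarization of $Y$) while fixing $\psi_{XYB}$. Monotonicity under this larger twirl shows the optimizer may be taken of product form with no loss, after which your $\geq$-direction argument runs in reverse. This is precisely what the paper does.

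For \eqref{eq:remove2}, neither of your two routes is a proof. Replacing $\Dk$ by ``its dual divergence'' presupposes that $\dualHk$ is itself divergence-based, which the paper explicitly notes is not known for general $\Dk$ satisfying the four axioms (e.g.\ the maximal and reversed sandwiched R\'enyi divergences). The purification route is the right idea, but invoking the definition $\dualHk(A|B)=-\Hk(A|C)$ merely reduces \eqref{eq:remove2} to the claim $\Hk(X|X'Y'R)_\psi=\Hk(Y|Y'R)_\rho$, where $X'Y'R$ and $Y'R$ are the purifying systems of two \emph{different} states; this is a new identity of the same difficulty as \eqref{eq:remove1} and requires its own argument. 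The paper supplies it by constructing explicit purifications, exhibiting a unitary $U_{XX'Y'}$ that maps $\ket{\psi}$ to $\ket{\Phi}_{YY'}\otimes\ket{\rho}_{XX'BR}$, and then repeating the divergence manipulations, including a separate symmetrization step to restrict the optimization in the $\Hk_\uparrow$ case. As written, your proposal leaves all of this work undone.
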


\begin{figure}
\centering
\includegraphics{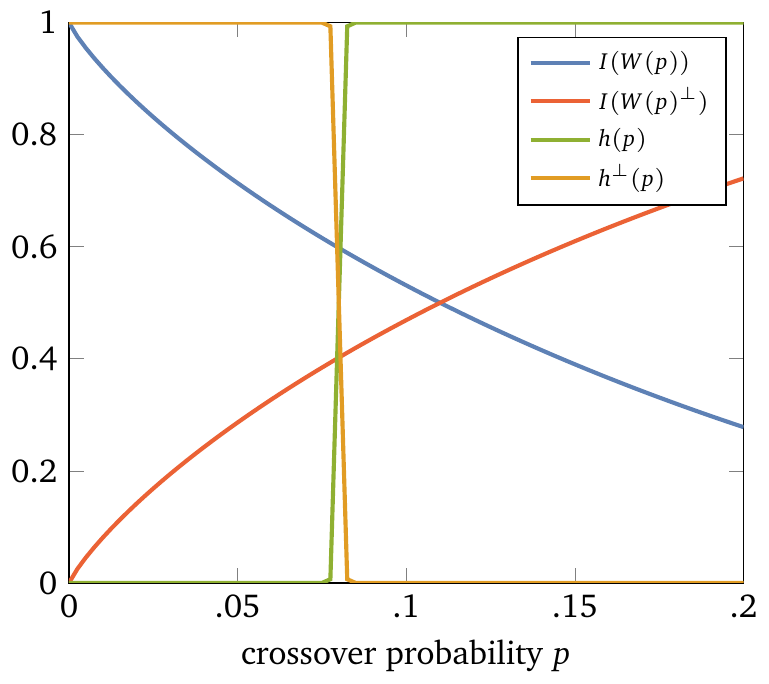}
\caption{\label{fig:exit} EXIT function transition and capacity. 
The figure depicts the capacity of $W(p)$ the BSC with crossover probability $p$, the capacity of its dual $W(p)^\perp$, as well as a putative EXIT function for a rate $R=\nicefrac12$ code over $W(p)$ and EXIT function of the dual code over $W(p)^\perp$. 
Here the EXIT function $h(p)$ displays a sharp transition at $p^\star=0.8$ such that the capacity $I(W(p^\star))\approx 0.6$ exceeds the rate $R$. 
By duality, this implies that the dual code, which also has rate $\nicefrac12$, is reliably decodable for values of $p$ (say $0.1$) such that the rate exceeds the capacity of $W(p)^\perp$ (${\approx}0.47$). 
As this cannot be the case by the converse to the noisy channel coding theorem, it must be that the transition satisfies $I(W(p^\star))=R$, i.e.\ $p^\star\approx 0.11$ in this example.}
\end{figure}

The proof is given in Appendix \ref{sec:simpleH}.
EXIT functions figure prominently in the study of belief propagation decoding~\cite{richardson_modern_2008}, as well as in the recent proof by Kudekar \emph{et al.}\ that Reed-Muller codes achieve capacity on erasure channels~\cite{kudekar_reed-muller_2016-1}.
Let us briefly recall their proof; we will then be able to see how Theorem~\ref{thm:exit} offers a potential route to generalizing the argument for other channels. 
The proof is based on the fact that the $i$th EXIT function (the $i$th term in \eqref{eq:exitf} using the Shannon entropy) is the error probability of the optimal bitwise decoder for the BEC, so that if the EXIT function is essentially zero, then decoding is reliable.
For doubly transitive codes like Reed-Muller codes, the EXIT function is the same for each codeword bit; let us  define $h(p)=\Xi_H(\text{BEC}(p),C)$ as the EXIT function (using the Shannon entropy) for a given code $C$. 
Kudekar \emph{et al.}\ show that for doubly-transitive codes $h(p)$ exhibits a sharp transition as $p$ increases, jumping from zero to one in an interval that decreases with the blocklength. 
The location of the transition depends on the chosen code $C$, and in particular it must not be so high as to imply that the code is reliably decodable above the capacity of the channel. 
For $\text{BEC}(p)$ the capacity is $1-p$, and therefore for given code of rate $R$ the transition $p^\star$ must satisfy $R\leq 1-p^\star$.
The area theorem implies that in fact $p^\star=1-R$, so Reed-Muller codes achieve capacity. 

Theorem~\ref{thm:exit} offers two means of potentially extending this argument to more general chanels. 
First, one can shift the problem of showing a transition in the EXIT function to that of the dual. 
To study the BSC for instance, one could instead look at the EXIT function associated with the state in \eqref{eq:extractorstate}, which may be easier to study with existing tools. 
Moreover, one can examine EXIT functions for different entropies, for example $\Hmin$, and still appeal to duality. 

Secondly, Theorems~\ref{thm:exit} and \ref{thm:eurchannel} imply that, for any channel, if a sharp transition exists, it should be located at capacity.
Here we give a rough sketch of the argument, which is also illustrated in Figure~\ref{fig:exit}. 
Let us simplify to the case of binary input channels, $q=2$, and fix a binary code $C$ for use on the family of channels $W(p)=\text{BSC}(p)$, with $p$ the crossover probability.
Note that, by Theorem~\ref{thm:eurchannel}, the family $W^\perp(p)$ is decreasingly noisy with increasing $p$. 
Defining $h(p)=\Xi_H(W(p),C)$ and $h^\perp(p)=\Xi_H(W(p)^\perp,C^\perp)$, we immediately have $h(p)+h^\perp(p)=1$ by Theorem~\ref{thm:exit}. 
Thus, $h(p)$ has a sharp transition if and only if $h^\perp(p)$ does. 
As before, the transition value $p^\star$ must be constrained by the capacity, else codes with rates exceeding the capacity could still be reliably decoded. (For the strong converse to CQ channel coding, see~\cite{winter_coding_1999}.)
This means we must have $R\leq I(W(p^\star))$ for the family $W(p)$, as well as $R^\perp\leq I(W(p^\star)^\perp)$ for $W(p)^\perp$, where $R^\perp$ is the rate of $C^\perp$.
But $R+R^\perp=1$ by construction, and $I(W(p^\star))+I(W(p^\star)^\perp)=1$ by Theorem~\ref{thm:eurchannel}, which implies that $p^\star$ satisfies $R=I(W(p^\star))$.

\section{Discussion}
\label{sec:discussion}
We have shown that a channel and its dual are very tightly related by uncertainty relations for a general class of entropies, and that this duality is compatible both with channel convolution as in polar coding and with the use of linear codes and their duals. 
We have also investigated several consequences of duality, finding applications to the phenomenon of polarization, the relationship between randomness extraction and coding, as well as possible means of showing a code family achieves capacity of a given channel. 

We have confined much of our analysis to symmetric channels, though this restriction was seen to be unnecessary in applications to source problems such as compression or randomness extraction. 
It would be interesting to extend the results to nonsymmetric channels, but there appear to be some obstacles to doing so. 
The chief difficulty is that we are confined to considering uniform inputs to $W$ in order to employ the definition of the dual using \eqref{eq:channelstate}. 
For example, suppose we take $W$ to be the classical Z channel, for which the capacity-achieving distribution is not uniform. 
In anticipation of using the entropic uncertainty relation, we could consider the state $\ket{\psi}_{ABCD}=\sum_z \sqrt{p_z}\ket{z}_A\ket{z}_C\ket{\varphi_z}_{BD}$ where $p_0$ and $p_1$ are the capacity-achieving distribution and $\varphi_z$ are the Z channel outputs. 
The capacity itself is then $I(W)=H(Z_A)_\psi-H(Z_A|B)_\psi$, and $H(Z_A|B)_\psi+H(X_A|CD)_\psi=\log 2$ certainly still holds. 
However, $H(X_A|CD)_\psi$ is no longer immediately related to $W^\perp$. 
Nevertheless, for questions involving average coding error of $W$, where a uniform input is appropriate, Theorems~\ref{thm:eurchannel} and \ref{thm:eurchannelcode} continue to give meaningful relations to the dual. 

The proof of Theorem~\ref{thm:exit} raises an interesting question regarding entropic uncertainty relations that, to the author's knowledge, has somehow eluded previous investigation.
Therein it is crucial that the $i$th output system be excluded from the conditioning system in order to be able to appeal to Lemma~\ref{lem:uncertaintyequality}, though ultimately the quantity $\Hk(Z_i|B^n\hat Z)$ is perhaps more relevant. 
While the lemma is a novel relation in that it holds with equality, it is still in the usual ``tripartite'' framework of inequality uncertainty relations as discovered in \cite{renes_conjectured_2009,berta_uncertainty_2010,tomamichel_uncertainty_2011,coles_uncertainty_2012}. 
That is, one of two (conjugate) measurements is performed on a system $A$ and we are interested in the conditional entropies $\Hk(X_A|E)$ and $\Hk^\perp(Z_A|F)$ for distinct $E$ and $F$. 
Using chain rules for the von Neumann entropy, one can convert between the tripartite and ``bipartite'' version which is a relation between $H(X_A|E)$ and $H(Z_A|E)$. 
Considering $\Hk(Z_i|B^n\hat Z)$ suggests a different scenario intermediate between bipartite and tripartite, where one is interested in a relation involving $\Hk(X_A|EC)$ and $\Hk^\perp(Z_A|FC)$. 
Whether a useful relation exists is an open question. 

\vspace{2mm}
{\bfseries{Acknowledgments.}} I thank Fr\'ed\'eric Dupuis, S.\ Hamed Hassani, Marco Mondelli, Rajai Nasser, David Sutter, Marco Tomamichel, Rüdiger Urbanke, and Andreas Winter for helpful and interesting discussions. Thanks also to Narayanan Rengaswamy for very carefully reading through an earlier draft of this manuscript. 
This work was supported by the Swiss National Science Foundation (SNSF) via the National Centre of Competence in Research “QSIT”, as well as the Air Force Office of Scientific Research (AFOSR) via grant FA9550-16-1-0245.

\printbibliography[heading=bibintoc,title=References]

\appendix
\section{Entropic uncertainty relations}
\label{app:eur}
In the context of entropy duality, it is more convenient to describe CQ conditional entropies $\Hk(Z_A|E)_\psi$ by using isometries to generate the required state, not directly by measuring $\psi_{AE}$.
Abusing notation somewhat, let  $U_{XA|A}=\sum_x \ket{x}_X\cket{x}\cbra{x}_A$ and $U_{ZA|A}=\sum_z \ket{z}_Z\ket{z}\bra{z}_A$. 
Then defining $\xi_{XAE}=U_{XA|A}\psi_{AE}U_{XA|A}^*$ and $\eta_{ZAE}=U_{ZA|A}\psi_{AE}U_{ZA|A}^*$ we have $\Hk(X_A|E)_\psi=\Hk(X|E)_\xi$ and $\Hk(Z_A|E)_\psi=\Hk(Z|E)_\eta$. 
It will also be useful to define $\Pi_{XA}=\sum_x \ketbra x_X\otimes \ketbra{\tilde x}_A$, which is the projector onto the image of $U_{XA|A}$.  

\begin{proof}[Proof of Lemma~\ref{lem:uncertaintyequality}]
That the sum of entropies is not smaller than $\log |A|$ is the entropic uncertainty relation of conjugate observables, first shown for the von Neumann entropy in~\cite{renes_conjectured_2009}, then for smooth min- and max-entropies in~\cite{tomamichel_uncertainty_2011}, and generally for entropies based on divergence in \cite{coles_uncertainty_2012}. 
Hence we need only establish the upper bound. 

To do so, first write $\ket\psi_{AEF}=\sum_z \sqrt{P_Z(z)}\ket z_A\ket{\sigma_z}_{EF}$ for $\sqrt{P_Z(z)}\ket{\sigma}_{EF}={}_A\bracket{z}{\psi}_{AEF}$. 
The conditional marginals of $F$ are then the reduced states of $\ket{\sigma_z}_{EF}$. Since these are disjoint, the state $\ket{\psi'}_{AEFZ}=\sum_z \sqrt{P_Z(z)}\ket{z}_A\ket{z}_Z\ket{\sigma_z}_{EF}$ can be created locally by measuring $F$. 
Hence $\dualHk(X_A|F)_\psi=\dualHk(X_A|FZ)_{\psi'}$. 
Furthermore, $\psi_{AE}=\psi'_{AE}=\sum_z P_Z(z)\ketbra z_A\otimes (\sigma_z)_E$, and thus $\Hk(Z_A|E)_\psi=\Hk(A|E)_{\psi'}$.
Therefore, we need only show that $\Hk(A|E)_{\psi'}+\dualHk(X_A|F)_{\psi'}\leq \log_2|A|$ for states of this form. 
This is done for the various cases in the following Lemmas \ref{lem:downineq}, \ref{lem:upineq}, \ref{lem:minmax}, and \ref{lem:maxmin}.
\end{proof}
In fact, only the case of $\Hk_\downarrow$ requires the state $\ket{\psi}$ to have this precise form.
\begin{lemma}
\label{lem:downineq}
For $\Hk_\downarrow$ any conditional entropy measure as in \eqref{eq:entropydown} and any normalized pure state $\ket{\psi}_{AEFZ}$ of the form $\ket{\psi}_{AEFZ}=\sum_z \sqrt{P_Z(z)}\ket z_A\ket z_Z\ket{\sigma_z}_{EF}$ with arbitrary $\ket{\sigma_z}$,  
\begin{align}
\Hk_\downarrow(A|E)_\psi+\dualHk_\downarrow(X_A|FZ)_\psi\leq \log |A|\,.
  \end{align}
\end{lemma}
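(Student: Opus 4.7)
The plan is to introduce the isometry $U:=\sum_x \ket{x}_X\cket{x}\cbra{x}_A$ that realizes the $X_A$-measurement coherently, and work with the post-isometry state $\ket{\xi}_{XAEFZ}:=U\ket{\psi}_{AEFZ}$. Since $\ket{\xi}$ is pure on $XAEFZ$ and $AE$ is complementary to $XFZ$, the duality defining $\dualHk_\downarrow$ gives
\begin{align*}
\dualHk_\downarrow(X_A|FZ)_\psi \;=\; -\Hk_\downarrow(X|AE)_\xi \;=\; \Dk(\xi_{XAE},\,\id_X\otimes \xi_{AE}),
\end{align*}
so the lemma reduces to showing $\Dk(\xi_{XAE},\id_X\otimes\xi_{AE})\le -\Hk_\downarrow(A|E)_\psi+\log|A|$.

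Two preliminary identities will suffice. First, the register $\ket{z}_Z$ of $\ket{\psi}$ is a classical copy of the $\ket{z}$ basis of $A$, so $\psi_{AE}$ is block-diagonal in that basis; a direct computation then yields $\xi_{AE}=\id_A/|A|\otimes\psi_E$. Second, the same Fourier calculation gives $U\id_A U^{*}=P:=\sum_x\ketbra{x}_X\otimes\cket{x}\cbra{x}_A$, the projector onto the image of $U$. This is the only place in the proof where the hypothesis on the form of $\ket{\psi}$ is used.

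With these in hand, three axiomatic properties of $\Dk$ close the argument:
\begin{align*}
\Dk\!\Big(\psi_{AE},\tfrac{\id_A}{|A|}\otimes\psi_E\Big)
&\stackrel{(a)}{=} \Dk\!\Big(\xi_{XAE},\tfrac{P}{|A|}\otimes\psi_E\Big) \\
&\stackrel{(b)}{\ge} \Dk\!\Big(\xi_{XAE},\tfrac{\id_{XA}}{|A|}\otimes\psi_E\Big)
\;=\; \Dk(\xi_{XAE},\id_X\otimes\xi_{AE}).
\end{align*}
Step (a) is isometric invariance of $\Dk$, step (b) is dominance applied to $P\le\id_{XA}$, and the leftmost term equals $-\Hk_\downarrow(A|E)_\psi+\log|A|$ by normalization. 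Rearranging yields the claim.

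The main obstacle is step (a): isometric invariance is not one of the four stated axioms. The direction $\Dk(\rho,\sigma)\ge \Dk(U\rho U^{*},U\sigma U^{*})$ follows at once from monotonicity, but the reverse requires exhibiting an honest channel that inverts $\rho\mapsto U\rho U^{*}$. The natural candidate, $\mathcal{R}(\tau):=U^{*}\tau U+\tr[(\id-P)\tau]\,\eta$ for any fixed state $\eta$, is easily verified to be CPTP and to satisfy $\mathcal{R}(U\rho U^{*})=\rho$; a second application of monotonicity to $\mathcal{R}$ then yields the reverse inequality. Everything else is routine linear algebra, and the disjointness hypothesis on the $(\sigma_z)_F$ from Lemma~\ref{lem:uncertaintyequality} is notably \emph{not} needed here.
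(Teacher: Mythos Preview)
Your proof is correct and follows essentially the same route as the paper: introduce the coherent $X_A$-measurement isometry $U$, use the special form of $\ket{\psi}$ to get $\xi_{AE}=\mu_A\otimes\psi_E$, and then chain isometric invariance, dominance ($\Pi_{XA}\leq\id_{XA}$), and normalization to compare $\Dk(\psi_{AE},\id_A\otimes\psi_E)$ with $\Dk(\xi_{XAE},\id_X\otimes\xi_{AE})$. Your explicit derivation of isometric invariance from monotonicity via the CPTP retraction $\mathcal R$ is a nice touch that the paper leaves implicit.
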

\begin{proof}
Define $\ket{\xi}_{XAEFZ}=\sum_{xz}\bracket{\tilde x}{z}\sqrt{P_Z(z)}\ket{x}_X\cket{x}_A\ket z_Z\ket{\sigma_z}_{EF}$ and observe that $\xi_{AE}=\mu_A\otimes \psi_E$ and $\psi_E=\sum_z P_Z(z)(\sigma_z)_E$
By entropy duality we have $\dualHk_\downarrow(X_A|FZ)_\psi=\dualHk_\downarrow(X|FZ)_\xi=-\Hk_\downarrow(X|AE)_\xi$. 
Then 
\begin{subequations}
\begin{align}
-\Hk_\downarrow(A|E)_\psi
&=\Dk(\psi_{AE},\id_A\otimes \psi_E)\\
&=\Dk(U_{XA|A}\psi_{AE}U_{XA|A}^*,U_{XA|A}U_{XA|A}^*\otimes \psi_E)\label{eq:optverstart}\\
&=\Dk(\xi_{XAE},\Pi_{XA}\otimes \psi_E)\\
&\geq \Dk(\xi_{XAE},\id_{XA}\otimes \psi_E)\\
&=\Dk(\xi_{XAE},\id_X\otimes \mu_A\otimes \psi_E)-\log_2 |A|\label{eq:optverend}\\
&=\Dk(\xi_{XAE},\id_X\otimes \xi_{AE})-\log_2 |A|\\
&=-\dualHk_\downarrow(X_A|FZ)_\psi-\log_2 |A|\,.
\end{align}
\end{subequations}
The first equality is the definition of $\Hk_\downarrow(A|E)_\psi$, the second invariance of $\Dk$ under isometries. 
In the third we use the fact that $UU^*=\Pi$. 
The inequality uses the dominance property of $\Dk$, since, $\Pi\leq \id$, while the following equality uses normalization. 
The penultimate equality uses the specific form of $\xi_{XAE}$, and the final equality the entropy duality relation above.  
\end{proof} 

\begin{lemma}
\label{lem:upineq}
For $\Hk_\uparrow$ any conditional entropy measure as in \eqref{eq:entropyup} and any normalized pure state $\ket{\psi}_{AEF}$,
\begin{align}
\Hk_\uparrow(A|E)_\psi+\dualHk_\uparrow(X_A|F)_\psi\leq \log |A|\,.
  \end{align}
\end{lemma}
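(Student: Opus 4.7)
The plan is to mirror the argument of Lemma~\ref{lem:downineq} closely, exploiting the extra freedom offered by the maximization in \eqref{eq:entropyup} so that no structural hypothesis on $\ket{\psi}_{AEF}$ is needed. As in that proof, I would introduce $\ket{\xi}_{XAEF} := U_{XA|A}\ket{\psi}_{AEF}$, which is pure, and invoke entropy duality to rewrite
\begin{align*}
\dualHk_\uparrow(X_A|F)_\psi \;=\; -\Hk_\uparrow(X|AE)_\xi \;=\; \min_{\tau_{AE}}\Dk\bigl(\xi_{XAE},\,\id_X\otimes \tau_{AE}\bigr),
\end{align*}
while by definition $-\Hk_\uparrow(A|E)_\psi = \min_{\sigma_E}\Dk(\psi_{AE},\,\id_A\otimes\sigma_E)$. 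It therefore suffices to exhibit, for every feasible $\sigma_E$, some feasible $\tau_{AE}$ satisfying $\Dk(\xi_{XAE},\,\id_X\otimes\tau_{AE}) \leq \Dk(\psi_{AE},\,\id_A\otimes\sigma_E) + \log|A|$.

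The natural ansatz is $\tau_{AE} := \mu_A\otimes\sigma_E$, which is a valid density operator and pairs each candidate $\sigma_E$ on the right with the structure the dual min would see. Normalization of $\Dk$ then strips off $\log|A|$, leaving $\Dk(\xi_{XAE},\,\id_{XA}\otimes\sigma_E)$; from there the three-step chain \eqref{eq:optverstart}--\eqref{eq:optverend} from the proof of Lemma~\ref{lem:downineq} carries through unchanged: dominance replaces $\id_{XA}$ by the smaller projector $\Pi_{XA}=U_{XA|A}U_{XA|A}^*$, and isometric invariance then collapses $\Dk\bigl(U\psi_{AE}U^*,\,U(\id_A\otimes\sigma_E)U^*\bigr)$ back to $\Dk(\psi_{AE},\,\id_A\otimes\sigma_E)$. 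Taking the minimum over $\sigma_E$ on both sides yields the stated inequality.

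I do not anticipate a real obstacle here. The auxiliary register $Z$ and the disjointness hypothesis in Lemma~\ref{lem:downineq} were there to compensate for the fact that, in the $\downarrow$ version, the reference state of the dual divergence is forced to be $\psi_E$; here the outer minimization over $\tau_{AE}$ supplies exactly the slack needed to choose $\tau_{AE}=\mu_A\otimes\sigma_E$ for the optimizer $\sigma_E$, so the classical book-keeping becomes unnecessary and a general pure $\ket\psi_{AEF}$ goes through directly. The only minor check is that the product ansatz lies in $\cD(\cH_{AE})$, which is immediate.
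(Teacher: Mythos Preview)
Your proposal is correct and essentially identical to the paper's own proof: both introduce $\ket{\xi}=U_{XA|A}\ket{\psi}$, use the duality $\dualHk_\uparrow(X_A|F)_\psi=-\Hk_\uparrow(X|AE)_\xi$, and then run the chain \eqref{eq:optverstart}--\eqref{eq:optverend} with an arbitrary $\sigma_E$ in place of $\psi_E$, finally relaxing $\mu_A\otimes\sigma_E$ to a general $\tau_{AE}$ in the outer minimum. Your explanation of why the $Z$ register and disjointness hypothesis of Lemma~\ref{lem:downineq} become unnecessary here is also on the mark.
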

\begin{proof}
We have the same entropy duality $\dualHk_\uparrow(X_A|F)_\psi=-\Hk_\uparrow(X|AE)_\xi$, with $\ket{\xi}_{XAEF}=U_{XA|A}\ket{\psi}_{AEF}$. 
Then 
\begin{subequations}
\begin{align}
-\Hk_\uparrow(A|E)_\psi
&=\min_{\tau}\Dk(\psi_{AE},\id_A\otimes \tau_E)\\
&\geq \min_\tau \Dk(\xi_{XAE},\id_X\otimes \mu_A\otimes \tau_E)-\log|A|\\
&\geq \min_\sigma \Dk(\xi_{XAE},\id_X\otimes \sigma_{AE})-\log|A|\\
&=-\dualHk(X_A|F)_\psi-\log_2 |A|\,.
\end{align}
\end{subequations}
The first inequality encapsulates \eqref{eq:optverstart}-\eqref{eq:optverend} from the proof of Lemma~\ref{lem:downineq}, since these steps hold for an arbitrary $\psi_E$.
\end{proof}

\begin{lemma}
  \label{lem:minmax}
  For any pure state $\ket{\psi}_{AEF}$ and $0\leq \eps\leq 1$,  \begin{align}
    &\Hmin^\eps(A|E)_\psi+\Hmax^\eps(X_A|F)_\psi\leq \log |A|\,.
  \end{align}
\end{lemma}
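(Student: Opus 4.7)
The plan is to reduce the claim to the non-smooth uncertainty relation already contained in Lemma~\ref{lem:upineq}. Taking $\Hk_\uparrow = \tilde H_\infty^\uparrow = \Hmin$, the entropy dualities collected in~\S\ref{sec:mathback} give $(\tilde H_\infty^\uparrow)^\perp = \tilde H_{1/2}^\uparrow = \Hmax$, so Lemma~\ref{lem:upineq} immediately yields $\Hmin(A|E)_\phi + \Hmax(X_A|F)_\phi \leq \log |A|$ for every pure tripartite state $\ket{\phi}_{AEF}$. The task is therefore to lift this inequality to the smoothed quantities by choosing the pure state $\ket{\phi}$ cleverly.

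Specifically, I would let $\tilde\psi_{AE} \in \cB_\eps(\psi_{AE})$ attain the maximum in the definition~\eqref{eq:smoothmin} of $\Hmin^\eps(A|E)_\psi$. By Uhlmann's theorem, after enlarging $F$ if necessary, there exists a purification $\ket{\tilde\psi}_{AEF}$ of $\tilde\psi_{AE}$ whose purified distance from $\ket{\psi}_{AEF}$ equals $P(\tilde\psi_{AE}, \psi_{AE}) \leq \eps$. Applying the non-smooth inequality above to this purification then gives $\Hmin^\eps(A|E)_\psi + \Hmax(X_A|F)_{\tilde\psi} \leq \log |A|$.

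It remains to recognize that $\Hmax(X_A|F)_{\tilde\psi}$ is an upper bound on $\Hmax^\eps(X_A|F)_\psi$. Since measuring $A$ in the conjugate basis and discarding $E$ is a CPTP map, monotonicity of the purified distance yields $P(\tilde\psi_{X_AF}, \psi_{X_AF}) \leq P(\ket{\tilde\psi}_{AEF}, \ket{\psi}_{AEF}) \leq \eps$, so $\tilde\psi_{X_AF}$ lies in $\cB_\eps(\psi_{X_AF})$ and is a valid candidate in the minimization~\eqref{eq:smoothmax} defining $\Hmax^\eps(X_A|F)_\psi$. Chaining the two inequalities closes the argument.

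The only mildly delicate point is the bookkeeping around purifications: one must ensure that the purifying system of $\tilde\psi_{AE}$ can be taken to coincide with the $F$ appearing in $\ket{\psi}_{AEF}$, so that the non-smooth uncertainty relation and the subsequent monotonicity step operate on the same Hilbert space structure. This is routine and can always be arranged by embedding $F$ into a sufficiently large auxiliary space before invoking Uhlmann. No deeper technical obstacle arises, and the resulting proof is essentially a one-shot smoothing wrapper around Lemma~\ref{lem:upineq}.
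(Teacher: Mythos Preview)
Your argument is correct and takes a genuinely different route from the paper's. The paper works directly at the level of the operator inequality defining $\Hmin$: it takes the optimizer $\tilde\psi_{AE}$ together with its witness $\sigma_E$, pushes both through the isometry $U_{XA|A}$, relaxes $\Pi_{XA}\leq\id_{XA}$, and thereby exhibits a feasible point for $\Hmin^\eps(X|AE)_\xi$. The smooth min--max duality $\Hmin^\eps(X|AE)_\xi=-\Hmax^\eps(X_A|F)_\psi$ then closes the loop. Your approach instead treats the unsmoothed inequality of Lemma~\ref{lem:upineq} as a black box and handles the smoothing externally via Uhlmann and monotonicity of the purified distance. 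Conceptually this is cleaner and more modular, since it shows the smoothed statement as a direct corollary of the unsmoothed one; the paper's approach avoids the purification bookkeeping but imports the smooth min--max duality from~\cite{tomamichel_duality_2010}, which is itself proved by an Uhlmann argument, so the two routes are not far apart in substance.

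One small point worth making explicit: the smoothing ball $\cB_\eps$ in this paper (following~\cite{tomamichel_framework_2012}) contains subnormalized states, so your optimizer $\tilde\psi_{AE}$ need not have unit trace, whereas Lemma~\ref{lem:upineq} is stated for normalized pure states. This is harmless because the unsmoothed inequality is scale-invariant: for $\ket{\tilde\psi}$ with $\|\tilde\psi\|^2=t$, one checks $\Hmin(A|E)_{\tilde\psi}=\Hmin(A|E)_{\tilde\psi/t}-\log t$ and $\Hmax(X_A|F)_{\tilde\psi}=\Hmax(X_A|F)_{\tilde\psi/t}+\log t$, so their sum is unchanged. You might want to insert a sentence to that effect.
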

\begin{proof}
  Define $\lambda=H_{\rm min}^\eps(A|E)_\psi$ and let $\widetilde{\psi}_{AE}\in \mathcal B_\eps(\psi_{AE})$ and $\sigma_E$ be such that 
$\widetilde \psi_{AE}\leq 2^{-\lambda}\id_A\otimes \sigma_E.$
Applying the isometry $U_{XA|A}$ yields
\begin{subequations}
\begin{align}
\widetilde \xi_{XAE}&= U_{XA|A}\widetilde\psi_{AE}U^*_{XA|A}\\
&\leq 2^{-\lambda} U_{XA|A}(\id_A\otimes \sigma_E) U_{XA|A}^*\\
&=2^{-\lambda}\Pi_{XA}\otimes \sigma_E\\
& \leq 2^{-\lambda}\id_{XA}\otimes \sigma_E\\
&= 2^{-(\lambda-\log |A|)}\id_X\otimes \mu_A\otimes \sigma_E\,.
\end{align}
\end{subequations}
Note that the first step implies $\widetilde \xi_{XAE}\in\cB_\eps(\xi_{XAE})$. 
Thus, $\nu=\lambda-\log |A|$ and $\mu_A\otimes\sigma_E$ are feasible for $\Hmin^\eps(X|AE)_\xi$, meaning 
\begin{align}
\Hmin^\eps(X|AE)_\xi\geq \Hmin^\eps(A|E)_\psi-\log |A|.
\end{align}
Therefore the claim follows, since $\Hmin^\eps(X|AE)_\xi=-\Hmax^\eps(X|F)_\xi=-\Hmax^\eps(X_A|F)_\psi$. 
\end{proof}

\begin{lemma}
  \label{lem:maxmin}
  For any pure state $\ket{\psi}_{AEF}$ and $0\leq \eps\leq 1$,
  \begin{align}
    &\Hmax^\eps(A|E)_\psi+\Hmin^\eps(X_A|F)_\psi\leq \log |A|\,.
  \end{align}
\end{lemma}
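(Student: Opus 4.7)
The plan is to mirror the proof of Lemma~\ref{lem:minmax} with the roles of the $\Hmin$ operator characterization and the $\Hmax$ fidelity characterization swapped between the two sides. Writing $\ket{\xi}_{XAEF}:=U_{XA|A}\ket{\psi}_{AEF}$ and setting $\nu:=\Hmin^\eps(X_A|F)_\psi=\Hmin^\eps(X|F)_\xi$, I would begin with optimal witnesses $\tilde\xi_{XF}\in \cB_\eps(\xi_{XF})$ and $\tau_F\in\cD(\cH_F)$ satisfying $\tilde\xi_{XF}\leq 2^{-\nu}\,\id_X\otimes\tau_F$, reducing the lemma to exhibiting a state $\tilde\psi_{AE}\in \cB_\eps(\psi_{AE})$ with $\Hmax(A|E)_{\tilde\psi}\leq \log|A|-\nu$.

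First I would use Uhlmann's theorem to lift $\tilde\xi_{XF}$ to a purification $\ket{\tilde\xi}_{XAEF}$ with $|\bracket{\tilde\xi}{\xi}|=F(\tilde\xi_{XF},\xi_{XF})$, so that $P(\ket{\tilde\xi},\ket{\xi})\leq \eps$. Pulling back through the co-isometry yields the subnormalized pure state $\ket{\tilde\psi}_{AEF}:=(U_{XA|A}^*\otimes\id_{EF})\ket{\tilde\xi}_{XAEF}$, and monotonicity of $P$ under the completely positive, trace-non-increasing map $\rho\mapsto U_{XA|A}^*\rho\, U_{XA|A}$ gives $\tilde\psi_{AE}\in \cB_\eps(\psi_{AE})$. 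It then suffices to show $F(\tilde\psi_{AE},\mu_A\otimes\sigma_E)^2\leq 2^{-\nu}$ for every $\sigma_E\in \cD(\cH_E)$. Using isometry invariance of fidelity together with $U_{XA|A}\mu_A U_{XA|A}^*=\Pi_{XA}/|A|$ (and the fact that $\Pi_{XA}/|A|\otimes\sigma_E$ is supported in the image of $U_{XA|A}\otimes\id_E$, so projection by $\Pi_{XA}$ does not affect the fidelity), this rewrites as the equivalent bound $F(\tilde\xi_{XAE},\Pi_{XA}/|A|\otimes\sigma_E)^2\leq 2^{-\nu}$.

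The main obstacle is converting the operator inequality on the $XF$-marginal into this fidelity bound on the $XAE$-marginal. My plan is to invoke Uhlmann a second time on the comparison state: a suitable purification of $\Pi_{XA}/|A|\otimes\sigma_E$ into $XAEF$ has the form $\ket{\Omega}=\tfrac{1}{\sqrt{|A|}}\sum_{a}\ket{a}_X\cket{a}_A\otimes\ket{\omega_a}_{EF}$ with the $\{\ket{\omega_a}\}$ chosen so that the $F$-marginal equals $\tau_F$ and the $E$-marginal equals $\sigma_E$. Bounding $|\bracket{\tilde\xi_{XAEF}}{\Omega}|^2$ by Cauchy--Schwarz then decouples the expression into an $XF$ factor controlled by $\tilde\xi_{XF}\leq 2^{-\nu}\,\id_X\otimes\tau_F$, which contributes the desired $2^{-\nu}$, and an $AE$ factor bounded by $1$. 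As a transparent alternative, smooth entropy duality on $\ket\psi_{AEF}$ gives $\Hmax^\eps(A|E)_\psi=-\Hmin^\eps(A|F)_\psi$, reducing the lemma to $\Hmin^\eps(X_A|F)_\psi-\Hmin^\eps(A|F)_\psi\leq \log|A|$; this admits an operational proof by composing the optimal entanglement-recovery channel witnessing $\Hmin^\eps(A|F)_\psi$ with an $X$-basis measurement on the recovered qudit to guess $X_A$ from $F$, yielding a success probability at least $1/|A|$ times the recovered squared entanglement fidelity and hence the $\log|A|$ gap.
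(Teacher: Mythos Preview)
Your two routes each have a genuine gap.

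\textbf{First route (Uhlmann + Cauchy--Schwarz).} To bound $\Hmax(A|E)_{\tilde\psi}$ you must upper-bound $F(\tilde\psi_{AE},\mu_A\otimes\sigma_E)$ for \emph{every} $\sigma_E$, and by Uhlmann this equals a \emph{maximum} over purifications $\ket{\Omega}$. You therefore cannot choose $\ket{\Omega}$ to have $F$-marginal $\tau_F$; indeed the $F$-marginal of any purification of $\Pi_{XA}/|A|\otimes\sigma_E$ has the spectrum of $\mu_A\otimes\sigma_E$, which is fixed by $\sigma_E$ and generically unrelated to $\tau_F$. If instead you use the generic bound $|\bracket{\tilde\xi}{\Omega}|^2\leq F(\tilde\xi_{XF},\Omega_{XF})^2\leq 2^{-\nu}F(\id_X\otimes\tau_F,\Omega_{XF})^2$, the last fidelity is only bounded by $|A|$ (since $\id_X\otimes\tau_F$ has trace $|A|$), so you end up with $\Hmax^\eps(A|E)_\psi\leq 2\log|A|-\nu$, one $\log|A|$ too weak.

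\textbf{Second route (operational, via duality).} Reducing to $\Hmin^\eps(X_A|F)_\psi\leq \Hmin^\eps(A|F)_\psi+\log|A|$ is fine, and the K\"onig--Renner--Schaffner argument (recover entanglement, then measure) does give the non-smooth inequality. But both smooth min-entropies are \emph{maxima} over their respective $\eps$-balls, and your operational map goes from a smoothing state $\tilde\sigma_{AF}$ to a smoothing state on $X F$ (by measurement), not the other way. This yields the reverse implication: it shows that the optimizer of $\Hmin^\eps(A|F)$, once measured, lies in the ball for $\Hmin^\eps(X_A|F)$, which does not upper-bound the latter. To get the stated direction you would need to lift the optimal $\tilde\rho_{XF}$ to some $\tilde\sigma_{AF}\in\cB_\eps(\psi_{AF})$ with controlled min-entropy, and there is no obvious way to do that.

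The paper avoids both issues by working entirely on the max-entropy side and using the semidefinite-program dual of $\Hmax$ from K\"onig--Renner--Schaffner: starting from the optimal smoothing state and SDP variables $(\nu,\tilde Y_{XAE})$ for $\Hmax^\eps(X|AE)_\xi=-\Hmin^\eps(X_A|F)_\psi$, it pulls back through $U_{XA|A}^*$ to obtain feasible variables $(Y_{AE},\psi')$ for $\Hmax^\eps(A|E)_\psi$. The single factor of $|A|$ appears cleanly from $\tr_{XA}[\Pi_{XA}(\cdot)]\leq \tr_{XA}[\cdot]$ combined with $\tilde Y_{AE}\leq \nu\,\id_{AE}$, giving $Y_E\leq \nu|A|\,\id_E$ with no slack.
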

\begin{proof}
Here we make use of the formulation of the smooth max-entropy as a semidefinite program and appeal to the dual problem as given in~\cite{konig_operational_2009}. 
This avoids the minimax formulation inherent in \eqref{eq:smoothmax}. 
For $\psi_{ABR}$ an arbitrary purification of $\psi_{AB}$, we have
\begin{align}
\label{eq:hmaxsdp}
2^{\Hmax(A|B)_\psi}=\min\{\nu:\nu\id_B\geq Y_B,Y_{AB}\otimes \id_R\geq \psi_{ABR},Y_{AB}\geq 0\}\,.
\end{align}
Now consider $\Hmax^\eps(X|AE)_\xi$ and let $\tilde \xi_{XAE}\in\cB_\eps(\xi_{XAE})$ be such that $\Hmax^\eps(X|AE)_\psi=\Hmax(X|E)_{\tilde \xi}$.  
Next, define  $\nu$ and $\tilde Y_{XAEF}$ to be the optimal variables in \eqref{eq:hmaxsdp} so that $\nu=2^{\Hmax^\eps(X|AE)_{\xi}}$, while $\tilde Y_{AE}\leq \nu\id_{AE}$ and $\tilde Y_{XAE}\otimes\id_F\geq \tilde\xi_{XAEF}$. 
Our goal is to construct a feasible set of variables for $\Hmax^\eps(A|E)_\psi$ from $\nu$ and $\tilde Y_{XAEF}$.  

For notational simplicity, let $U$ be the isometry $U_{XA|A}$. 
Defining $Y_{AE}=U^*\tilde Y_{XAE}U$ and $\psi'_{AEF}=U^* \tilde\xi_{XAEF}U$, we have $\psi'_{AEF}\geq Y_{AE}\otimes \id_F$. 
The partial isometry $U^*$ acts as a projection $\Pi_{XA}$ and then an isometry on its support, and therefore $\psi'_{AEF}$ is a possibly subnormalized pure state and $\psi'_{AEF}\in\mathcal B_{\eps}(\psi_{AEF})$, since the purification distance only decreases under projections~\cite[Theorem 3.4]{tomamichel_framework_2012}. 
Furthermore, $U Y_{AE} U^*=\Pi_{XA}\tilde Y_{XAE}\Pi_{XA}$, which implies
\begin{subequations}
\begin{align}
Y_E 
&=\tr_{XA}[UY_{AE}U^*]\\
&=\tr_{XA}[\Pi_{XA}\tilde Y_{XAE}\Pi_{XA}]\\
&\leq \tilde Y_E\\
&\leq \nu |A|\id_{E}\,.
\end{align}
\end{subequations}
Altogether, $Y_{AE}$, $\psi'_{AEF}$, and $\lambda=\nu|A|$ are feasible for $\Hmax^\eps(A|E)_\psi$, meaning
\begin{align}
    2^{\Hmax^\eps(A|E)_\psi}\leq |A| 2^{\Hmax^\eps(X|AE)_\xi}.
  \end{align}
    The claim then follows because $\Hmax^\eps(X|AE)_\xi=-\Hmin^\eps(X_A|F)_\psi$.
\end{proof}

\section{Entropy simplification lemma}
\label{sec:simpleH}

\begin{proof}[Proof of Lemma~\ref{lem:simpleH}]
We first prove \eqref{eq:remove1}. 
Letting $\bar\sigma=\sum_y P_Y(y)\sigma_y$, note that $\psi_{YB}=\mu_Y\otimes \bar\sigma_B$, and $\psi_{XYB}=T_{XY}(\mu_X\otimes \rho_Y)T_{XY}^*$, where $T_{XY}=\sum_{xy}\ketbra{x}_X\otimes \ket{x+y}\bra{y}_Y$. 
Then, for entropies as in \eqref{eq:entropydown}, we have
\begin{subequations}
\begin{align}
\Hk(X|YB)_\psi
&=-\Dk(\psi_{XYB},\id_X\otimes \mu_Y\otimes \bar\sigma_B)\\
&=-\Dk(\mu_X\otimes \rho_{YB},\mu_X\otimes \id_Y\otimes \bar\sigma_B)\\
&=\Hk(Y|B)_\rho\,.
\end{align}
\end{subequations}
The first equality is the definition of $\Hk$, while the second follows from unitary invariance under $T^*$. 
The third equality holds by monotonicity of $\Dk$ under creating and removing $\mu_X$. 

The argument is slightly more complicated for entropies as in $\eqref{eq:entropyup}$, where we are interested in $\min_{\tau_{YB}}\Dk(\psi_{XYB},\id_X\otimes \tau_{YB})$.
Using monotonicity, we can show that without loss of generality the optimal $\tau_{YB}$ has the form $\mu_Y\otimes \tau_B$, and so the above argument can be applied to reach the desired conclusion.
Observe that $\psi_{XYB}$ is invariant under both the operation $V_{XY}=\sum_{xy}\ket{x+1}\bra{x}_X\otimes \ket{y+1}\bra{y}_Y$ as well as $U_{XY}=\sum_{xy}\omega^{x+y}\ketbra x_X\otimes \ketbra y_Y$. 
Letting $G$ be the group generated by these 
Letting $\cE(\rho_{XY})=\frac1{d^2} \sum_{j,k=0}^{d-1}U^jV^k \rho (U^jV^k)^*$, we have $\cE(\psi_{XYB})=\psi_{XYB}$ and $\cE(\id_X\otimes \tau_{YB})=\id_X\otimes \mu_Y\otimes \tau_B$. 
By monotonicity, then, 
\begin{align}
\Dk(\psi_{XYB},\id_X\otimes \mu_Y\otimes \tau_{B})\leq \Dk(\psi_{XYB},\id_X\otimes \tau_{YB})\,,
\end{align}
and therefore the optimal $\tau_{YB}$ indeed has the desired form. 

Finally, for the smooth min- and max-entropies we can again appeal to monotonicity to ensure that the optimal state in the $\eps$-ball is also classical on $X$ and $Y$ (see also \cite[Proposition 5.8]{tomamichel_framework_2012}) and is uniform on $X$ by making use of the fact that $\psi_{XYB}$ is invariant under the map which applies $T^*$, traces out $X$, creates $\mu_X$ in its place, and finally reapplies $T$.\\ 

Now for \eqref{eq:remove2}. 
Define the purifications
\begin{align}
\ket{\psi}_{XX'YY'BR}
&=\tfrac{1}{\sqrt{|X|}}\sum_{xz}\sqrt{P_Y(z)}\ket{x}_X\ket{x}_{X'}\ket{x+z}_Y\ket{x+z}_{Y'}\ket{\sigma_z}_{BR}\quad\text{and}\\
\ket{\rho}_{YY'BR}
&=\sum_y\sqrt{P_Y(y)}\ket{y}_Y\ket{y}_{Y'}\ket{\sigma_y}_{BR}\,,
\end{align}
where $\ket{\sigma_z}_{BR}$ is a purification of $\sigma_z$ for each $z$. 
By duality, we want to show $\Hk(X|X'Y'R)_\psi=\Hk(Y|Y'R)_\rho$. 
First rewrite $\ket{\psi}$ as 
\begin{align}
\ket{\psi}_{XX'YY'BR}
&=\tfrac{1}{\sqrt{|X|}}\sum_{xz}\sqrt{P_Y(z)}\ket{y-z}_X\ket{y-z}_{X'}\ket{y}_Y\ket{y}_{Y'}\ket{\sigma_z}_{BR}\,,
\end{align}
and note that the $X'Y'R$ marginal is 
\begin{align}
\psi_{X'Y'R}
&=\frac{1}{|X|}\sum_{yz}P_Y(x)\ketbra{y-z}_{X'}\otimes \ketbra{y}_{Y'}\otimes (\sigma_z)_R\,.
\end{align}
Let $U_{XX'Y'}$ be the unitary which subtracts the value in the $Y'$ register from the $X$ and $X'$ registers, and then negates the latter two. 
For $\ket{\psi'}=U\ket{\psi}$, we have
\begin{subequations}
\begin{align}
\ket{\psi'}_{XX'YY'BR}
&=\tfrac{1}{\sqrt{|X|}}\sum_{xz}\sqrt{P_Y(z)}\ket{z}_X\ket{z}_{X'}\ket{y}_Y\ket{y}_{Y'}\ket{\sigma_z}_{BR}\\
&=\ket{\Phi}_{YY'}\ket{\rho}_{XX'BR}\,,
\end{align}
\end{subequations}
where $\ket{\rho}_{XX'BR}$ is just $\ket{\rho}_{YY'BR}$ with $Y$ and $Y'$ relabelled $X$ and $X'$, respectively. 
From this expression we immediate see that $\psi_{XX'Y'R}=\mu_Y\otimes \rho_{XX'R}$. 
Moreover, $U(\id_X\otimes \psi_{X'Y'R} )U^*=\id_X\otimes \mu_{Y'}\otimes \rho_{X'R}$. 

Therefore, for entropies $\Hk$ as in \eqref{eq:entropydown},
\begin{subequations}
\begin{align}
\Hk(X|X'Y'R)_\psi
&=-\Dk(\psi_{XX'Y'R},\id_X\otimes \psi_{X'Y'R})\\
&=-\Dk(\psi'_{XX'Y'R},U(\id_X\otimes \psi_{X'Y'R} )U^*)\\
&=-\Dk(\mu_{Y'}\otimes \rho_{XX'R},\mu_{Y'}\otimes \id_X\otimes \rho_{X'R})\\
&=\Hk(Y|Y'R)_\rho\,.
\end{align}
\end{subequations}

For entropies $\Hk$ involving a marginal optimization as in \eqref{eq:entropyup}, 
we have
\begin{align}
\Hk(X|X'Y'R)_\psi
&=\max_\tau[-\Dk(\psi'_{XX'Y'R},U(\id_X\otimes \tau_{X'Y'R} )U^*)]
\end{align}
as above. 
Due to the form of $U$, $U(\id_X\otimes \tau_{X'Y'R})U^*=\id_X\otimes U' \tau_{X'Y'R}U'^*$, where $U'$ has the same action as $U$, just not applied to $X$. 
Therefore
\begin{align}
\Hk(X|X'Y'R)_\psi = \max_\tau[-\Dk(\psi'_{XX'Y'R},\id_X\otimes \tau_{X'Y'R})]\,.
\end{align}
By monotonicity we can assume the optimal $\tau_{X'Y'R}$ has the form $\mu_{Y'}\otimes \tau_{X'R}$, at which point we can follow the above derivation to complete the proof.  
\end{proof}

\end{document}